\pdfoutput=1
\documentclass[11pt]{article}
\usepackage[margin=1.1in]{geometry}
\usepackage{amsmath,amssymb,amsthm}
\usepackage{booktabs}
\usepackage{graphicx}
\usepackage{enumitem}
\usepackage{cite}
\usepackage{refcount}
\usepackage[title,titletoc]{appendix}
\usepackage{placeins}
\usepackage{needspace}
\usepackage[colorlinks=true]{hyperref}
\usepackage{xcolor}
\definecolor{linkblue}{rgb}{0.10,0.20,0.50}
\hypersetup{linkcolor=linkblue,citecolor=linkblue,urlcolor=linkblue}

\newcommand{\TV}{\ensuremath{\mathrm{TV}}}
\newcommand{\Gmin}{\ensuremath{\Gamma_{\min}}}

\newcommand{\ceil}[1]{\lceil #1 \rceil}
\newcommand{\dd}{\,d}

\theoremstyle{plain}
\newtheorem{mainthm}{Theorem}

\newtheorem{theorem}{Theorem}[section]
\newtheorem{lemma}[theorem]{Lemma}
\newtheorem{proposition}[theorem]{Proposition}
\newtheorem{corollary}[theorem]{Corollary}
\theoremstyle{definition}
\newtheorem{definition}[theorem]{Definition}

\theoremstyle{remark}
\newtheorem{conjecture}[theorem]{Conjecture}
\newtheorem{remark}[theorem]{Remark}

\title{The exact LCU sampling overhead of collective diagonal unitaries:\\
resonances and a continued-fraction dichotomy}
\author{Zhihui Wang$^{1,3}$, Sujit Roy$^{1,2}$, Manil Maskey$^{1}$, Rahul Ramachandran$^{1}$\\[0.3em]
\small $^{1}$IMPACT AI, Office of Data Science and Informatics (ODSI)/NASA MSFC, Huntsville, AL, USA\\
\small $^{2}$The University of Alabama in Huntsville, Huntsville, AL, USA\\
\small $^{3}$Research Institute for Advanced Computer Science (RIACS), Universities Space Research\\
\small Association (USRA), USA\\[0.2em]
\small \texttt{zhihui.wang@nasa.gov}, \texttt{sujit.roy@nasa.gov}}
\date{}
\newcommand{\arxividvone}{2610.03595v1}

\begin{document}
\maketitle
\vspace{-1.2em}
\begin{abstract}
\noindent The minimal sampling overhead of the quadratic collective phase $e^{-i\gamma K^2}$ over one layer of single-qubit rotations is decided by the continued fraction of $\gamma/\pi$. Here $K$ is a collective observable with integer spectrum $\{0,\dots,n\}$, such as the permutation-symmetric Hamming weight. This gate is the cardinality-penalty layer of constrained quantum optimization, the one-axis-twisting gate of spin squeezing and the Kerr phase of a bosonic mode. Instead of compiling it into two-qubit gates, we sample it as an LCU over such layers, at overhead $\Gamma$.

Our results give an optimality theory for this overhead. Ancilla-free sampling reproduces the target's outcome probabilities up to the minimal $\Gamma$, and no smaller factor works for every input; for permutation-symmetric $K$, allowing independent per-qubit angles or one layer of arbitrary single-qubit gates gives no further reduction. Rational angles $\pi p/q$ in lowest terms cost exactly $q$ once $n\ge2q-2$, attained uniquely by a uniform $q$-point sampler; the cost is $\Theta(n)$ if and only if $\gamma/\pi$ is badly approximable, and bounded in $n$ if and only if it is rational. A second result shows what the symmetry buys: once it is removed, the squaring phase of Fourier arithmetic on $m$ qubits costs at least $2^{0.557m-O(1)}$ over single-qubit gates, beyond the $2^{m/2}$ of every cut bound, by a new certificate-lifting argument.

For constructions usually treated only through upper bounds, the theory gives the optimum: the Fourier-based LCU of Carrera Vazquez, Egger and Woerner (2026) is, at fixed $n$, optimal up to a constant at typical angles and beaten by $(n+1)/q$ at rational ones, where the resonant sampler is an $n$-independent recipe; the Kerr decomposition of Upreti, Quesada and Chabaud (2026) is the unique optimum over phase shifters at rational parameters without a photon-number cutoff; and it identifies the rational couplings at which one-axis twisting admits optimal cat decompositions.
\end{abstract}

\section{Introduction}\label{sec:intro}

A quantum circuit that cannot implement a target unitary $U$ directly can often still \emph{sample}
it: write $U = \sum_j c_j V_j$ as a linear combination of implementable unitaries (LCU), draw the branch
$V_j$ with probability $|c_j|/\|c\|_1$, and correct in classical post-processing. The price is a
sampling overhead $\Gamma = \|c\|_1^2$ that multiplies the number of circuit repetitions, by $\Gamma$ under
the ancilla-free sampling guarantee and by $\Gamma^2$ under unbiased quasi-probability estimation; this paper prices $\Gamma$ itself. A unitary is an extreme point of the
operator unit ball, so any nontrivial decomposition carries signed weights and $\Gamma > 1$ is forced. This
quasi-probability currency is standard across error mitigation and circuit knitting \cite{tbg2017, ps2024},
and optimality results exist for individual gates \cite{kmb2024, ps2024}. What is missing is an \emph{optimality theory}: for a given target and implementable family, what is the least
possible $\Gamma$, which decompositions attain it, and what feature of the target sets its value? For collective diagonal unitaries the state of the art is the Fourier-based LCU of Carrera Vazquez, Egger and Woerner \cite{anchor}: decompositions by discrete Fourier analysis, the worst-case guarantee $\Gamma \le n+1$ on $n+1$ collective levels, an ancilla-free sampling pipeline, and a $106$-qubit experiment that runs below the bound by an amount that was not known; their outlook lists understanding the method's limitations among the open directions. For the quadratic phase this paper closes that gap, and closing it returns more than matching lower bounds: it returns an arithmetic law.

We answer these questions for the \emph{quadratic collective phase} $U(\gamma) = e^{-i\gamma K^2}$, with $K$ a
collective observable with integer spectrum $\{0, \ldots, n\}$ --- in the qubit reading, the
permutation-symmetric Hamming weight. The gate has three lives. In constrained quantum optimization it is
the cardinality-penalty layer: QUBO penalties $\lambda(\mathbf{1}^\top x - b)^2$ exponentiate to quadratic
phases in the Hamming weight. In spin metrology it is one-axis twisting $e^{-i\chi J_z^2}$ \cite{ku1993}
($J_z = n/2 - K$; its linear and constant terms are free, Theorem~\ref{main:A}).
For a bosonic mode it is the Kerr phase $e^{-i\chi\hat n^2}$, on which recent classical simulators of
bosonic circuits branch \cite{uqc2026, gha2026}. The implementable family $S$ is the same in all three:
uniform rotations $V(\theta) = e^{i\theta K}$, that is, equal-angle single-qubit $RZ$ layers, global
$z$-rotations, or linear phase shifters. It is abelian and, for $n \ge 2$ and $\gamma \notin \pi\mathbb{Z}$,
does not contain $U(\gamma)$, so no circuit over $S$ of any depth equals the target: sampling is the only
route. With two-qubit gates the target is exact at $\Gamma = 1$: $K^2$ expands into a complete graph of
$\Theta(n^2)$ commuting $ZZ$ rotations, which also pays whatever routing the connectivity forces
\cite{anchor}. So the question is an \emph{exchange rate}\label{sec:intro_exchange}: how many shots does
one pay to give up $\Theta(n^2)$ entangling gates and their routing? The rate is exactly the optimal
constant in the ancilla-free sampling guarantee $\tilde p_x \ge p_x/\Gamma$ \cite{chakraborty, barron} for
inputs with weight on every level, and it is not the local-operations cost of circuit knitting, which is
larger. The answer is one sentence. \emph{The cost of $e^{-i\gamma K^2}$ over one layer of single-qubit
rotations is set by the continued fraction of $\gamma/\pi$: it is exactly $q$ at $\gamma = \pi p/q$ in lowest terms once $n \ge 2q-2$, and it grows linearly in $n$ exactly when $\gamma/\pi$ is badly approximable.}

\begin{figure}[tp]\centering
\includegraphics[width=0.95\textwidth]{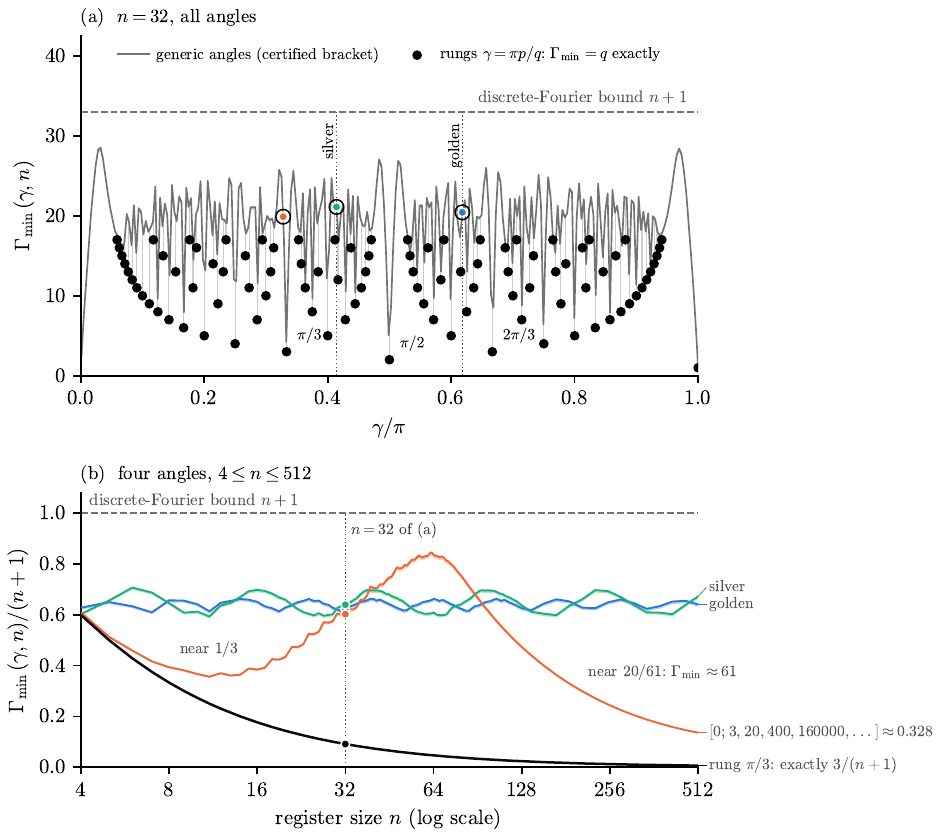}
\caption{The classification, drawn. (a)~The cost $\Gmin(\gamma, n)$ over all angles at $n = 32$; every
value is a certified two-sided bracket of relative width at most $1.6\%$, and the curve is the bracket's upper end, narrower than the line. Dots are the rungs
$\gamma = \pi p/q$, $q \le 17$, where the cost is exactly $q$ (Theorem~\ref{main:B}); between them it
sits on the plateau at about $0.6(n+1)$. (b)~$\Gmin/(n+1)$
against $n$ at four angles, certified brackets: on the rung $\pi/3$ the cost stays at $3$; at the golden
and silver angles $\gamma/\pi = (\sqrt5-1)/2$ and $\sqrt2 - 1$, badly approximable, the ratio holds a constant (Theorem~\ref{main:C}); at
$\gamma/\pi = [0; 3, 20, 400, 160000, \ldots] \approx 0.328$, whose partial quotients grow, the cost follows
each convergent in turn, so the ratio dips far below the plateau infinitely often. The coloured dots in
(a) mark the angles of (b) at $n = 32$.}
\label{fig:landscape}
\end{figure}

The cost is a norm, computed by duality (Theorem~\ref{main:A}):
$\sqrt{\Gmin}$ is the least total variation of a complex measure on the circle whose Fourier coefficients
reproduce the target's spectral data, every lower bound is one trigonometric polynomial of degree $n$ with
sup-norm at most one, and the optimum is a discrete measure with at most $2n+3$ atoms, an implementable
LCU. For the quadratic phase the data are the chirp $e^{-i\gamma k^2}$. At a rational angle $\pi p/q$ the chirp is $q$-periodic after a free modulation,
so $q$ atoms on a coset reproduce every moment at cost $q$, and a Fej\'er-windowed conjugate chirp
certifies that nothing cheaper exists once $n \ge 2q-2$, uniquely (Theorem~\ref{main:B}). The certificate
survives detuning, and the convergents of $\gamma/\pi$ then price every angle: $\Theta(n)$ if and only if
$\gamma/\pi$ is badly approximable, bounded in $n$ if and only if it is rational, and for a bosonic mode
with no photon-number cutoff a finite-cost decomposition over phase shifters exists if and only if it is
rational (Theorem~\ref{main:C}); near a low-denominator rational a master formula, proved two-sided there
and conjectured everywhere, reads the cost off the nearest one (Theorem~\ref{main:D}). Nothing symmetric
helps: independent per-qubit angles and one layer of arbitrary single-qubit gates do not lower the cost of a symmetric
target (Theorem~\ref{main:E}), so every lower bound for a symmetric target is a statement about one layer of single-qubit gates,
and symmetric diagonal Cliffords only renormalize the ladder $2$-adically (Theorem~\ref{main:F}).

Our second result shows what the symmetry buys. Without it, the cost can be exponential: the squaring phase of quantum
Fourier arithmetic on $m$ qubits costs at least $2^{0.557m - O(1)}$ over single-qubit gates, beyond the
$2^{m/2}$ ceiling of every cut bound, by a new certificate-lifting argument (Theorem~\ref{main:register});
this separates implementation resource sets, not complexity classes. Figure~\ref{fig:landscape} draws it:
the rungs at rational angles, the plateau between them, and the dips, as $n$ grows, at angles whose partial quotients are unbounded.

The law prices what is in use. The Fourier-based LCU of Carrera Vazquez, Egger and Woerner \cite{anchor}
guarantees $\Gamma \le n+1$; at each fixed $n$ it is optimal up to a constant factor at typical angles (the
median optimum is provably at least $0.12(n+1)$, and about $0.6(n+1)$ in certified computation at $n \le 128$),
and at rational angles, once $n \ge 2q-2$, the unique optimizer is an $n$-independent recipe, the fractional-revival
decomposition of Kerr optics, at overhead $q$ against $n+1$. The Kerr decomposition of Upreti, Quesada and
Chabaud \cite{uqc2026} is the unique optimum over phase shifters at rational parameters without a photon-number cutoff, so their bound $\sqrt q$ on its $\ell_1$
norm is tight; off the rationals approximation is unavoidable, and recentring the window shortens the
expansion of \cite{gha2026}. With the angle fixed by the physics, the law characterizes one-axis twisting (\S\ref{sec:oat}).

\paragraph{Names.} We call a rational angle $\gamma = \pi p/q$ in lowest terms a \emph{rung} of
denominator $q$, after the ladder the rungs form in Figure~\ref{fig:landscape}(a), and the $\Theta(n)$
regime between the low rungs the \emph{plateau}. A continued fraction is written
$[0; a_1, a_2, \ldots] = 1/(a_1 + 1/(a_2 + \cdots))$, with \emph{partial quotients} $a_j$ and
\emph{convergents} $p_k/q_k$, the fractions obtained by stopping at $a_k$; an irrational number is
\emph{badly approximable} (equivalently, of bounded type) if its partial quotients are bounded. The \emph{threshold} $n_0$ is the least $n$ at which $\Gmin = q$. From \S\ref{sec:resonance} on the data are the quadratic chirp $u_k = e^{-i\gamma k^2}$ and we write $\Gmin(\gamma, n)$.

\paragraph{What is proved, what is measured.} Every statement in
Theorems~\ref{main:A}--\ref{main:register} is proved. What is not proved is labelled where it appears:
\emph{computer-assisted}, over a stated range of $n$; \emph{certified}, a two-sided bracket from an explicit decomposition on a grid of angles above and a dual certificate rescaled by Bernstein's inequality below, at most $1.6\%$ wide for the landscape of Figure~\ref{fig:landscape}; \emph{measured}, as the register exponent
near $0.844$, extrapolated from certified values at $m \le 11$; and \emph{conjectured}, as the master formula
at every angle and the limit of that exponent below $1$.\footnote{Version~1 of this paper, arXiv:\arxividvone, is \cite{v1}, the full record; results stated here without proof are cited to it by statement number.}

\paragraph{Relation to prior work.} What is specific here is that for an \emph{abelian} resource group the optimal-LCU problem collapses to a scalar moment problem on the circle, whose dual certificate is a trigonometric polynomial and then closes on the arithmetic of the angle. The duality itself has two parents: the extent duality of convex resource theories \cite{regula2018, bravyi2019}, whose pure-state identity \cite[Thm.~10]{regula2018} makes the ancilla-free constant exact, and the atomic-norm duality of compressed sensing and super-resolution \cite{crpw2012, tbsr, cf14}, whose dual polynomial is our certificate; the same Hahn--Banach structure underlies the optimal local decompositions of nonlocal gates \cite{ps2024}. The
resonant sampler is the fractional-revival decomposition of Kerr optics
\cite{bb1968, ys1986, mtk1990, tac1993, wg1996}, used for Kerr gates in \cite{uqc2026}, with weights that on
one parity branch are the Zadoff--Chu sequences \cite{fz1962, chu1972}; the converse, optimality and
uniqueness above the threshold, is ours. Theta-sum flatness at bounded partial quotients \cite{hl1914, cf15} sharpens the linear constant, which the dichotomy does not need \cite[\S6.3]{v1}. For the register target the lift goes past the cut bound \cite{nielsen2003, hl2025, sps2025, necaise2026}, with the splitting of a quadratic phase function \cite{bvvdn2014}
and the gluing of the Golay--Rudin--Shapiro recursion \cite{shapiro1951, rudin1959}. The full correspondence between the literatures is \cite[\S11.2]{v1}.

\paragraph{Organization.} Section~\ref{sec:setup} proves that the cost is a norm, computed by duality
(Theorem~\ref{main:A}). Section~\ref{sec:resonance} prices rational angles exactly (Theorem~\ref{main:B}); Section~\ref{sec:generic} proves the continued-fraction dichotomy (Theorem~\ref{main:C}) from two detuning bounds, which also give the master formula on its proved range (Theorem~\ref{main:D});
Section~\ref{sec:enlarge} shows that nothing symmetric helps (Theorems~\ref{main:E} and~\ref{main:F}); Section~\ref{sec:separation} drops the
symmetry (Theorem~\ref{main:register}); Section~\ref{sec:pricing} prices the constructions in use and reads the law in the
metrology direction; Section~\ref{sec:discussion} states the scope and what is open. Appendices~\ref{app:operational}--\ref{app:proofs_separation} hold the proofs not given in the text.

\section{The cost is a norm, computed by duality (Theorem A)}\label{sec:setup}

Throughout, $K = \sum_{k=0}^{n} k\,P_k$ is a collective observable with integer spectrum
$\{0, \ldots, n\}$, $P_k$ its spectral projectors, $N := n+1$ the number of levels, and
$S = \{V(\theta) = e^{i\theta K} : \theta \in [0, 2\pi)\}$ the implementable family, the one-parameter dictionary: for the Hamming
weight, uniform single-qubit $RZ$ layers up to global phase. Targets are diagonal unitaries
$U = \sum_k u_kP_k$ with spectral data $u_k = e^{-i\gamma f(k)}$; from \S\ref{sec:resonance} on,
$f(k) = k^2$ and $u_k = e^{-i\gamma k^2}$. That target class is exactly the reachable set: every
$\int V(\theta)\dd\mu(\theta)$ equals $\sum_k \hat\mu(k) P_k$, which is constant on each eigenspace of
$K$, so a diagonal unitary that is \emph{not} a function of $K$ admits no LCU over $S$ at any finite
cost. For the Hamming weight those are the permutation-symmetric diagonal unitaries; for a weighted
register $K = \sum_j 2^j x_j$ every diagonal unitary on $m$ qubits is a function of $K$, and the same
theorems apply with $N = 2^m$ (\S\ref{sec:separation}). Permutation symmetry is needed only for the enlargement results of \S\ref{sec:enlarge}, not for the theory itself.

\paragraph{Resource model.} Throughout, $\Gmin$ is the overhead of one sampled layer drawn from $S$. For symmetric targets one layer of independent single-qubit angles, or of arbitrary single-qubit gates, costs the same (Theorem~\ref{main:E}); and for this one-layer model the optimal LCU norm, the optimal constant of the ancilla-free sampling guarantee and the optimal channel decomposition over one ancilla-controlled layer all coincide at $\Gmin$ (Propositions~\ref{prop:domination} and~\ref{prop:channel}). Deeper or coherently controlled constructions, two-qubit gates and circuit knitting are separate resource models; where they enter, the text says so.

\begin{definition}[cost norm]\label{def:cost}
\[
\sqrt{\Gmin(U)} \;=\; \inf \Big\{ \|\mu\|_{\TV} \;:\; \mu \text{ a complex measure on the
circle},\ \int V(\theta) \dd\mu(\theta) = U \Big\}.
\]
\end{definition}

Here $\|\mu\|_{\TV} := |\mu|(\mathbb{T})$, $\mathbb{T}$ the circle, is the total variation of the complex measure $\mu$, the total
mass of its modulus; for a discrete measure $\mu = \sum_j c_j\,\delta_{\theta_j}$ it is $\sum_j |c_j|$,
so squaring matches the sampling overhead $\Gamma = \|c\|_1^2$ of a discrete LCU $U = \sum_j c_j V(\theta_j)$. Thus $\sqrt{\Gmin}$ is the norm and $\Gmin$ the overhead: the duality below characterizes the former, and every value of $\Gamma$ in this paper is the square of a total variation, as in $\|\mu^*\|_{\TV} = \sqrt q$ and $\Gmin = q$ at a rung. Since $\{P_k\}$ are linearly independent and
$V(\theta) = \sum_k e^{i\theta k} P_k$, the operator constraint is equivalent to the $N$ moment conditions
\begin{equation}\label{eq:moments}
\hat\mu(k) \;:=\; \int e^{i\theta k} \dd\mu(\theta) \;=\; u_k, \qquad k = 0, \ldots, n :
\end{equation}
$\sqrt{\Gmin}$ is the atomic norm of the data $(u_k)$ with respect to the continuous dictionary of
complex exponentials \cite{crpw2012, tbsr}, and the certificate below is the dual polynomial of
super-resolution theory \cite{cf14}. Exact duality computes it.

\begin{mainthm}[the cost is a norm, computed by duality]\label{main:A}
For every diagonal $U$ that is a function of $K$,
\[
\sqrt{\Gmin(U)} \;=\; \max \Big\{ \Big|\sum_{k=0}^{n} a_k u_k\Big| \;:\;
\Big\|\sum\nolimits_k a_k e^{ik\theta}\Big\|_\infty \le 1 \Big\},
\]
with both optima attained, and the minimizing measure discrete with at most $2n+3$ atoms. Multiplying the data by $e^{i(\alpha k + \beta)}$ does not change the cost.
\end{mainthm}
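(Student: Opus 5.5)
The plan is to treat Definition~\ref{def:cost} as a norm-minimization problem in the Banach space $M(\mathbb{T})$ of complex Borel measures on the circle, and to obtain the displayed identity from Hahn--Banach duality, using $M(\mathbb{T}) = C(\mathbb{T})^*$. By \eqref{eq:moments} the feasible set is $\{\mu : A\mu = u\}$, where $A:M(\mathbb{T})\to\mathbb{C}^N$, $A\mu = (\hat\mu(k))_{k=0}^n$. This map is weak-* continuous, since each coordinate pairs $\mu$ with the continuous function $e^{ik\theta}$. Its range is all of $\mathbb{C}^N$: even the $N$ atoms at $\theta_j = 2\pi j/N$ give an invertible Vandermonde system. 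So the feasible set is nonempty, and by Banach--Alaoglu its intersection with a closed TV-ball is weak-* compact. Since the TV norm is weak-* lower semicontinuous, the infimum is attained. This is the primal attainment.

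\textbf{Weak duality.} If $\|\sum_k a_k e^{ik\theta}\|_\infty\le 1$ and $A\mu=u$, then
\[
\Big|\sum_k a_k u_k\Big| = \Big|\int \sum_k a_k e^{ik\theta}\dd\mu\Big| \le \|\mu\|_{\TV}.
\]

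\textbf{Strong duality.} Consider the compact convex set $C = A(\text{unit TV-ball})$, which is the closed absolutely convex hull of the curve $\{(e^{ik\theta})_k\}$. The primal value is the gauge of $u$ with respect to $C$. The gauge of a compact convex balanced absorbing set equals the support function of its polar. Here the polar is exactly $\{a : \sup_\theta|\sum_k a_k e^{ik\theta}|\le 1\}$, using the bilinear pairing $\sum_k a_k u_k$ and taking moduli to handle complex scalars. This gives equality of the two values. The dual feasible set is compact in $\mathbb{C}^N$ because the sup-norm controls every coefficient via $|a_k|\le\|\cdot\|_\infty$. The maximum is therefore attained as well. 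I expect this step to be routine once it is phrased as a finite-dimensional polar computation.

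\textbf{Discreteness with at most $2n+3$ atoms.} This is the step that needs the most care. The primal optimum value $t=\sqrt{\Gmin}$ places $u/t$ on the boundary of $C$. Regard $C$ as a convex body in $\mathbb{R}^{2N}$. It is the convex hull of the compact set $\{e^{i\phi}(e^{ik\theta})_k\}$, which is already balanced under complex phases, so no extra closure is needed. Carathéodory's theorem in dimension $2N=2n+2$ then writes $u/t$ as a convex combination of at most $2n+3$ extreme generators. Rescaling gives a discrete $\mu=\sum_j c_j\delta_{\theta_j}$ with $\sum_j|c_j|=t$ and at most $2n+3$ atoms. (A boundary-point refinement would give $2n+2$, but the statement only needs $2n+3$.) I would double-check that the hull is closed, which holds because the generator set is compact. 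Then no limiting measure is needed.

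\textbf{Invariance under modulation.} Multiplying the data by $e^{i(\alpha k+\beta)}$ corresponds to replacing $\mu$ by $e^{i\beta}$ times its translate by $\alpha$. Under this change the moments transform as required, since $\int e^{ik(\theta+\alpha)}e^{i\beta}\dd\mu = e^{i(\alpha k+\beta)}\hat\mu(k)$. Both translation and the unimodular factor preserve total variation, and the map is invertible. Hence the cost is unchanged. Equivalently, on the dual side one replaces $a_k$ by $a_k e^{-i\alpha k}$, which is a translate of the polynomial and so keeps the sup-norm.
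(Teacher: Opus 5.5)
Your proposal is correct and follows the paper's proof. It uses the same ingredients: Riesz duality $M(\mathbb{T}) = C(\mathbb{T})^*$ with a surjective weak-$*$ continuous moment map, weak-$*$ compactness for primal attainment and finite-dimensional compactness for the dual, Carath\'eodory for the $2n+3$ atoms, and translation and rephasing of the measure for the free modulation. Your gauge--polar (bipolar) form of strong duality and your Carath\'eodory on $\mathrm{conv}\{e^{i\phi}(e^{ik\theta})_k\}$ in $\mathbb{R}^{2N}$ are the finite-dimensional versions of the paper's Hahn--Banach step and its conic Carath\'eodory in $\mathbb{R}^{2N+1}$, and give the same count.
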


\begin{proof}
By the Riesz representation theorem, $M(\mathbb{T}) = C(\mathbb{T})^*$. The constraint map
$\mu \mapsto (\hat\mu(k))_{k \le n} \in \mathbb{C}^N$ consists of finitely many
weak-$*$-continuous functionals and is surjective (the discrete-Fourier construction of
\cite{anchor} provides a feasible point for any data). Minimal-norm interpolation duality
(Hahn--Banach, equivalently Fenchel duality against the sup-norm unit ball of polynomials
with spectrum in $\{0, \ldots, n\}$) gives equality. The inf is attained because \TV-balls
are weak-$*$ compact and the constraint set is weak-$*$ closed; the sup by
finite-dimensional compactness.
\end{proof}

\noindent The atom count is Carath\'eodory in $\mathbb{R}^{2N+1}$, applied to the cone over the pairs
(moment vector, \TV): a feasible optimal point is a conic combination of at most $2N+1 = 2n+3$ atomic
generators. Optimal decompositions are therefore explicit, directly implementable atomic LCUs, not
abstract measures; and for any given $(n, \gamma)$ the optimum is a small convex program, since by the
Fej\'er--Riesz theorem in its bounded-real form \cite[Thm.~4.24]{dumitrescu2007}, \cite[Cor.~4.1]{cf14}
the constraint $\|P\|_\infty \le 1$ on the dual is a positive-semidefinite condition on an
$(n+1)\times(n+1)$ Gram matrix (the primal is the Toeplitz program of \cite[Prop.~2.1]{tbsr}). From
here on, duality is the workhorse in both directions: a \emph{lower} bound on the cost is a single
certificate polynomial of degree $n$ and sup-norm one; an \emph{upper} bound is a single feasible
measure. Two consequences are used on every page.

\begin{lemma}[free group actions]\label{lem:free}
For $V \in S$ and unimodular $z$: $\Gmin(zVU) = \Gmin(U)$. Consequently, multiplying the
data by $e^{i(\alpha k + \beta)}$ is free, and for the cardinality penalty
$f(k) = (k-b)^2 = k^2 - 2bk + b^2$ the cost depends only on the chirp coefficient
$\gamma$; the constraint center $b$ is irrelevant.
\end{lemma}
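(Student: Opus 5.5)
The plan is to prove the invariance at the level of measures, using translation and scaling of the feasible set in Definition~\ref{def:cost}; the data statements then follow from the moment conditions \eqref{eq:moments}.

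Fix $V = V(\alpha) = e^{i\alpha K}\in S$ and a unimodular $z$. Let $\mu$ be any feasible measure for $U$, so $\int V(\theta)\dd\mu(\theta) = U$. Define $\nu := z\,(\tau_\alpha)_*\mu$, the pushforward of $\mu$ under the rotation $\theta\mapsto\theta+\alpha$, multiplied by $z$. Because $V(\alpha)V(\theta) = V(\theta+\alpha)$, we get
\[
\int V(\theta)\dd\nu(\theta) = z\int V(\theta+\alpha)\dd\mu(\theta) = zV(\alpha)U .
\]
So $\nu$ is feasible for $zVU$. Total variation is invariant under pushforward by a bijection of $\mathbb{T}$ and under multiplication by a unimodular scalar, so $\|\nu\|_{\TV}=\|\mu\|_{\TV}$.

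Taking infima gives $\Gmin(zVU)\le\Gmin(U)$. Applying the same argument to $\bar z\,V(-\alpha)$ and the target $zVU$ gives the reverse inequality. Both sides may be infinite, for instance when $U$ is not a function of $K$; the argument is unchanged in that case, since the map $\mu\mapsto\nu$ is a bijection between the two feasible sets.

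For the data reformulation, note that $zV(\alpha)U = \sum_k z e^{i\alpha k}u_k P_k$. Writing $z=e^{i\beta}$, the data become $e^{i(\alpha k+\beta)}u_k$. This is exactly the last sentence of Theorem~\ref{main:A}, and it can be checked directly: $\hat\nu(k)=e^{i\beta}e^{i\alpha k}\hat\mu(k)$. The argument needs $\alpha$ to range over all real values, which is fine because $\theta$ lives on the circle and $k$ is an integer, so $e^{i\alpha k}$ is well defined modulo $2\pi$.

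For the cardinality penalty, $e^{-i\gamma(k-b)^2} = e^{-i\gamma k^2}\,e^{i(2\gamma b)k}\,e^{-i\gamma b^2}$. This is the chirp multiplied by $e^{i(\alpha k+\beta)}$ with $\alpha=2\gamma b$ and $\beta=-\gamma b^2$. Hence the cost equals that of $e^{-i\gamma K^2}$, independent of $b$; this holds even for non-integer $b$.

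No step here is a real obstacle. The only point needing care is confirming that the rotated and rescaled measure is still a complex Borel measure with the same total variation, which is immediate from $|\nu|=(\tau_\alpha)_*|\mu|$.
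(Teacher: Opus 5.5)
Your proof is correct and is essentially the paper's argument: the paper also translates the measure by $\alpha$ and rephases it by $e^{i\beta}$, observing that the moments become $e^{i(\alpha k+\beta)}u_k$ while the total variation is unchanged. Your reverse inequality via $\bar z\,V(-\alpha)$, the remark on infinite costs, and the expansion of $(k-b)^2$ with $\alpha=2\gamma b$, $\beta=-\gamma b^2$ make explicit steps the paper leaves implicit.
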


\begin{proof}
Translate and rephase the measure: if $\hat\mu(k) = u_k$ then the measure
$e^{i\beta}\mu(\cdot - \alpha)$ has moments $e^{i(\alpha k + \beta)} u_k$ and the same
total variation. \qedhere
\end{proof}

\noindent This is the last clause of Theorem~\ref{main:A}.

\begin{lemma}[the discrete-Fourier bound \cite{anchor}]\label{lem:dft}
For all unimodular data, $\Gmin \le N = n+1$.
\end{lemma}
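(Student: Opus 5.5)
The plan is to exhibit one explicit feasible measure and compute its total variation, using only the primal side of Theorem~\ref{main:A}: any measure $\mu$ with $\hat\mu(k) = u_k$ for $k = 0,\ldots,n$ gives $\sqrt{\Gmin} \le \|\mu\|_{\TV}$. The natural candidate is the discrete-Fourier construction. Place $N = n+1$ atoms at the roots of unity $\theta_j = 2\pi j/N$, $j = 0,\ldots,N-1$, and choose the weights as the inverse DFT of the data:
\[
c_j = \frac1N \sum_{k=0}^{n} u_k\, e^{-2\pi i jk/N}.
\]

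The first step is to verify the moments. For $0 \le k \le n$,
\[
\sum_j c_j e^{i\theta_j k} = \sum_{k'} u_{k'} \,\frac1N \sum_j e^{2\pi i j(k-k')/N} = u_k .
\]
This holds because $k - k' \in \{-n,\ldots,n\}$ is divisible by $N$ only when $k = k'$. So $\mu = \sum_j c_j \delta_{\theta_j}$ is feasible, and $U = \sum_j c_j V(\theta_j)$.

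The second step bounds the cost, and it is the only place where unimodularity enters. The DFT normalized this way satisfies Parseval in the form
\[
\sum_j |c_j|^2 = \frac1N \sum_k |u_k|^2 = \frac{N}{N} = 1 .
\]
Cauchy--Schwarz over the $N$ atoms then gives $\|c\|_1 \le \sqrt N \,\|c\|_2 = \sqrt N$. Squaring yields $\Gmin \le \|c\|_1^2 \le N = n+1$.

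There is no real obstacle here. The only points needing care are the aliasing check, which uses exactly $N$ atoms against moments $0,\ldots,n$ so that differences never wrap around, and getting the normalization of the DFT right so that Parseval gives $\|c\|_2 = 1$ rather than $\sqrt N$. One could note in passing that equality in Cauchy--Schwarz requires $|c_j|$ to be constant, which foreshadows why the bound is loose at generic angles. Nothing more is needed for the lemma.
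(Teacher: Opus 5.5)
Your proposal is correct and follows the paper's own proof: the $N$-atom equispaced measure with inverse-DFT weights reproduces the moments $k = 0,\ldots,n$ exactly, Parseval with unimodular data gives $\|c\|_2 = 1$, and Cauchy--Schwarz gives $\|c\|_1 \le \sqrt N$, so $\Gmin \le N$. You also supply the aliasing check (no nonzero difference $k-k'$ with $|k-k'| \le n$ is divisible by $N$) and the normalization computation, which the paper leaves implicit.
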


\begin{proof}
The $N$-atom equispaced measure whose weights are the inverse DFT of the data reproduces every moment,
and Parseval gives its weight vector $\|c\|_2 = 1$, so $\|c\|_1 \le \sqrt N$ by Cauchy--Schwarz. This is
the construction of \cite{anchor} and its guarantee.
\end{proof}

\paragraph{What $\Gamma$ means.} The ancilla-free guarantee of \cite{chakraborty}, used by \cite{anchor},
samples a rotation from a probability measure and runs it in place of $U$; sampling the LCU then
reproduces every outcome probability of the target up to the factor $\Gamma$. The duality settles that
the constant is exact.

\begin{proposition}[the ancilla-free guarantee is tight; after \cite{regula2018}]\label{prop:domination}
Let $\psi$ be a state with $P_k\psi \ne 0$ for every level $k = 0, \ldots, n$, let $U = \sum_k u_kP_k$,
and for a probability measure $\nu$ on the circle write $\tilde p_x = \int |\langle x|WV(\theta)\psi\rangle|^2
\,\mathrm{d}\nu(\theta)$ and $p_x = |\langle x|WU\psi\rangle|^2$. The least $\Gamma$ for which some $\nu$ satisfies
$p_x \le \Gamma\,\tilde p_x$ for every downstream unitary $W$ and every outcome $x$ is $\Gmin(U)$.
\end{proposition}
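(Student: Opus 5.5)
The proof has two directions. The upper bound will come from a Cauchy--Schwarz argument applied to an optimal measure from Theorem~\ref{main:A}. The lower bound will come from feeding the dual certificate polynomial of Theorem~\ref{main:A} in as a test vector. The first step is to remove $W$ and $x$. The pair $(W,x)$ enters only through the vector $\langle x|W$, and a suitable $W$ maps any unit vector $\phi$ to $|x\rangle$. So the condition ``$p_x \le \Gamma\tilde p_x$ for every $W$ and $x$'' is equivalent to
\[
|\langle\phi|U\psi\rangle|^2 \;\le\; \Gamma\int|\langle\phi|V(\theta)\psi\rangle|^2\,\mathrm{d}\nu(\theta)\qquad\text{for every vector }\phi .
\]
Normalization of $\phi$ is irrelevant here, because both sides are homogeneous of degree two in $\phi$.

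\emph{Achievability ($\Gamma = \Gmin$ suffices).} Take a minimizing measure $\mu$ from Theorem~\ref{main:A}, with $\|\mu\|_{\TV} = \sqrt{\Gmin}$. Write its polar decomposition $\mathrm{d}\mu = e^{i\varphi(\theta)}\,\mathrm{d}|\mu|$, and set $\nu = |\mu|/\sqrt{\Gmin}$, which is a probability measure. Then $U\psi = \sqrt{\Gmin}\int e^{i\varphi}V(\theta)\psi\,\mathrm{d}\nu$. Cauchy--Schwarz in $L^2(\nu)$ gives
\[
|\langle\phi|U\psi\rangle|^2 \le \Gmin\int|\langle\phi|V(\theta)\psi\rangle|^2\,\mathrm{d}\nu .
\]
This holds for every $\psi$; the hypothesis $P_k\psi\neq0$ is not needed in this direction.

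\emph{Converse (no $\Gamma<\Gmin$ works).} Suppose some probability measure $\nu$ satisfies the reduced inequality with constant $\Gamma$. Let $a$ attain the dual maximum in Theorem~\ref{main:A}, so that $P(\theta) = \sum_k a_k e^{ik\theta}$ has $\|P\|_\infty\le1$ and $|\sum_k a_k u_k| = \sqrt{\Gmin}$. Now use the hypothesis $P_k\psi\ne0$ to define the test vector
\[
\phi = \sum_{k=0}^n \bar a_k\, \frac{P_k\psi}{\|P_k\psi\|^2}.
\]
The projectors are orthogonal, and $V(\theta)\psi = \sum_k e^{ik\theta}P_k\psi$. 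Hence $\langle\phi|V(\theta)\psi\rangle = \sum_k a_k e^{ik\theta} = P(\theta)$, and likewise $\langle\phi|U\psi\rangle = \sum_k a_k u_k$. Substituting into the inequality gives
\[
\Gmin \le \Gamma\int|P|^2\,\mathrm{d}\nu \le \Gamma .
\]
Together with achievability, the least admissible constant is exactly $\Gmin(U)$.

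The only delicate point is the construction of $\phi$. It has to turn the sampled amplitudes into exactly the certificate polynomial, and that is where the condition that every level carries weight is used. If some $P_k\psi = 0$, the coefficient $a_k$ could not be realized, and the relevant cost would be the one for the data restricted to the occupied levels. A second point is that the argument never needs the pure-state extent identity of \cite{regula2018} in operator form, $|v\rangle\langle v|\le\Gamma\rho$ if and only if $\langle v|\rho^{-1}|v\rangle\le\Gamma$. Testing against the single vector $\phi$ already suffices, because the dual optimum of Theorem~\ref{main:A} is attained.
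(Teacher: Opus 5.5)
Your proposal is correct and is essentially the paper's own argument. After citing \cite[Thm.~10]{regula2018}, the paper records exactly these two lines: Cauchy--Schwarz on an optimal measure for achievability, and for the converse the rank-one witness $\varphi=\sum_k \bar a_k P_k\psi/\|P_k\psi\|^2$ built from the optimal certificate of Theorem~\ref{main:A}. Your use of a general minimizing measure with polar decomposition, rather than a discrete atomic optimum, is an immaterial variation.
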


\noindent The proof (Appendix~\ref{app:operational}) rewrites the condition as an operator inequality
and applies the pure-state identity between generalized robustness and $\ell_1$-extent of
\cite[Thm.~10]{regula2018}; in our coordinates its rank-one witness is the certificate of
Theorem~\ref{main:A}. The statement is tight only against the \emph{worst} downstream circuit; this is the exact form of the observation in \cite[Sec.~VI\,B]{anchor} that the guarantee is loose in practice yet ``can indeed be tight'' on a small subset of outcomes. The same number is the channel-level cost over one ancilla-controlled layer, and without the ancilla a quasi-probability decomposition exists only at $\Gmin(U) = 1$ (Proposition~\ref{prop:channel}, Appendix~\ref{app:operational}). The identity is a one-layer statement, and that is where the model dependence begins: two ancilla-controlled layers already implement the $q = 2$ rung exactly, at cost one \cite[App.~A]{v1}. Shots therefore grow by $\Gamma$ under the ancilla-free guarantee and by $\Gamma^2$ under unbiased quasi-probability estimation \cite{tbg2017, ps2024}. What $\Gamma$ is not: the cost of circuit knitting, which decomposes the channel into local operations with measurement and classical communication, is larger (\S\ref{sec:separation}, after Conjecture~\ref{conj:exponent}).

\begin{remark}[independent angles]\label{rem:productform}
Nothing in the proof of Proposition~\ref{prop:domination} uses the common angle: over one layer of independent single-qubit diagonal rotations (\S\ref{sec:torus}), for an input with weight on every basis state, the same argument makes the product cost the exact constant of the guarantee for every diagonal $U$, symmetric or not; for the register target of \S\ref{sec:separation} this makes the separation a statement about samplers.
\end{remark}

\begin{proposition}[channel-level cost at one layer]\label{prop:channel}
For a diagonal target $U = \sum_k u_kP_k$, the least $\sum_\alpha|q_\alpha|$ over decompositions
$U\rho U^\dagger = \sum_\alpha q_\alpha\Phi_\alpha(\rho)$ into elementary estimators --- one collective
rotation whose angle is coherently selected by an ancilla, the ancilla measured, the system observable
reweighted by a real number of modulus at most one --- equals $\Gmin(U)$, attained by the standard LCU
circuit on a \TV-optimal atomic measure. Without the ancilla the decomposition exists only if
$\Gmin(U) = 1$.
\end{proposition}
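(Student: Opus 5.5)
The plan is to prove the two inequalities separately, both through Theorem~\ref{main:A}: an upper bound from the standard LCU circuit, and a lower bound from one linear functional built from the dual certificate. The no-ancilla clause then follows from a separate structural argument on the coherences of $U\rho U^\dagger$.

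\emph{Upper bound (attainment).} Take a \TV-optimal atomic measure $\mu^*=\sum_j c_j\delta_{\theta_j}$ from Theorem~\ref{main:A}, so that $\|c\|_1=\sqrt{\Gmin(U)}$. Write $c_j=|c_j|e^{i\phi_j}$.
\begin{itemize}
\item Prepare an ancilla in $\sum_j\sqrt{|c_j|/\|c\|_1}\,e^{i\phi_j}|j\rangle$.
\item Apply the controlled rotation $\sum_j|j\rangle\langle j|\otimes V(\theta_j)$.
\item Measure the ancilla in a basis containing $\sum_j\sqrt{|c_j|/\|c\|_1}\,|j\rangle$.
\end{itemize}
On that outcome the system undergoes the Kraus operator $\|c\|_1^{-1}\sum_jc_jV(\theta_j)=U/\|c\|_1$. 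Reweight that outcome by $1$ and every other outcome by $0$. The resulting elementary estimator $\Phi$ satisfies $\Phi(\rho)=U\rho U^\dagger/\Gmin$. Hence $U\rho U^\dagger=\Gmin\,\Phi(\rho)$ is a decomposition with $\sum|q_\alpha|=\Gmin$.

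\emph{Lower bound.} Let $P(\theta)=\sum_k a_ke^{ik\theta}$ be an optimal certificate from Theorem~\ref{main:A}, with $\|P\|_\infty\le1$ and $\sum_ka_ku_k=\sqrt{\Gmin}$ after a free rephasing. Pick unit vectors $e_k\in\operatorname{ran}P_k$, and set $\psi=\sum_ke_k$ and $\varphi=\sum_k\bar a_ke_k$. These are unnormalized, which is harmless because a decomposition of the map $\rho\mapsto U\rho U^\dagger$ is a linear identity and can be evaluated on any operator. Define the functional $F(X)=\langle\varphi|X|\psi\rangle\langle\psi|\varphi\rangle$ evaluated on $X\rho\mapsto$ with $\rho=|\psi\rangle\langle\psi|$; concretely $F$ sends the map $\Phi$ to $\langle\varphi|\Phi(|\psi\rangle\langle\psi|)|\varphi\rangle$. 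Then $\langle\varphi|V(\theta)\psi\rangle=P(\theta)$, and therefore $F(U\cdot U^\dagger)=|\sum_ka_ku_k|^2=\Gmin$.

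Now fix one elementary estimator. Its ancilla state has amplitudes $b_j$ and its measurement is an isometry, a projective measurement after Naimark dilation. Each outcome $o$ then carries a Kraus operator $K_o=\sum_jB_{oj}b_jV(\theta_j)$, where $B$ is an isometry. Consequently
\[
|F(\Phi)|\le\sum_o|r_o|\,\Big|\sum_jB_{oj}b_jP(\theta_j)\Big|^2\le\sum_j|b_j|^2|P(\theta_j)|^2\le1 .
\]
It follows that $\Gmin=|F(U\cdot U^\dagger)|\le\sum_\alpha|q_\alpha|\,|F(\Phi_\alpha)|\le\sum_\alpha|q_\alpha|$.

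The step requiring most care is exactly this estimate for a general estimator. One must check that arbitrary ancilla measurements, including POVMs, reduce to an isometry $B$. One must also check that the reweights satisfying $|r_o|\le1$ cannot increase the bound.

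\emph{No ancilla.} Without the ancilla each estimator has the form $\rho\mapsto r\,V(\theta)\rho V(\theta)^\dagger$. A decomposition is then $U\rho U^\dagger=\int V(\theta)\rho V(\theta)^\dagger\,d\sigma(\theta)$ for a real signed measure $\sigma$. Compare the blocks $P_k(\cdot)P_l$ on an input with support on every level. This gives $u_k\bar u_l=\hat\sigma(k-l)$. Taking $k-l=1$ shows that $u_{k+1}\bar u_k=\hat\sigma(1)$ does not depend on $k$. Hence $u_k=u_0z^k$ is a linear phase, and Lemma~\ref{lem:free} gives $\Gmin(U)=1$.
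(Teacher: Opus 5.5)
Your proposal is correct and follows the paper's proof. The upper bound is the PREP--SELECT--PREP$^\dagger$ circuit on an optimal atomic measure. The lower bound pairs the decomposition against $a_k\bar a_l$ from an optimal certificate, and your functional $F$ is exactly that pairing. The ancilla-free clause uses the fact that the multipliers depend only on $k-l$.

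The only difference is bookkeeping. The paper bounds each estimator as $|\mathrm{Tr}[E\,A(D)\tau A(D)^\dagger]|\le\|A(D)\|^2\le\|A\|_\infty^2\le1$ with $A(D)=\sum_k a_kD^k$. This covers mixed ancilla states and arbitrary $-I\le E\le I$ in one step. So the Naimark reduction you flag, and the pure ancilla state you tacitly assume, need no separate check.
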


One more consequence of duality is a lower bound that needs no certificate design: with
$P_n(\theta) = \sum_{k=0}^{n} u_k e^{ik\theta}$ the data polynomial, the conjugated data
$a_k = \overline{u_k}/\|P_n\|_\infty$ are admissible and pair to $N/\|P_n\|_\infty$, so
\begin{equation}\label{eq:matched}
\sqrt{\Gmin} \;\ge\; \frac{N}{\|P_n\|_\infty}
\end{equation}
(the \emph{matched filter}). Since the data are unimodular, $\|P_n\|_\infty \ge \|P_n\|_2 = \sqrt N$: no
matched filter proves more than $\Gmin \ge N$ up to a constant, which is where the arithmetic of
\S\S\ref{sec:resonance}--\ref{sec:generic} takes over.

\section{Rational angles cost exactly \texorpdfstring{$q$}{q} (Theorem B)}\label{sec:resonance}

Throughout this section $\gamma = \pi p/q$ with $\gcd(p, q) = 1$, $u_k = e^{-i\gamma k^2}$, and
\[
\sigma \;:=\; pq \bmod 2 \in \{0, 1\}, \qquad
\theta^*_j \;:=\; \frac{2\pi j + \pi\sigma}{q}, \quad j = 0, \ldots, q-1.
\]

\paragraph{The simplest rung, worked out.} At $\gamma = \pi/2$ ($p = 1$, $q = 2$, $\sigma = 0$) the data
$u_k = e^{-i\pi k^2/2} = (1, -i, 1, -i, \ldots)$ are $2$-periodic. Two atoms at $\theta^*_0 = 0$ and
$\theta^*_1 = \pi$ with weights $\tilde c_0 = \frac{1-i}{2}$ and $\tilde c_1 = \frac{1+i}{2}$ reproduce
every moment, $\tilde c_0 + \tilde c_1e^{i\pi k} = u_k$ for all $k$; both weights have modulus
$1/\sqrt2$, so $\|c\|_1 = \sqrt2$ and $\Gamma = 2$. On a coherent state this decomposition is the
Yurke--Stoler cat \cite{ys1986}, $e^{-i(\pi/2)\hat n^2}|\alpha\rangle = \tfrac{1-i}{2}|\alpha\rangle +
\tfrac{1+i}{2}|{-\alpha}\rangle$, phases and all. The degree-two polynomial
$P(\theta) = \frac{1}{2\sqrt2}(1 + 2ie^{i\theta} + e^{2i\theta})$ has
$|P(\theta)| = \sqrt{(1 + \cos^2\theta)/2} \le 1$ and pairs with the data to $\sqrt2$, so no measure does
better: $\Gmin = 2$ for every $n \ge 2$. Everything in this section is this computation with $q$ in
place of $2$.

\begin{lemma}[periodization]\label{lem:period}
$u_{k+q} = (-1)^{pq} u_k$ for all $k \in \mathbb{Z}$. Consequently
$v_k := e^{-i\pi\sigma k/q} u_k$ is $q$-periodic, and the modulation is free by
Lemma~\ref{lem:free}.
\end{lemma}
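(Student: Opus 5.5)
The plan is a direct computation of the ratio $u_{k+q}/u_k$, followed by a one-line consequence for the modulated sequence $v_k$.

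\emph{Step 1: the ratio.} With $\gamma = \pi p/q$ we have $(k+q)^2 - k^2 = 2kq + q^2$, so
\[
\frac{u_{k+q}}{u_k} \;=\; e^{-i\pi p(2kq+q^2)/q} \;=\; e^{-2\pi i pk}\,e^{-i\pi pq} \;=\; (-1)^{pq}.
\]
The factor $e^{-2\pi i pk}$ equals $1$ because $pk \in \mathbb{Z}$. The factor $e^{-i\pi pq}$ equals $(-1)^{pq}$ because $pq$ is an integer. This holds for every $k \in \mathbb{Z}$, and it uses neither $\gcd(p,q)=1$ nor any sign convention on $k$.

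\emph{Step 2: periodicity of $v_k$.} Since $(-1)^{pq} = (-1)^{\sigma} = e^{i\pi\sigma}$, we get
\[
v_{k+q} \;=\; e^{-i\pi\sigma (k+q)/q}\,u_{k+q} \;=\; e^{-i\pi\sigma k/q}\,e^{-i\pi\sigma}\,e^{i\pi\sigma}\,u_k \;=\; v_k .
\]

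\emph{Step 3: the modulation is free.} The map $u_k \mapsto v_k$ multiplies the data by $e^{i(\alpha k+\beta)}$ with $\alpha = -\pi\sigma/q$ and $\beta = 0$. By Lemma~\ref{lem:free}, this translates the representing measure by $\alpha$ and leaves its total variation unchanged. Hence $\Gmin$ is the same for $(u_k)$ and $(v_k)$.

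No step is a real obstacle. The only point needing care is the parity bookkeeping: $e^{-i\pi pq}$ must be identified with $e^{i\pi\sigma}$. This is valid because $e^{i\pi m}$ depends only on $m \bmod 2$.
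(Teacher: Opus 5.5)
Your proposal is correct and follows the paper's proof: the same computation $(k+q)^2-k^2=2kq+q^2$ giving $u_{k+q}/u_k=e^{-2\pi i pk}e^{-i\pi pq}=(-1)^{pq}$, then the parity cancellation $e^{-i\pi\sigma}(-1)^{pq}=1$ from $\sigma\equiv pq \pmod 2$. Your explicit identification $\alpha=-\pi\sigma/q$, $\beta=0$ in Lemma~\ref{lem:free} correctly fills in the step the paper leaves to the citation.
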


\begin{proof}
$(k+q)^2 - k^2 = 2kq + q^2$, so $u_{k+q}/u_k = e^{-2\pi i pk} e^{-i\pi pq} = (-1)^{pq}$.
Then $v_{k+q}/v_k = e^{-i\pi\sigma}(-1)^{pq} = (-1)^{\sigma}(-1)^{\sigma} = 1$ by
$\sigma \equiv pq \pmod 2$.
\end{proof}

\begin{proposition}[perfect flatness of the periodized chirp]\label{prop:flat}
Write $v_k = \sum_{j=0}^{q-1} \tilde c_j\, e^{2\pi i jk/q}$ (finite Fourier series of the
$q$-periodic sequence $v$), i.e.
\[
\tilde c_j \;=\; \frac{1}{q} \sum_{k=0}^{q-1} e^{-i\pi(pk^2 + \sigma k)/q}\,
e^{-2\pi i jk/q}.
\]
Then $|\tilde c_j| = 1/\sqrt{q}$ for every $j$.
\end{proposition}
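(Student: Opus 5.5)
The plan is to compute $|\tilde c_j|^2$ directly as a double sum and show it collapses to $1/q$. This is the standard Gauss-sum flatness argument. The only delicate point is the parity case, which the choice of $\sigma$ is designed to handle.

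Write $\phi(k) := pk^2 + \sigma k + 2jk$, so that $q\,\tilde c_j = \sum_{k=0}^{q-1} e^{-i\pi\phi(k)/q}$. First I check that $k \mapsto e^{-i\pi\phi(k)/q}$ is $q$-periodic on $\mathbb{Z}$. This is exactly Lemma~\ref{lem:period} combined with the $q$-periodicity of $e^{-2\pi i jk/q}$. Periodicity lets me sum over any complete residue system and substitute freely.

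Next I expand
\[
q^2|\tilde c_j|^2 = \sum_{k,l} e^{-i\pi(\phi(k)-\phi(l))/q}
\]
and reindex with $k = l + h$, where $h$ runs over $\mathbb{Z}/q$; this is legitimate by the periodicity. Then
\[
\phi(l+h) - \phi(l) = 2plh + ph^2 + \sigma h + 2jh,
\]
so the sum factors as
\[
\sum_h e^{-i\pi(ph^2 + \sigma h + 2jh)/q}\sum_{l=0}^{q-1} e^{-2\pi i p h l/q}.
\]
Since $\gcd(p,q) = 1$, the inner geometric sum equals $q$ when $h \equiv 0 \pmod q$ and vanishes otherwise. Only $h = 0$ survives, giving $q^2|\tilde c_j|^2 = q$, hence $|\tilde c_j| = 1/\sqrt q$.

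The one step I expect to need care is the reindexing. The inner sum over $l$ must run over a full period for fixed $h$. This requires each summand, not merely the double sum, to be invariant under $l \mapsto l + q$ (and $h \mapsto h + q$). That invariance holds because the summand is a product of a $q$-periodic function of $k$ and the conjugate of a $q$-periodic function of $l$; it is precisely where $\sigma \equiv pq \pmod 2$ enters. For odd $pq$ without the $\sigma$ shift, the phase $e^{-i\pi p h^2/q}$ would not be $q$-periodic in $h$ and the argument would break. So I will state the periodicity explicitly before reindexing, and after that the computation is routine.
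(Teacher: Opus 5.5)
Your proposal is correct and follows essentially the same route as the paper's proof. Both expand $|q\tilde c_j|^2$ as a double sum, use $q$-periodicity (Lemma~\ref{lem:period}) to reindex $k=l+h$ over complete residue systems, and kill every $h\not\equiv 0$ with the geometric sum $\sum_l e^{-2\pi i phl/q}$ using $\gcd(p,q)=1$. The only difference is that you fold the character $e^{-2\pi i jk/q}$ into the phase $\phi$, while the paper keeps it outside as $e^{-2\pi i jh/q}$; this is cosmetic.
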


\begin{proof}
Let $S_j = q\tilde c_j$. Since $v$ and the character are $q$-periodic,
\[
|S_j|^2 \;=\; \sum_{h=0}^{q-1} e^{-2\pi i jh/q} \sum_{l=0}^{q-1} v_{l+h}\overline{v_l},
\]
substituting $k = l + h$ over complete residue systems. From
$v_{l+h}\overline{v_l} = e^{-i\pi(ph^2 + \sigma h)/q}\, e^{-2\pi i phl/q}$, the inner sum
is $e^{-i\pi(ph^2+\sigma h)/q} \cdot q\,\mathbf{1}[q \mid ph] = q\,\mathbf{1}[h \equiv 0]$,
because $\gcd(p, q) = 1$. Only $h = 0$ survives: $|S_j|^2 = q$ for every $j$.
\end{proof}

\noindent The periodized chirp is a \emph{perfect polyphase sequence}: its autocorrelation is exactly
impulsive, which is why flatness is exact rather than up to constants, for all $q$. A nontrivial
additive character summing to zero is the only arithmetic input in the paper.

\begin{mainthm}[resonances]\label{main:B}
Let $\gamma = \pi p/q$ with $\gcd(p, q) = 1$.
\begin{enumerate}[label=(\roman*), nosep]
\item \emph{(Resonance.)} The $q$-atom measure $\mu^* = \sum_{j=0}^{q-1} \tilde c_j\, \delta_{\theta^*_j}$
satisfies $\hat\mu^*(k) = u_k$ for \emph{every} $k \in \mathbb{Z}$, and $\|\mu^*\|_{\TV} = \sqrt{q}$.
Hence $\Gmin(\pi p/q,\, n) \le \min(q,\; n+1)$ for every $n \ge 0$. Since the moduli $|\tilde c_j|$ are
constant, the decomposition samples $\theta$ uniformly from the $q$ equally spaced atoms $\theta^*_j$,
with no rejection and no reweighting; all data-dependence sits in the phases.
\item \emph{(Exactness and uniqueness.)} For every $n \ge 2q-2$ and every $p$, $\Gmin(\pi p/q, n) = q$: the degree-$(2q-2)$ polynomial
\[
P(\theta) \;=\; \frac{1}{\sqrt q}\sum_{k=0}^{2q-2}
\Big(1 - \frac{|k - (q-1)|}{q}\Big)\, \overline{u_k}\, e^{ik\theta}
\]
--- the Fej\'er-windowed conjugate chirp --- satisfies $\|P\|_\infty \le 1$ and
$P(\theta^*_j) = \overline{\tilde c_j}/|\tilde c_j|$ for all $j$, so it is an exact dual certificate for
$\mu^*$; and $\mu^*$ is the \emph{unique} \TV-minimizer.
\end{enumerate}
\end{mainthm}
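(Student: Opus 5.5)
\emph{Part (i).} I would verify the moment identity by direct computation. Since $\theta^*_j = (2\pi j + \pi\sigma)/q$,
\[
\hat\mu^*(k) = \sum_j \tilde c_j e^{ik\theta^*_j} = e^{i\pi\sigma k/q}\sum_j \tilde c_j e^{2\pi i jk/q} = e^{i\pi\sigma k/q} v_k = u_k
\]
for every $k\in\mathbb{Z}$. The second equality is the Fourier expansion of Proposition~\ref{prop:flat}, and it holds for all integers $k$ because $v$ is $q$-periodic (Lemma~\ref{lem:period}). The same proposition gives $\|\mu^*\|_{\TV} = q\cdot q^{-1/2} = \sqrt q$. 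Combining this with Lemma~\ref{lem:dft} gives $\Gmin \le \min(q, n+1)$. The uniform-sampling remark follows because the moduli $|\tilde c_j|$ are all equal.

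\emph{Part (ii): the certificate.} I would write the Fej\'er window as a self-convolution: $1 - |k-(q-1)|/q = q^{-1}\#\{(a,b)\in\{0,\dots,q-1\}^2 : a+b = k\}$. This gives
\[
P(\theta) = q^{-3/2}\sum_{a,b=0}^{q-1}\overline{u_{a+b}}\,e^{i(a+b)\theta}.
\]
Using $u_{a+b} = u_a u_b e^{-2i\gamma ab}$ with $e^{2i\gamma ab} = e^{2\pi i pab/q}$, this becomes the bilinear form $P(\theta) = q^{-3/2} x^{\top} G x$, where $x_a = \overline{u_a}e^{ia\theta}$ and $G_{ab} = e^{2\pi i pab/q}$. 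Because $\gcd(p,q)=1$, $G/\sqrt q$ is unitary: it is a DFT matrix with its rows permuted. Cauchy--Schwarz then gives $|P(\theta)| \le q^{-3/2}\|x\|_2\,\sqrt q\,\|x\|_2 = 1$.

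For the values at the atoms, I would note that $k \mapsto \overline{u_k}e^{ik\theta^*_j} = \overline{v_k}e^{2\pi i jk/q}$ is $q$-periodic. As $(a,b)$ ranges over $\{0,\dots,q-1\}^2$, the sum $a+b$ covers each residue class mod $q$ exactly $q$ times. Hence $P(\theta^*_j) = q^{-3/2}\cdot q\cdot q\,\overline{\tilde c_j} = \sqrt q\,\overline{\tilde c_j} = \overline{\tilde c_j}/|\tilde c_j|$.

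Once $n \ge 2q-2$, $P$ is admissible in Theorem~\ref{main:A}. Its pairing with the data is $\int P\,d\mu^* = \sum_j \tilde c_j P(\theta^*_j) = \sum_j|\tilde c_j| = \sqrt q$. So $\sqrt{\Gmin}\ge\sqrt q$, and with (i) this gives $\Gmin = q$.

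\emph{Uniqueness, the main obstacle.} Let $\nu$ be any minimizer. Then $\int P\,d\nu = \|\nu\|_{\TV}$ with $\|P\|_\infty\le1$. This forces $\nu$ to be supported on the peak set $\{|P| = 1\}$, with $d\nu = \overline{P}\,d|\nu|$ there. The hard step is to show that this peak set is exactly $\{\theta^*_j\}$.

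I would work from the equality case of Cauchy--Schwarz. Equality $|x^\top (G/\sqrt q)\, x| = \|x\|_2^2$ requires $(G/\sqrt q)\,x = \lambda\,\overline{x}$ for some unimodular $\lambda$. Componentwise this says
\[
q^{-1/2}\sum_b \overline{u_b}\,e^{ib(\theta + 2\gamma a)} = \lambda\, u_a e^{-ia\theta} \qquad \text{for all } a.
\]
I would first try to turn this into a statement about the $q$-periodic extension of $x$. Writing $a\mapsto a+1$ and comparing the resulting equations should force $e^{iq\theta} = e^{i\pi\sigma}$, which is precisely the condition $\theta\in\{\theta^*_j\}$. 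If this does not close cleanly, my fallback is to parametrise $\theta = \theta^*_0 + t$ and show that $q - |P|^2$ is a sum of squares (available from the Gram/Fej\'er--Riesz form) vanishing only when $e^{iqt}=1$.

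With the support confined to the $q$ distinct points $\theta^*_j$, the moments $k = 0,\dots,q-1$ (available since $n \ge 2q-2 \ge q-1$) determine the weights through an invertible Vandermonde system. Hence $\nu = \mu^*$.
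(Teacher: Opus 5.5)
Your part (i) is the paper's argument. For the certificate in (ii) you take a different route, and it is correct. The paper substitutes $\overline{u_k}=\sum_j\overline{\tilde c_j}e^{-ik\theta^*_j}$ to write $P(\theta)=q^{-1/2}\sum_j\overline{\tilde c_j}\,e^{i(q-1)(\theta-\theta^*_j)}F_q(\theta-\theta^*_j)$, with $F_q\ge0$ the Fej\'er kernel. It then reads off $\|P\|_\infty\le1$ from flatness and the coset identity $\sum_jF_q(\theta-\theta^*_j)=q$, and the interpolation from $F_q$ vanishing at the nonzero coset points. You instead split the Fej\'er window as a self-convolution and the chirp as $u_{a+b}=u_au_be^{-2i\gamma ab}$, so that $P=q^{-3/2}x^\top Gx$ with $G/\sqrt q$ unitary. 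Cauchy--Schwarz then bounds $|P|$ in one line; the bound needs no flatness input, since $\gcd(p,q)=1$ enters only through the unitarity of $G$. The atom values follow by counting residues. Your factorization is specific to the quadratic phase. The kernel form uses only periodicity and flatness, so it runs verbatim for any bi-unimodular sequence. More to the point here, it makes the strict inequality off the atoms nearly automatic.

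That strict inequality is where your proposal has a gap. You correctly reduce $|P(\theta)|=1$ to $D(\theta+2\gamma a)=\sqrt q\,\lambda\,u_ae^{-ia\theta}$ for $a=0,\dots,q-1$, where $D(\phi):=\sum_{b=0}^{q-1}\overline{u_b}e^{ib\phi}$. You then only assert that comparing consecutive $a$ ``should force'' $e^{iq\theta}=e^{i\pi\sigma}$, and the sum-of-squares fallback is only named. Without a relation between $D(\phi+2\gamma)$ and $D(\phi)$, nothing is proved. The relation exists: from $\overline{u_b}e^{2i\gamma b}=u_1\overline{u_{b+1}}$ and $\overline{u_q}=(-1)^{pq}=(-1)^\sigma$,
\[
D(\phi+2\gamma)=u_1e^{-i\phi}\bigl(D(\phi)+(-1)^\sigma e^{iq\phi}-1\bigr).
\]
Take $\phi=\theta$ and insert your conditions at $a=0$ and $a=1$. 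This gives $(-1)^\sigma e^{iq\theta}=1$, i.e.\ $\theta\in\{\theta^*_j\}$. It uses two equations, so it needs $q\ge2$, as the paper's proof also does (for $q=1$, $P\equiv1$). The paper's route is shorter. Off the coset every $F_q(\theta-\theta^*_j)>0$, so $|P(\theta)|=1$ would force the phases of $\overline{\tilde c_j}\,e^{-i(q-1)\theta^*_j}$ to coincide. That makes $\tilde c_j$ proportional to $e^{2\pi ij/q}$, hence $u_0=\sum_j\tilde c_j=0$, a contradiction. Once the peak set is identified, your closing support-and-Vandermonde step is fine.
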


\begin{proof}[Proof of (i)]
By Lemma~\ref{lem:period} and Proposition~\ref{prop:flat}, for all integers $k$,
\[
u_k \;=\; e^{i\pi\sigma k/q} v_k \;=\; \sum_j \tilde c_j e^{i(2\pi j + \pi\sigma)k/q}
\;=\; \sum_j \tilde c_j e^{i\theta^*_j k},
\]
in particular
simultaneously for $k = 0, \ldots, n$, for every $n$, so $n$ plays no role in
feasibility. The total variation is $\sum_j |\tilde c_j| = q \cdot q^{-1/2} = \sqrt{q}$.
The alternative bound $\Gamma \le n+1$ is Lemma~\ref{lem:dft}.
\end{proof}

\begin{proof}[Proof of (ii), exactness]
Write $\mu^* = \sum_j \tilde c_j \delta_{\theta^*_j}$ for the resonant measure of
Theorem~\ref{main:B}(i), so $|\tilde c_j| = 1/\sqrt q$ and
$\overline{u_k} = \sum_j \overline{\tilde c_j}\, e^{-ik\theta^*_j}$. Substituting into the
display of the theorem and collecting the Fej\'er weights,
\[
P(\theta) \;=\; \frac{1}{\sqrt q}\sum_j \overline{\tilde c_j}\,
e^{i(q-1)(\theta - \theta^*_j)}\, F_q(\theta - \theta^*_j),
\qquad
F_q(\varphi) = \sum_{|m| < q}\Bigl(1 - \tfrac{|m|}{q}\Bigr) e^{im\varphi}
= \tfrac{1}{q}\Bigl|\sum_{m=0}^{q-1} e^{im\varphi}\Bigr|^2 \ge 0 .
\]
The atoms form a $q$-point coset, so $\sum_j F_q(\theta - \theta^*_j) = q$ identically,
whence $\|P\|_\infty \le q^{-1/2}\cdot q^{-1/2}\sum_j F_q(\theta - \theta^*_j) = 1$. At
$\theta = \theta^*_i$ every $F_q(\theta^*_i - \theta^*_j)$ with $j \ne i$ vanishes
($\theta^*_i - \theta^*_j \in (2\pi/q)\mathbb{Z}\setminus 2\pi\mathbb{Z}$) and
$F_q(0) = q$, so $P(\theta^*_i) = \sqrt q\, \overline{\tilde c_i}
= \overline{\tilde c_i}/|\tilde c_i|$, the interpolation condition. Finally the pairing of
Theorem~\ref{main:A} is
$\sum_k a_k u_k = q^{-1/2}\sum_{k=0}^{2q-2}\bigl(1 - |k-(q-1)|/q\bigr)|u_k|^2
= q^{-1/2}\sum_{|m|<q}(1 - |m|/q) = q^{-1/2}\cdot q = \sqrt q$. Together with
Theorem~\ref{main:B}(i), $\sqrt{\Gmin} = \sqrt q$ for $n \ge 2q-2$.
\end{proof}

\noindent Uniqueness follows because $|P| < 1$ off the atoms, so any minimizer lives on the $q$ atoms,
where its first $q$ moments determine it (Appendix~\ref{app:proofs_resonance}).

Structurally the optimum is again a discrete Fourier transform, of the periodized data: a transform on
the $2q$-point grid supported on $q$ of its points (the odd ones when $\sigma = 1$), rather than one of
length $N$. That is the whole content of a resonance: once $u_{k+q} = \pm u_k$ the $N$ moment constraints
are redundant, and the arithmetic says exactly how redundant. The construction of \cite{anchor}, built
for arbitrary data, inverts a square $N \times N$ system and need not use that periodicity; the optimal
sampler is the same transform, taken short. This Fourier structure of the \emph{answer} is a conclusion,
not a hypothesis: Definition~\ref{def:cost} minimizes over all complex measures on the circle, with no
grid, and the minimizer collapses onto $q$ equally spaced points only at a resonance.

\paragraph{Below the threshold.} Theorem~\ref{main:B}(ii) leaves a finite-$n$ question, the threshold $n_0$. Since adding moment constraints cannot lower the cost, $\Gmin < q$ for $n < n_0$ and the $q$-point sampler is strictly beatable there. The certificate puts $n_0 \le 2q-2$; a Parseval argument on the degree-$(q-1)$ interpolant puts $n_0 \ge q$ \cite[Rem.~4.9]{v1}; and certified two-sided computation puts the $p = 1$ chirp at $n_0 \in \{\ceil{3q/2}-1, \ceil{3q/2}\}$ for every $q \le 20$ and every prime $q \le 41$ \cite[Rem.~4.10 and Table~1]{v1}. Both ends of the bracket $q \le n_0 \le 2q-2$ are attained within the class of sequences for which the argument above works verbatim, the bi-unimodular (CAZAC) sequences, so the threshold is a property of the individual sequence, not of the class \cite[Thm.~B(iii) and Rems.~4.8--4.10, in that paper's numbering]{v1}; what decides it is open.

\section{\texorpdfstring{Irrational angles: the continued-fraction dichotomy\\ (\mbox{Theorems} C and~D)}{Irrational angles: the continued-fraction dichotomy (Theorems C and D)}}\label{sec:generic}

The resonances are exact at the rungs. Off them, the rung's certificate survives a small detuning and the rung's measure can be convolved with a cheap correction; together, these put the cost near a rung of denominator $q$ at $\Theta(q)$ as long as the detuning is below $1/n^2$. At every register size there is a top convergent of $\gamma/\pi$ with denominator at most $(n+2)/2$, and whether its detuning is small is decided by the next partial quotient. That is the whole mechanism of the dichotomy; this section carries it out, and ends with the master formula (Theorem~\ref{main:D}), which reads the cost off the nearest rung.

\begin{lemma}[submultiplicativity of the LCU norm]\label{lem:submult}
For unimodular data sequences $u, v$ on $k = 0, \ldots, n$:
$\sqrt{\Gmin(uv)} \le \sqrt{\Gmin(u)}\cdot\sqrt{\Gmin(v)}$.
\end{lemma}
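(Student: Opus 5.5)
The plan is to prove the inequality on the primal side by convolving measures. This is the circle analogue of the fact that $V(\theta)V(\theta') = V(\theta+\theta')$, so that $S$ is closed under products. By Theorem~\ref{main:A} both infima in Definition~\ref{def:cost} are attained. Let $\mu$ and $\nu$ be complex measures on $\mathbb{T}$ with $\hat\mu(k) = u_k$, $\hat\nu(k) = v_k$ for $k = 0,\ldots,n$, and with $\|\mu\|_{\TV} = \sqrt{\Gmin(u)}$, $\|\nu\|_{\TV} = \sqrt{\Gmin(v)}$. Attainment is a convenience rather than a need: with $\varepsilon$-optimal measures the argument below gives the bound up to $\varepsilon$, and we then let $\varepsilon \to 0$.

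Form the convolution $\mu * \nu$, defined as the pushforward of the product measure $\mu \otimes \nu$ on $\mathbb{T}^2$ under the addition map $(\theta,\theta') \mapsto \theta + \theta'$ mod $2\pi$. There are two steps.

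\emph{Moments.} Since $e^{ik(\theta+\theta')} = e^{ik\theta}e^{ik\theta'}$, Fubini gives $\widehat{\mu*\nu}(k) = \hat\mu(k)\hat\nu(k) = u_k v_k$ for every $k = 0,\ldots,n$. Hence $\mu*\nu$ is feasible for the data $uv$. In operator language, $\int V\,d(\mu*\nu) = \big(\int V\,d\mu\big)\big(\int V\,d\nu\big) = U_u U_v$, which is again a function of $K$.

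\emph{Norm.} The total variation of a pushforward is at most that of the original measure, and $|\mu\otimes\nu| = |\mu|\otimes|\nu|$. Therefore $\|\mu*\nu\|_{\TV} \le \|\mu\|_{\TV}\,\|\nu\|_{\TV}$.

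Combining the two, $\sqrt{\Gmin(uv)} \le \|\mu*\nu\|_{\TV} \le \sqrt{\Gmin(u)}\sqrt{\Gmin(v)}$.

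No step is a real obstacle. The only point that needs care is the norm inequality for pushforwards of complex measures. I would handle it by polar decomposition: write $d\mu = h\,d|\mu|$ with $|h| = 1$. Then for any Borel partition $\{E_i\}$,
\[
\sum_i |(\mu*\nu)(E_i)| \le \sum_i (|\mu|\otimes|\nu|)\big(\{(\theta,\theta') : \theta+\theta' \in E_i\}\big) = |\mu|(\mathbb{T})\,|\nu|(\mathbb{T}).
\]
Alternatively, since Theorem~\ref{main:A} makes the minimizers discrete, one can work directly with atomic LCUs. The product of $\sum_j c_j V(\theta_j)$ and $\sum_l d_l V(\theta'_l)$ is $\sum_{j,l} c_j d_l V(\theta_j + \theta'_l)$, and the triangle inequality gives $\ell_1$ norm at most $\|c\|_1\|d\|_1$. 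This finite version avoids measure theory entirely.
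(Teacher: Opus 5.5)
Your proof is correct and matches the paper's argument: convolve feasible measures for $u$ and $v$, so that the Fourier coefficients multiply and $\mu*\nu$ is feasible for $uv$, then use $\|\mu*\nu\|_{\TV} \le \|\mu\|_{\TV}\|\nu\|_{\TV}$. Your partition argument for the total-variation bound of the pushforward, and the atomic-LCU version, only fill in details the paper leaves implicit.
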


\begin{proof}
If $\hat\mu = u$ and $\hat\nu = v$ on $k \le n$, the convolution $\mu * \nu$ has
$\widehat{\mu * \nu} = \hat\mu\,\hat\nu = uv$ there, and
$\|\mu * \nu\|_{\TV} \le \|\mu\|_{\TV}\|\nu\|_{\TV}$. Minimize both factors.
\end{proof}

Chirp data multiply in the angle, $e^{-i\gamma_1 k^2} e^{-i\gamma_2 k^2} = e^{-i(\gamma_1 + \gamma_2)k^2}$,
so the chirp cost is submultiplicative in $\gamma$-addition; in particular, for a detuned resonance,
\begin{equation}\label{eq:submult}
\Gmin(\pi p/q + \delta,\, n) \;\le\; q \cdot \Gmin(\delta, n).
\end{equation}

\begin{corollary}[detuned upper bound, explicit]\label{cor:detunedUB}
For every resonance $\pi p/q$ and every detuning $\delta$,
\[
\Gmin(\pi p/q + \delta,\, n) \;\le\; q \cdot \exp\Big( |\delta|\,\frac{N^2 - 1}{3} \Big).
\]
\end{corollary}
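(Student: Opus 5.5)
The plan is to reduce to the pure-detuning cost via \eqref{eq:submult} and then bound $\Gmin(\delta,n)$ by exponentiating an \emph{infinitesimal generator}, using Lemma~\ref{lem:submult} as a semigroup inequality. By \eqref{eq:submult} it suffices to show
\[
\sqrt{\Gmin(\delta,n)} \;\le\; \exp\Big(|\delta|\,\tfrac{N^2-1}{6}\Big).
\]
By Lemma~\ref{lem:free} I will first recentre the chirp. Since $e^{-i\delta k^2}$ and $e^{-i\delta (k-n/2)^2}$ differ by a free factor $e^{i(\alpha k+\beta)}$, the data may be taken to be $w_j=e^{-i\delta j^2}$ with $j=k-n/2$ ranging over the symmetric window $\{-n/2,\dots,n/2\}$. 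Splitting $\delta=M\cdot(\delta/M)$ and applying Lemma~\ref{lem:submult} $M$ times gives
\[
\sqrt{\Gmin(\delta,n)}\le\big(\sqrt{\Gmin(\delta/M,n)}\big)^M .
\]
It therefore suffices to prove a first-order bound $\sqrt{\Gmin(\varepsilon,n)}\le 1+C|\varepsilon|+O(\varepsilon^2)$ with $C=(N^2-1)/6$, and then let $M\to\infty$. That yields $\exp(C|\delta|)$, which squares to the stated bound.

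For the first-order step I will exhibit, for small $\varepsilon$, the measure $\nu_\varepsilon=\delta_0-i\varepsilon\lambda$. Here $\lambda$ is a fixed real-coefficient measure with $\hat\lambda(j)=j^2-c$ on the window, and $c$ is a free constant. This produces the moments $(1+i\varepsilon c)-i\varepsilon j^2$ to first order, and a further $O(\varepsilon^2)$ correction restores exact unimodular data $e^{-i\varepsilon j^2}$ up to the free global phase $e^{i\varepsilon c}$. The key observation is that mass placed at $\theta=0$ with purely imaginary first-order weight is free: $|1+i\varepsilon a|=1+O(\varepsilon^2)$. Hence
\[
\|\nu_\varepsilon\|_{\TV}=1+|\varepsilon|\,|\lambda|(\mathbb T\setminus\{0\})+O(\varepsilon^2),
\]
and the constant $c$ can be chosen freely. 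The generator cost is therefore the least off-origin total variation of a measure with moments $j^2-c$ on $|j|\le n/2$, minimized over $c$.

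For the explicit $\lambda$ I will take the $N$-point equispaced grid and use the DFT of the sequence $g(j)=j^2-c$ on the window, as in Lemma~\ref{lem:dft}. I will choose $c$ as the mean of $j^2$ over the window, which equals $(N^2-1)/12$. This kills the $m=0$ coefficient, so all of the measure's mass lies off the origin. The remaining task is to show that $\sum_{m\ne0}|\hat g_m|\le (N^2-1)/6$, i.e.\ twice the mean. I expect this to come from a sign or positivity structure of the DFT of a centred parabola: its coefficients are explicit cosecant-squared type expressions with a controlled sign pattern. With such a pattern, $\sum_{m\neq 0}|\hat g_m|$ equals a signed evaluation of $g$, for instance $|g(0)|+\text{mean}$, or $g(\pm n/2)-g(0)$ suitably normalised, and this evaluates to $2\cdot\frac{N^2-1}{12}$.

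This sign and summation step is the main obstacle. If the exact identity fails, my first fallback is to replace the grid construction by the symmetric second-difference measures $(2\delta_0-\delta_h-\delta_{-h})$. Averaging these over $h$ with a nonnegative weight reproduces $j^2$ on the window, since $2(1-\cos jh)$ is nonnegative. For such positive combinations the off-origin mass is exactly the value of the reproduced moment sequence at a suitable point, which would give the constant directly. Any constant of the same order would still prove a version of the corollary, and the cross-check is that the resulting $C$ should match the quadratic variance-type quantity $2\operatorname{Var}(j)$ suggested by the exponent $(N^2-1)/3$.
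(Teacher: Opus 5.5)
Your architecture is sound and matches the paper's proof. You reduce by \eqref{eq:submult}, recentre to $f(k)=(k-n/2)^2$ by Lemma~\ref{lem:free}, and exponentiate a generator whose price is the off-origin mass of the $N$-point DFT of the centred parabola. The paper exponentiates inside the Wiener algebra $A(\mathbb{Z}_N)$, where $\|e^{-i\delta(f-\hat f(0))}\|_A\le e^{|\delta|\sum_{m\ne0}|\hat f(m)|}$ takes one line. Your $M$-fold product via Lemma~\ref{lem:submult} plus a first-order bound is an equivalent product-formula version. It needs the $O(\varepsilon^2)$ remainder to be uniform in $\varepsilon$, which the DFT measure of the residual $e^{-i\varepsilon g}-(1-i\varepsilon g)$ provides.

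The gap is that the only quantitative content of the corollary, $\sum_{m\ne0}|\hat g_m|\le (N^2-1)/6$, is left as an expectation, and the mechanism you expect for it is wrong.

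\emph{The exact identity does not exist.} The sum is strictly less than $(N^2-1)/6$ for every $N\ge2$. At $N=3$, $g=(\tfrac13,-\tfrac23,\tfrac13)$ gives $|\hat g_1|=|\hat g_2|=\tfrac13$, a total of $\tfrac23$ against $\tfrac43$. Moreover, in the indexing $k=0,\dots,n$ the coefficients carry the phase $e^{i\pi m/N}$, so no signed point evaluation of $g$ reproduces the sum.

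\emph{The fallback cannot work.} A mixture $\int(2\delta_0-\delta_h-\delta_{-h})\,w(h)\,dh$ with $w\ge0$ has moments $2W-2\int\cos(jh)\,w(h)\,dh$. Reproducing $j^2-c$ on the window therefore forces the cosine moments of $w$ to equal $W-j^2/2$. For even $n\ge4$ the Toeplitz form of that sequence is $Ws^2-sv+u^2$, with $s=\sum x_a$, $u=\sum a x_a$, $v=\sum a^2x_a$. This form is indefinite (take $s=1$, $u=0$, $v$ large), so no nonnegative $w$ exists.

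\emph{What is missing is the explicit transform.} On $\mathbb{Z}_N$ the periodized centred parabola has second difference $2-N\bigl(\mathbf{1}[k=0]+\mathbf{1}[k=n]\bigr)$. Hence for $m\ne0$,
\[
-4\sin^2(\pi m/N)\,\hat f(m)=-(1+e^{2\pi i m/N}),
\]
that is,
\[
|\hat f(m)|=\frac{|\cos(\pi m/N)|}{2\sin^2(\pi m/N)}\le\tfrac12\csc^2(\pi m/N).
\]
The classical identity $\sum_{m=1}^{N-1}\csc^2(\pi m/N)=(N^2-1)/3$ then finishes the bound.

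Your ``twice the variance'' intuition is right for the majorant: $\sum_{m\ne0}\tfrac12\csc^2(\pi m/N)$ is twice the value at $0$ of a mean-zero periodic discrete parabola. But the bound on $\sum|\hat g_m|$ is an inequality, obtained by discarding $|\cos(\pi m/N)|\le1$, not an identity.
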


\noindent The proof (Appendix~\ref{app:proofs_master}) bounds $\sqrt{\Gmin(\delta, n)}$ by the
Wiener-algebra norm of $e^{-i\delta f}$ for the \emph{centred} parabola $f(k) = (k - n/2)^2$; centring
is what keeps the bound informative on the whole window $|\delta| \lesssim 1/n^2$ with no logarithmic
loss. The matching lower bound rotates the rung's certificate. Write
\[
\Psi_q(\delta) \;:=\; \frac1q\Big|\sum_{|m| < q}\Big(1 - \frac{|m|}{q}\Big)e^{-i\delta m^2}\Big|,
\qquad \Psi_q(0) = 1, \qquad \Psi_q(\delta) \;\ge\; 1 - \frac{7\,\delta^2q^4}{360}
\]
(Lemma~\ref{lem:fejerdetune}, Appendix~\ref{app:proofs_master}).

\begin{theorem}[detuned lower bound: the resonance floor]\label{thm:detunedLB}
Let $\gamma = \pi p/q + \delta$ with $\gcd(p,q) = 1$, and let $n \ge 2q-2$. Then
\[
\sqrt{\Gmin(\gamma, n)} \;\ge\; \sqrt q\,\Psi_q(\delta) \;\ge\; \sqrt q\,\Big(1 - \frac{\delta^2q^4}{30}\Big).
\]
\end{theorem}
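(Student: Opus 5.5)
The plan is to reuse the exact rung certificate of Theorem~\ref{main:B}(ii) unchanged and pair it against the detuned data. Duality (Theorem~\ref{main:A}) then turns the pairing into a lower bound; the only work is to arrange the quadratic detuning so that the pairing collapses to the Fej\'er-weighted sum defining $\Psi_q(\delta)$.

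\emph{Step 1: recentre the detuning.} The data at $\gamma = \pi p/q + \delta$ are $u_k e^{-i\delta k^2}$, with $u_k = e^{-i\pi p k^2/q}$ the rung chirp. Expand
\[
k^2 = (k-(q-1))^2 + 2(q-1)k - (q-1)^2 .
\]
The factor $e^{-i\delta(2(q-1)k - (q-1)^2)}$ is of the form $e^{i(\alpha k+\beta)}$. By Lemma~\ref{lem:free} it may be removed at no cost, so $\Gmin(\gamma,n) = \Gmin(w,n)$ with $w_k := u_k\,e^{-i\delta(k-(q-1))^2}$. This centring at the middle of the Fej\'er window is the choice that makes the detuned pairing symmetric in $m = k-(q-1)$.

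\emph{Step 2: pair with the rung certificate.} Since $n \ge 2q-2$, the polynomial $P$ of Theorem~\ref{main:B}(ii), padded with zero coefficients for $2q-2 < k \le n$, is admissible in Theorem~\ref{main:A}: its coefficients are
\[
a_k = q^{-1/2}\bigl(1-|k-(q-1)|/q\bigr)\overline{u_k}, \qquad \|P\|_\infty \le 1 .
\]
Using $|u_k| = 1$, the pairing is
\[
\sum_k a_k w_k = q^{-1/2}\sum_{|m|<q}\Bigl(1-\frac{|m|}{q}\Bigr)e^{-i\delta m^2}.
\]
Its modulus is $q^{-1/2}\cdot q\,\Psi_q(\delta) = \sqrt q\,\Psi_q(\delta)$. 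Theorem~\ref{main:A} takes the maximum of $|\sum_k a_k w_k|$, so the phase of the sum is irrelevant (equivalently, multiply $P$ by a unimodular constant). This gives $\sqrt{\Gmin(\gamma,n)} \ge \sqrt q\,\Psi_q(\delta)$.

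\emph{Step 3: the explicit form.} The second inequality follows from the bound $\Psi_q(\delta) \ge 1 - 7\delta^2q^4/360$ of Lemma~\ref{lem:fejerdetune}, together with $7/360 < 12/360 = 1/30$.

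The only step that needs care is Step~1. One must check that the modulation removed is genuinely of the free form $e^{i(\alpha k+\beta)}$, and that recentring at $q-1$ rather than at $n/2$ is what aligns the detuning with the Fej\'er window. If we centred elsewhere, a linear phase $e^{-2i\delta c m}$ would survive inside the window sum, and $\Psi_q$ would not appear. The lower bound in Lemma~\ref{lem:fejerdetune} itself is taken as given here; if it were not, the plan would be to Taylor-expand $e^{-i\delta m^2}$ to second order and use the exact Fej\'er moments $\sum_{|m|<q}(1-|m|/q)m^2$ and $m^4$.
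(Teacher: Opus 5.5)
Your proof is correct and is essentially the paper's argument. The paper applies the free action of Lemma~\ref{lem:free} to the certificate rather than to the data: it takes $a_k = q^{-1/2}\bigl(1-|k-(q-1)|/q\bigr)\overline{u_k}\,e^{-i(\alpha k+\beta)}$ (with your $u_k$, the rung chirp) and exactly your $\alpha = -2\delta(q-1)$, $\beta = \delta(q-1)^2$, notes that rotating and rephasing $P$ leaves $\|P\|_\infty$ unchanged, and reaches the same Fej\'er-weighted sum before invoking Lemma~\ref{lem:fejerdetune}.
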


\noindent The rung's Fej\'er certificate, rotated and rephased by Lemma~\ref{lem:free}, stays admissible
and pairs to $\sqrt q\,\Psi_q(\delta)$; its support is the only use of $n \ge 2q-2$
(Appendix~\ref{app:proofs_master}). The right-hand side does not depend on $n$: the bound is a permanent
floor, valid at every $n$ above the threshold, with natural window $|\delta| \lesssim 1/q^2$.

At a given $n$ the floor is used at the top convergent, where the detuning is controlled by the next
partial quotient.

\begin{theorem}[the convergent floor]\label{thm:convfloor}
Let $x = \gamma/\pi$ have convergents $p_j/q_j$ and partial quotients $a_j$, let $n \ge 2$, and let $k$ be
the largest index with $q_k \le (n+2)/2$. Put $\delta_k = \gamma - \pi p_k/q_k$, $t_k = |x - p_k/q_k|\,q_k^2$ and $a^* = a_{k+1}$. Then
\[
\Gmin(\gamma, n) \;\ge\; q_k\,\Psi_{q_k}(\delta_k)^2 \;\ge\; q_k\big(1 - \beta t_k^2\big)^2 \;\ge\; q_k\Big(1 - \frac{\beta}{a^{*2}}\Big)^2 \;\ge\; 0.6530\,q_k ,
\qquad \beta := \frac{7\pi^2}{360},
\]
and $q_k > (n+2)/(2(a^*+1))$; hence $\Gmin(\gamma,n) \ge c(a^*)\,(n+2)$ with
$c(a) = (1 - \beta/a^2)^2/(2(a+1))$.
\end{theorem}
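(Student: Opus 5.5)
Every inequality in the chain is obtained by feeding Theorem~\ref{thm:detunedLB} with the top convergent and then using standard continued-fraction estimates to control the detuning.

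\emph{First inequality.} Since $q_k \le (n+2)/2$, we have $2q_k - 2 \le n$. Hence the hypothesis $n \ge 2q_k-2$ of Theorem~\ref{thm:detunedLB} holds with $p/q = p_k/q_k$ and $\delta = \delta_k$, and $\gcd(p_k,q_k)=1$ for convergents. Squaring $\sqrt{\Gmin} \ge \sqrt{q_k}\,\Psi_{q_k}(\delta_k)$ gives $\Gmin \ge q_k\Psi_{q_k}(\delta_k)^2$.

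\emph{Second inequality.} Lemma~\ref{lem:fejerdetune} gives $\Psi_q(\delta) \ge 1 - 7\delta^2 q^4/360$. Substituting $|\delta_k| = \pi\,t_k/q_k^2$ yields $\delta_k^2 q_k^4 = \pi^2 t_k^2$, so $\Psi_{q_k}(\delta_k) \ge 1 - \beta t_k^2$.

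\emph{Third inequality.} The classical bound $|x - p_k/q_k| < 1/(q_kq_{k+1})$ together with $q_{k+1} \ge a_{k+1}q_k$ gives $t_k < 1/a^*$. Since $\beta < 1$, the quantity $1-\beta t_k^2$ stays positive, so squaring is monotone and we get $(1-\beta t_k^2)^2 \ge (1-\beta/a^{*2})^2$.

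\emph{Fourth inequality.} We have $a^* \ge 1$ and $\beta = 7\pi^2/360 \approx 0.19190$, so $(1-\beta)^2 \approx 0.65303 > 0.6530$. I would check the decimal value of $\beta$ to enough digits to confirm this.

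\emph{Bound on $q_k$.} By maximality of $k$, $q_{k+1} > (n+2)/2$. The recursion gives $q_{k+1} = a^*q_k + q_{k-1} \le (a^*+1)q_k$, since $q_{k-1} \le q_k$. Therefore $q_k \ge q_{k+1}/(a^*+1) > (n+2)/(2(a^*+1))$. Combining this with the third inequality gives $\Gmin \ge c(a^*)(n+2)$.

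\emph{Edge cases.} These are the main point needing care.
\begin{itemize}
\item \emph{Existence of $k$.} Since $q_0 = 1 \le (n+2)/2$, the index $k$ exists.
\item \emph{Rational $x$.} If $x$ is rational, the expansion may terminate at $q_k$, leaving $a_{k+1}$ undefined. In that case $\delta_k = 0$ and $t_k=0$, so the first two inequalities already give $\Gmin \ge q_k$. For the $q_k$ bound one adopts the convention $a^* = \infty$ (or restricts the last clause to the non-terminating case). If the expansion does continue, the argument above applies verbatim.
\item \emph{Convention $q_{-1}=0$.} This convention is used at $k=0$. Note $q_1 = a_1$ may equal $q_0 = 1$ when $a_1 = 1$; the inequality $q_{k-1} \le q_k$ still holds there.
\item \emph{Interval for $x$.} If $x \notin [0,1)$, the free rephasing of Lemma~\ref{lem:free} lets us reduce $x$ modulo $1$ (and use $\gamma\mapsto-\gamma$ via conjugation if needed).
\end{itemize}

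I expect no step to be a genuine obstacle. Two points need the most care: the strictness of $t_k < 1/a^*$, which should hold even when $x = p_{k+1}/q_{k+1}$ is itself the final convergent, and the numeric constant $0.6530$.
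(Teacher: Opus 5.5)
Your proposal is correct and follows the paper's proof step for step. You use Theorem~\ref{thm:detunedLB} at the top convergent, which applies because $q_k \le (n+2)/2$ is $n \ge 2q_k-2$, and the centred bound of Lemma~\ref{lem:fejerdetune} with $\delta_k^2q_k^4=\pi^2t_k^2$; then $t_k\le 1/a_{k+1}$ from $|x-p_k/q_k|\le 1/(q_kq_{k+1})$ and $q_{k+1}\ge a_{k+1}q_k$, and finally maximality of $k$ together with $q_{k+1}\le(a_{k+1}+1)q_k$.

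One small correction to your closing remark: the strict inequality $t_k<1/a^*$ can fail at a rational $x$. For example, $k=0$ and $x=1/a_1$ with $a_1 > (n+2)/2$ give $t_0=1/a_1$ exactly. Only $t_k\le 1/a^*$ is used, so nothing breaks.
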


\noindent Theorem~\ref{thm:detunedLB} applies at the top convergent automatically, since
$q_k \le (n+2)/2$ is $n \ge 2q_k - 2$; its approximation quality bounds the detuning factor, because
$|x - p_k/q_k| < 1/(q_kq_{k+1})$ and $q_{k+1} \ge a^*q_k$; and maximality of $k$ gives
$q_{k+1} > (n+2)/2$ with $q_{k+1} \le (a^*+1)q_k$ (Appendix~\ref{app:proofs_generic}). The floor is a
statement about a single Diophantine quantity, the partial quotient at scale $n$, and the dichotomy
follows.

\begin{mainthm}[dichotomy]\label{main:C}
Let $x = \gamma/\pi$ have partial quotients $a_j$.
\begin{enumerate}[label=(\roman*), nosep]
\item $\Gmin(\gamma, n) = \Theta(n)$ as $n \to \infty$ if and only if $x$ is badly approximable. If the
$a_j$ are unbounded then $\liminf_n \Gmin(\gamma,n)/n = 0$: at the scales
$n_k = \lfloor\sqrt{q_kq_{k+1}}\rfloor - 1$, $\Gmin(\gamma, n_k)/n_k \le 2.85\,(1 + 2/n_k)/\sqrt{a_{k+1}}$.
\item If $a_j \le A$ for all $j$ then, for all $n \ge 2$, $c_A\,(n+2) \le \Gmin(\gamma, n) \le n+1$ with
the explicit constant $c_A = c(A)$ of Theorem~\ref{thm:convfloor} ($c_1 = 0.163$).
\item $\sup_n \Gmin(\gamma, n) < \infty$ if and only if $x$ is rational; at $\gamma = \pi p/q$ in lowest
terms the supremum is exactly $q$, attained at every $n \ge 2q-2$, and at irrational $x$,
$\Gmin(\gamma, n) \to \infty$.
\item For one bosonic mode with no photon-number cutoff --- the collective observable the photon number
$\hat n$, the resource family the linear phase shifters $e^{i\theta\hat n}$, the target the Kerr phase
$e^{-i\gamma\hat n^2}$, the moment conditions running over every $k \ge 0$ --- the least overhead $\Gamma_\infty(\gamma)$ of a representation $e^{-i\gamma\hat n^2} = \int e^{i\theta\hat n}\,d\mu(\theta)$ by a complex measure is $\sup_n\Gmin(\gamma, n)$. Hence a finite-cost representation exists if and only if $x$ is rational; at
$\gamma = \pi p/q$ its least cost is exactly $q$, attained only by the resonant measure $\mu^*$ of
Theorem~\ref{main:B}; and at irrational $x$ the cost at photon cutoff $n$ grows without bound, at the rates of (i).
\item For almost every $\gamma$ and every $\varepsilon > 0$, $\Gmin(\gamma, n) \ge c\,n/(\log n\,(\log\log n)^{1+\varepsilon})$ for all large $n$, while $\Gmin(\gamma, n_k) \le 3.11\,(1+o(1))\,n_k/\sqrt{\log n_k \log\log n_k}$ for infinitely many $k$.
\end{enumerate}
\end{mainthm}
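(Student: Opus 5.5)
The plan is to assemble Theorem~\ref{main:C} from the two detuning bounds (Corollary~\ref{cor:detunedUB} and Theorem~\ref{thm:convfloor}), the trivial ceiling of Lemma~\ref{lem:dft}, and Theorem~\ref{main:B}, taking the parts in the order (ii), (i), (iii), (iv), (v).

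Part (ii) is immediate. The lower bound $c_A(n+2)$ is Theorem~\ref{thm:convfloor} with $a^*\le A$, because $c(a)$ is decreasing in $a$. The upper bound $n+1$ is Lemma~\ref{lem:dft}. This also gives the ``if'' direction of (i).

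For the ``only if'' direction of (i), and its quantitative form, I would use Corollary~\ref{cor:detunedUB} at the convergent $p_k/q_k$. The detuning satisfies $|\delta_k|<\pi/(q_kq_{k+1})$. At $n_k=\lfloor\sqrt{q_kq_{k+1}}\rfloor-1$ we have $N^2\le q_kq_{k+1}$, so the exponent is at most $\pi/3$. This gives $\Gmin\le e^{\pi/3}q_k$. Since $q_{k+1}\ge a_{k+1}q_k$, we get
\[
q_k\le\sqrt{q_kq_{k+1}/a_{k+1}}\le (n_k+2)/\sqrt{a_{k+1}}.
\]
Together these give $\Gmin/n_k\le e^{\pi/3}(1+2/n_k)/\sqrt{a_{k+1}}$, and $e^{\pi/3}\approx 2.85$, which is the stated constant. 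If the $a_j$ are unbounded, this forces the $\liminf$ to be $0$. For rational $x$, (iii) shows the cost is bounded, so it is not $\Theta(n)$ either.

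For part (iii), the rational case is Theorem~\ref{main:B}. Part (i) gives $\Gmin\le\min(q,n+1)$ for every $n$, and part (ii) gives equality with $q$ for $n\ge 2q-2$, so the supremum is exactly $q$. At irrational $x$, $q_k\to\infty$. The top-convergent index $k(n)$ is nondecreasing in $n$ and tends to infinity, so Theorem~\ref{thm:convfloor} gives $\Gmin\ge 0.653\,q_{k(n)}\to\infty$.

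For part (iv), I would first show $\Gamma_\infty=\sup_n\Gmin$. The inequality $\ge$ holds because restricting the moment conditions can only lower the cost. For $\le$, take optimal measures $\mu_n$ with bounded total variation. A weak-$*$ limit point exists by Banach--Alaoglu, and it matches every moment because each moment is a weak-$*$ continuous functional. Lower semicontinuity of the total variation then gives the bound. Existence at finite cost exactly at rationals follows from (iii).

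Uniqueness at rationals is the step I expect to need care. Any optimal measure for all moments is also optimal at $n=2q-2$, where its restricted cost equals $q=\Gmin$. The uniqueness part of Theorem~\ref{main:B}(ii) then makes it $\mu^*$. For irrational $x$, the growth rates of (i) transfer directly.

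Part (v) is the metric step. Theorem~\ref{thm:convfloor} gives $\Gmin\gtrsim n/a_{k+1}$ with $k=k(n)$. Khintchine's theorem (the convergence half, for $\sum 1/(j\log j(\log\log j)^{1+\varepsilon})$) gives, for almost every $x$, $a_{k+1}\le k\log k(\log\log k)^{1+\varepsilon}$ eventually. By L\'evy's theorem, $k(n)\asymp\log n$ almost surely, so $a_{k+1}\lesssim \log n(\log\log n)^{1+\varepsilon}$.

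For the upper bound in (v), the divergence half of Khintchine gives infinitely many $a_{k+1}\ge k\log k$. Substituting into the bound from (i) gives
\[
\Gmin(\gamma,n_k)\le 2.85(1+o(1))\,n_k/\sqrt{k\log k}.
\]
It remains to express $k$ in terms of $n_k$. Here the L\'evy-constant relation $\log q_k\sim k\pi^2/(12\log2)$, and hence $\log n_k\sim\tfrac12\log(q_kq_{k+1})$, must be tracked carefully to land exactly on $3.11$. I expect this bookkeeping of constants to be the main obstacle, alongside the bound $a_{k+1}=o(q_k)$ needed to compare $\log n_k$ with $\log q_k$.
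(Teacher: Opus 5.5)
Your proposal is correct and follows the paper's proof part by part: Theorem~\ref{thm:convfloor} with the monotonicity of $c(a)$ and Lemma~\ref{lem:dft} for (ii) and the ``if'' half of (i), Corollary~\ref{cor:detunedUB} at the convergent scale $n_k$ for the $\liminf$, the convergent floor $0.6530\,q_{k(n)}$ for (iii), weak-$*$ compactness, lower semicontinuity and the uniqueness of Theorem~\ref{main:B}(ii) at $n=2q-2$ for (iv), and Borel--Bernstein plus L\'evy for (v). The bookkeeping you flag in (v) takes one line and needs no separate bound $a_{k+1}=o(q_k)$: L\'evy's theorem at indices $k$ and $k+1$ gives $\log n_k=\tfrac12\log(q_kq_{k+1})+o(1)=kL(1+o(1))$ with $L=\pi^2/(12\log 2)$, so $k\log k\ge(1-o(1))(\log n_k/L)\log\log n_k$, and the resulting constant is $e^{\pi/3}\sqrt L\approx 3.104<3.11$.
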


\begin{proof}[Proof of (i) and (ii)]
``$\Leftarrow$'' and (ii): Theorem~\ref{thm:convfloor} with $a^* \le A$; on the integers $c(a)$ is
decreasing, so the worst case is $a^* = A$, and $\Gmin \le n+1$ is Lemma~\ref{lem:dft}.
``$\Rightarrow$'': if the $a_j$ are unbounded, take $k$ with $a_{k+1}$ large and the scale $n_k = \lfloor\sqrt{q_kq_{k+1}}\rfloor - 1$. The convergent's detuning is $|\delta_k| = \pi|x - p_k/q_k| < \pi/(q_kq_{k+1})$, so Corollary~\ref{cor:detunedUB} gives $\Gmin(\gamma, n_k) \le q_k\exp\bigl(\pi(n_k+1)^2/(3q_kq_{k+1})\bigr) \le e^{\pi/3}q_k$, while $q_k \le \sqrt{q_kq_{k+1}/a_{k+1}} \le (n_k+2)/\sqrt{a_{k+1}}$; hence $\Gmin(\gamma, n_k)/n_k \le e^{\pi/3}(1 + 2/n_k)/\sqrt{a_{k+1}}$, which tends to zero along the subsequence (Appendix~\ref{app:proofs_generic}).
\end{proof}

\noindent Part (iii) is Theorem~\ref{main:B} at rational angles and, at irrational ones, the floor $0.6530\,q_k$ of Theorem~\ref{thm:convfloor} growing with the convergents. Part (iv): the truncated costs
increase with $n$, a weak-$*$ limit of near-optimal measures matches every moment, and (iii) with the
uniqueness of Theorem~\ref{main:B} gives the rest (Appendix~\ref{app:proofs_generic}). In a register
every target has a finite decomposition and rationality decides only whether its cost stays bounded;
in a mode it decides whether a decomposition exists at all, and the exactness threshold disappears. Part (v) is Borel--Bernstein and L\'evy's law applied to Theorem~\ref{thm:convfloor} and part (i); it uses nothing else (Appendix~\ref{app:proofs_generic}).

\paragraph{The master formula.} Rational angles cost their denominator; badly approximable angles cost
$\Theta(n)$; in between, the cost is read off the nearest low rung. Write
\[
F(\gamma, n) \;:=\; \min_{\gcd(p,q) = 1} q\,\bigl(1 + |\gamma - \pi p/q|\,n^2\bigr),
\qquad\text{capped at } n+1 ,
\]
the cost of the nearest rung times one plus the detuning in units of $1/n^2$.

\begin{conjecture}[master formula]\label{conj:masterv2}
There are absolute constants $C_1, C_2 > 0$ with $C_1 F(\gamma, n) \le \Gmin(\gamma, n) \le C_2 F(\gamma, n)$ for every $\gamma$ and every $n$ \cite[Conj.~5.1]{v1}.
\end{conjecture}

\noindent It is a theorem on the \emph{deep-resonance range}, within $1/n^2$ of a rung of denominator at most $(n+2)/2$:

\begin{mainthm}[master formula on the deep-resonance range]\label{main:D}
Let $n \ge 8$, and let $q^*$ be the smallest denominator carrying a rung inside the window
$|\gamma - \pi p/q| \le 1/n^2$. For every $\gamma$ with $q^* \le (n+2)/2$,
\[
0.990\, q^* \;\le\; \Gmin(\gamma, n) \;\le\; 1.518\, q^*,
\]
and, since $q^* \le F(\gamma, n) \le 2q^*$ on this range, $0.49\,F(\gamma, n) \le \Gmin(\gamma, n) \le 1.52\,F(\gamma, n)$.
\end{mainthm}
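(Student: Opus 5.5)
Let $p^*/q^*$ be a rung with the minimal denominator $q^*$ in the window, and write $\delta^* = \gamma - \pi p^*/q^*$, so that $|\delta^*| \le 1/n^2$. The plan is to bracket $\Gmin(\gamma,n)$ between the two detuning bounds, both applied at this one rung.

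\textbf{Upper bound.} Apply Corollary~\ref{cor:detunedUB}:
\[
\Gmin \le q^*\exp\bigl(|\delta^*|(N^2-1)/3\bigr) \le q^*\exp\bigl(((n+1)^2-1)/(3n^2)\bigr).
\]
For $n\ge 8$ the exponent equals $(n^2+2n)/(3n^2) = \tfrac13 + \tfrac{2}{3n} \le \tfrac13 + \tfrac1{12} = \tfrac{5}{12}$. This gives $e^{5/12}\approx 1.517$, which matches the constant $1.518$. So the upper bound is a direct substitution.

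\textbf{Lower bound.} Since $q^*\le (n+2)/2$, we have $n \ge 2q^*-2$, so Theorem~\ref{thm:detunedLB} applies:
\[
\Gmin \ge q^*\Psi_{q^*}(\delta^*)^2 \ge q^*\bigl(1-\tfrac{7}{360}\delta^{*2}q^{*4}\bigr)^2,
\]
using Lemma~\ref{lem:fejerdetune}. Bound $\delta^{*2}q^{*4} \le q^{*4}/n^4 \le ((n+2)/(2n))^4$. At $n=8$ this is $(10/16)^4 \approx 0.153$, which decreases toward $1/16$ as $n$ grows. Hence $1-\tfrac{7}{360}\cdot 0.153 \approx 0.99703$, whose square is about $0.994 \ge 0.990$.

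The main point to check is the worst case at small $n$. The cruder bound $\delta^2 q^4/30$ of Theorem~\ref{thm:detunedLB} also suffices: $1-0.153/30\approx 0.9949$, squared about $0.990$. So the constant $0.990$ is probably exactly this computation at $n=8$, and I would verify the arithmetic carefully there.

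\textbf{Comparison with $F$.} For the claim $q^*\le F\le 2q^*$:

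\emph{Upper side.} The rung $p^*/q^*$ itself gives $F \le q^*(1+|\delta^*|n^2) \le 2q^*$. The cap at $n+1 \ge 2q^*-1$ only lowers $F$. Here I need to check that the cap does not push $F$ below $q^*$, which follows from $n+1 \ge q^*$.

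\emph{Lower side.} Take any rung $p/q$ in the minimum. If $q<q^*$, then by minimality of $q^*$ that rung lies outside the window, so $|\gamma-\pi p/q|n^2>1$ and $q(1+|\gamma-\pi p/q|n^2) > q(1+\dots)$. This alone does not yet give $\ge q^*$.

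This is the real obstacle: a low rung just outside the window could make $F$ smaller than $q^*$. I would handle it with a Dirichlet-type spacing argument. Two distinct rungs satisfy $|p/q - p^*/q^*| \ge 1/(qq^*)$. Since $\pi p^*/q^*$ is within $1/n^2$ of $\gamma$, this gives
\[
|\gamma-\pi p/q| \ge \pi/(qq^*) - 1/n^2.
\]
Therefore
\[
q\bigl(1+|\gamma-\pi p/q|n^2\bigr) \ge q + \pi n^2/q^* - q \ge \pi n^2/q^* \ge q^*,
\]
because $q^{*2} \le ((n+2)/2)^2 \le \pi n^2$. Rungs with $q \ge q^*$ trivially contribute at least $q^*$.

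\textbf{Conclusion.} Combining, $\Gmin \ge 0.990\,q^* \ge 0.495\,F$ and $\Gmin \le 1.518\,q^* \le 1.52\,F$. This gives the stated constants $0.49$ and $1.52$.
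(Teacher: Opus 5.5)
Your proposal is correct and is essentially the paper's proof: both detuning bounds are applied at the window rung with $|\delta|\le 1/n^2$ inserted, and the spacing $|p/q-p^*/q^*|\ge 1/(qq^*)$ shows that any lower rung outside the window has term at least $\pi n^2/q^*$; the paper notes this is even $\ge n+1$, the cap. One correction to your aside: at $n=8$, $q^*=5$ the cruder bound $\delta^2q^4/30$ gives $(1-0.1526/30)^2\approx 0.9899<0.990$, so it does not quite suffice, whereas the centred $7/360$ form you actually use does, as does the exact form $q^4/30-q^2/12+1/20$ (giving $\approx 0.9908$).
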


\noindent The hypothesis $q^* \le (n+2)/2$ is $n \ge 2q^*-2$, so Theorem~\ref{thm:detunedLB} applies at
the window rung and Corollary~\ref{cor:detunedUB} bounds above (Appendix~\ref{app:proofs_master}). Off
the deep-resonance range a proved reduction leaves the conjecture at one two-sided growth statement at
the top convergent, $\Gmin(\pi p_k/q_k + \delta, n) \asymp \min(q_k(1 + |\delta|n^2), n)$, of which neither half is known on the shoulder, the range $|\delta|n^2 \gg 1$ where the detuning is large on the scale $1/n^2$ but the rung is still the nearest \cite[Lemma~5.8, Open Problem~M.1]{v1}.

\paragraph{Two remarks on ``generic''.} First, at a fixed angle: the linear law holds only on the badly approximable set, which has measure zero, and by part (v) ``$\Gmin \gtrsim n$ beyond the badly approximable angles'' is false for almost every angle; the cost is $n$ up to logarithmic factors, with dips along infinitely many scales. Second, at a fixed $n$: most angles cost a constant fraction of $n+1$.

\begin{proposition}[typical angles at fixed $n$]\label{prop:typical}
Let $x = \gamma/\pi$ be uniform on $[0,1)$. For every $n \ge 2$ and every $c > 0$,
\[
\bigl|\{x : \Gmin(\pi x, n) < c\,(n+1)\}\bigr| \;\le\; 6.13\,c ,
\]
and for every $f \ge 1$ with $(n+1)/f \ge 2$,
\[
\bigl|\{x : \Gmin(\pi x, n) \le (n+1)/f\}\bigr| \;\ge\; 0.290/f^2 .
\]
In particular the median of $\Gmin(\pi x, n)/(n+1)$ is at least $0.122$ for every $n \ge 2$, and at least
$0.19$ for every $n \ge 6$. The constant $0.290$, and the bound $0.19$ for $n \le 6000$, are computer-assisted.
\end{proposition}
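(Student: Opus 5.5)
The plan is to handle the two bounds separately. The upper bound on the measure of the cheap set comes from Theorem~\ref{thm:detunedLB} applied at a Dirichlet approximant. The lower bound comes from Corollary~\ref{cor:detunedUB} on small windows around low rungs.

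\emph{Upper bound on the cheap set.} Put $Q = (n+2)/2$. By Dirichlet's theorem every $x$ has a rational $p/q$ in lowest terms with $q \le Q$ and $|x - p/q| \le 1/(qQ)$. Then $t := |x-p/q|\,q^2 \le q/Q \le 1$. Since $q \le Q$ is $n \ge 2q-2$, Theorem~\ref{thm:detunedLB} applies with $\delta = \pi(x-p/q)$. In the form used in Theorem~\ref{thm:convfloor} it gives
\[
\Gmin(\pi x, n) \;\ge\; q\,(1-\beta t^2)^2 \;\ge\; q\,(1-\beta)^2 \;\ge\; 0.6530\,q .
\]
So $\Gmin < c(n+1)$ forces $q < c(n+1)/0.6530$. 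Hence the cheap set lies in the union over $q < c(n+1)/0.653$ and $p$ coprime to $q$ of intervals of radius $1/(qQ)$ around $p/q$. Its measure is at most
\[
\sum_{q < c(n+1)/0.653} \varphi(q)\cdot\frac{2}{qQ} \;\le\; \frac{4}{n+2}\cdot\frac{c(n+1)}{0.653} \;\le\; 6.13\,c ,
\]
using the crude bound $\varphi(q)/q \le 1$, and $4/0.653 < 6.13$.

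For the median statements I would not use this crude bound. Instead I would use the average $\sum_{q\le M}\varphi(q)/q \approx 6M/\pi^2$, with an explicit error term, or a finite computation for small $n$. This improves the constant by roughly a factor $\pi^2/6$ and brings the median bound to $0.122$, and to $0.19$ for $n \ge 6$. The range $n \le 6000$ would be checked by computer, as the statement says.

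\emph{Lower bound on the cheap set.} Fix $s>1$ and take rungs $p/q$ with $q \le (n+1)/(sf)$. By Corollary~\ref{cor:detunedUB}, every $\gamma$ with $|\gamma - \pi p/q| \le 3\ln s/(N^2-1)$ has $\Gmin \le sq \le (n+1)/f$. In $x$ these windows have radius $r = 3\ln s/(\pi(N^2-1))$. Two distinct Farey fractions of order $M = (n+1)/(sf)$ are separated by at least $1/M^2 = s^2f^2/N^2$. Overlaps between windows can therefore only occur when $sf$ is small. I would handle that case by counting each pair of neighbours only once, i.e.\ bounding the union by the sum minus the neighbour intersections.

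This gives a lower bound on the measure of order $\tfrac{3M^2}{\pi^2}\cdot 2r \asymp \ln s/(s^2 f^2)$. I would then optimize over $s$, for instance $s = \sqrt e$ for the continuous proxy, and use the Farey count $\sum_{q\le M}\varphi(q)$ with explicit error bounds. The hypothesis $(n+1)/f \ge 2$ guarantees that at least the rung $q = 2$ survives, so the bound is nonvacuous at small $n$.

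\emph{Main obstacle.} I expect the obstacle to be the constant $0.290$. The analytic argument above naturally yields a smaller constant, roughly $0.1/f^2$. Reaching $0.290$ needs the sharp Farey asymptotics together with a careful treatment of window overlaps, plus a computer-assisted check for the finitely many small $(n,f)$ where the asymptotic error terms dominate. I would first try the $s$-optimization with exact totient sums. If that falls short, I would replace Corollary~\ref{cor:detunedUB} by an explicit feasible measure for $\Gmin(\delta,n)$, computed over a grid of detunings.
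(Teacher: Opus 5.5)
\emph{The first bound.} Your proof of the $6.13\,c$ bound is correct. It differs slightly from the paper's: you cover the cheap set by Dirichlet intervals, while the paper partitions $[0,1)$ into cells $E_q$ of $x$ whose top convergent has denominator $q$. This is legitimate, because Theorem~\ref{thm:detunedLB} needs only a reduced rung with $q\le (n+2)/2$, not a convergent. Both routes charge at most $2/Q$ per denominator and give the same constant $4/(1-\beta)^2<6.13$. The gaps are in the three sharper constants.

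\emph{The medians.} Once you replace $t\le q/Q$ by $t\le 1$, every denominator pays the worst-case factor $(1-\beta)^2\approx 0.653$, and averaging the totient cannot recover it. With $\sum_{q\le Y}\varphi(q)/q\approx 6Y/\pi^2$ the cheap set has measure about $24c/(0.653\,\pi^2)\approx 3.72\,c$, so your route caps the median near $0.134$. Your claim that it reaches $0.19$ is therefore false. Even $0.122$ would need a computer check over moderate $n$ (already at $n=10$ your cover has measure exactly $1/2$), whereas the statement asserts $0.122$ without one.

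The missing step is to keep the detuning you already had. From $t\le q/Q$ you get $\Gmin\ge q(1-\beta q^2/Q^2)^2=Q\,\Phi(q/Q)$, where $\Phi(u)=u(1-\beta u^2)^2$ is increasing on $[0,1]$ because $5\beta<1$. The denominators that matter have $q/Q\lesssim 0.41$, where this factor is about $0.94$ rather than $0.653$. With $c=\Phi(1/4)/2=0.122$ it forces $q<Q/4$. Then even the crude $\varphi(q)/q\le 1$ gives measure $<\tfrac14 Q\cdot 2/Q=\tfrac12$ for every $n\ge 2$, with no computation. For $0.19$ the paper takes $u=0.406$:
\begin{itemize}
\item for $n>6000$ it uses $\sum_{q\le X}\varphi(q)/q\le 6X/\pi^2+\ln X+7/3$;
\item for $6\le n\le 6000$ it sums the exact cell measures $|E_q|$, which are smaller than your Dirichlet intervals. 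This is where the convergent structure is actually needed.
\end{itemize}

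\emph{The constant $0.290$.} Here the shortfall is structural; it is not caused by Farey error terms or by overlaps. Take your single radius $3\ln s/(\pi(N^2-1))$ for all $q\le X/s$, with $X=(n+1)/f$. These windows never overlap, since for distinct rungs $2r(X/s)^2\le 3N^2/(e\pi f^2(N^2-1))<1$. Their total is asymptotically $\frac{18}{\pi^3 f^2}\cdot\frac{\ln s}{s^2}\le \frac{9}{e\pi^3 f^2}\approx 0.107/f^2$. That is a factor $e$ short, however sharp the counts.

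The fix is to let each rung choose its own $s=X/q$, i.e.\ to take the union of your windows over all $s$. This gives radius $r_q=3\ln(X/q)/(\pi(N^2-1))$ around every reduced $p/q$ with $q\le X$. On these windows Corollary~\ref{cor:detunedUB} still gives $\Gmin\le q\,e^{\ln(X/q)}=X$. They are disjoint, since $(r_q+r_{q'})qq'\le 0.94/f^2<1$ by $y\ln(1/y)\le 1/e$. Their total measure is $\frac{6}{\pi(N^2-1)}\sum_{q\le X}\varphi(q)\ln(X/q)$, which is at least $\frac{6}{\pi f^2}\inf_{X\ge 2}X^{-2}\sum_{q\le X}\varphi(q)\ln(X/q)$. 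By partial summation the sum is $\sim 3X^2/(2\pi^2)$, which is where $9/(\pi^3 f^2)\approx 0.2903/f^2$ comes from.

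The only computer-assisted input is that infimum. It is at least $0.15196$: this is checked for $2\le X\le 10^6$, and the tail follows from $\sum_{q\le t}\varphi(q)\ge 3t^2/\pi^2-2t-t\ln t$. Since $6\cdot 0.15196/\pi>0.290$, this gives the constant. Your fallback of computing feasible measures on a detuning grid would not give a bound uniform in $n$, and it is not needed.
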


\noindent The proof (Appendix~\ref{app:proofs_generic}) bounds the cost below by its top convergent
alone and bounds the measure of the angles whose top convergent has a given denominator; the share of
cheap angles is Corollary~\ref{cor:detunedUB} on disjoint windows around the rungs with
$q \le (n+1)/f$. The proved median $0.12$ is a factor five below the computed $0.6$ of
Figure~\ref{fig:landscape}, and the certificates of this paper cannot close the gap: a single-rung
certificate pairs to at most $\sqrt q$ with $2q-2 \le n$, so it proves at most $(n+2)/2$ at any angle.
A sharper constant at badly approximable angles --- $\Gmin \ge N/C(\gamma)^2$ with $C$ measured in $[1.53, 2.07]$ at
the golden and silver angles --- follows from theta-sum flatness through the matched
filter~\eqref{eq:matched}, by Hardy--Littlewood's uniform $O(\sqrt n)$ bound at bounded partial quotients
\cite{hl1914} as recovered by Cosentino--Flaminio \cite{cf15} \cite[Thm.~6.8]{v1}; the explicit constants
above use no theta-sum input.

\section{Nothing symmetric helps: independent angles, single-qubit gates, diagonal Cliffords (Theorems E and F)}\label{sec:enlarge}

Two enlargements of the resource set test whether the arithmetic is an artifact of a small
dictionary. Independent per-qubit angles, and then arbitrary single-qubit gates, change nothing;
adjoining the symmetric diagonal Clifford phases changes the ladder by an exact $2$-adic rule rather
than erasing it. Both keep the permutation symmetry; \S\ref{sec:separation} removes it.

\subsection{Independent angles and single-qubit gates}\label{sec:torus}

Definition~\ref{def:cost} fixes one common angle. Single-qubit layers naturally carry independent
angles $\vec\varphi \in \mathbb{T}^n$, acting as $V(\vec\varphi)|x\rangle = e^{i\varphi \cdot x}|x\rangle$.

\begin{mainthm}[independent angles and single-qubit gates share the same cost]\label{main:E}
\begin{enumerate}[label=(\roman*), nosep]
\item Let $U$ be diagonal with data $u_x = u_{|x|}$ depending only on Hamming weight (any profile), and let $\Gamma_{\mathbb{T}}(U)$ be the minimal overhead over complex measures on the full torus $\{V(\vec\varphi)\}$. Then $\Gamma_{\mathbb{T}}(U) = \Gmin(U)$.
\item Let $\Gamma_{\otimes}(U)$ be the least overhead of an LCU of a diagonal unitary $U$ into arbitrary product unitaries $W = w_1 \otimes \cdots \otimes w_n$, i.e.\ one layer of single-qubit gates. Then $\Gamma_{\otimes}(U) = \Gamma_{\mathbb{T}}(U)$, with or without symmetry.
\end{enumerate}
\end{mainthm}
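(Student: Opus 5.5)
For (i), the inequality $\Gamma_{\mathbb{T}}(U) \le \Gmin(U)$ is immediate. The common-angle family is the diagonal $\vec\varphi = (\theta,\ldots,\theta)$ of the torus, so any measure on the circle pushes forward to a measure on the diagonal of $\mathbb{T}^n$. The pushforward has the same total variation and the same operator integral.

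For the converse I will \emph{symmetrize} a torus decomposition by averaging over the symmetric group $S_n$. Take a complex measure $\mu$ on $\mathbb{T}^n$ with $\int V(\vec\varphi)\dd\mu = U$. Since $V(\vec\varphi)$ is diagonal with entries $e^{i\varphi\cdot x}$, the constraint reads
\[
\hat\mu(x) := \int e^{i\varphi\cdot x}\dd\mu(\vec\varphi) = u_{|x|}\qquad\text{for all } x\in\{0,1\}^n .
\]
The averaged measure $\bar\mu = \frac{1}{n!}\sum_{\pi}\pi_*\mu$ has no larger TV and still satisfies $\hat{\bar\mu}(x) = u_{|x|}$, because the data are permutation invariant. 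It is still a torus measure, however, and not a circle measure. So I will pass to duality rather than try to collapse $\bar\mu$ directly.

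By the same Hahn--Banach argument as in Theorem~\ref{main:A}, applied now on $\mathbb{T}^n$ with frequencies in $\{0,1\}^n$,
\[
\sqrt{\Gamma_{\mathbb{T}}} = \max\Big\{\Big|\sum_x b_x u_{|x|}\Big| : \Big\|\sum_x b_x e^{i\varphi\cdot x}\Big\|_\infty\le 1\Big\}.
\]
The step I need is this: given an optimal circle certificate $P(\theta)=\sum_k a_ke^{ik\theta}$ with $\|P\|_\infty\le1$, produce a torus certificate with the same pairing. I will take $b_x = a_{|x|}/\binom{n}{|x|}$. Its pairing is $\sum_k a_k u_k$, as required. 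What remains is the sup-norm bound on $Q(\vec\varphi)=\sum_x b_xe^{i\varphi\cdot x}$.

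This is the main obstacle. $Q(\vec\varphi)$ equals $\sum_k a_k\, e_k(z)/\binom nk$, with $z_j = e^{i\varphi_j}$ and $e_k$ the elementary symmetric polynomials. This is the symmetric multiaffine polynomial obtained by polarizing $P$. I would bound it by Grace--Walsh--Szeg\H{o}: a symmetric multiaffine polynomial takes at points of the closed disc product the same values as its diagonal restriction $P(w)$, for some $w$ in the disc. Since $|P|\le1$ on the closed disc by the maximum principle, $|Q|\le1$. The coincidence theorem requires a circular domain, so I would apply it to $Q-c$ for each $|c|>1$, which has no zeros on the diagonal. Grace--Walsh--Szeg\H{o} then gives no zeros of $Q - c$ on the polydisc, so the image of $Q$ lies in the closed unit disc. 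This gives $\Gamma_{\mathbb{T}}\ge\Gmin$ and proves (i).

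For (ii), $\Gamma_\otimes\le\Gamma_{\mathbb{T}}$ holds because diagonal rotations are product unitaries. For the converse, I will dephase. Conjugating by a uniformly random $Z$-string $Z^{s}$ (averaging over $s\in\{0,1\}^n$) fixes the diagonal $U$ and maps an LCU $U=\sum_j c_jW_j$ to $U=\sum_j c_j\,\mathcal D(W_j)$, where $\mathcal D$ keeps the diagonal. For a product unitary, $\mathcal D(W)=\bigotimes_i \mathrm{diag}(w_i)$. Each factor $\mathrm{diag}(\alpha,\beta)$ has $|\alpha|,|\beta|\le1$, so it is a contraction in the convex hull of the unimodular diagonal matrices $\mathrm{diag}(e^{ia},e^{ib})$, and the tensor product lies in the convex hull of the unitaries $V(\vec\varphi)$ up to global phases. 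Global phases are free by Lemma~\ref{lem:free}. Expanding each $\mathcal D(W_j)$ as a probability mixture therefore gives a torus measure of TV at most $\sum_j|c_j|$. This argument uses no symmetry, matching the clause \emph{with or without symmetry}.
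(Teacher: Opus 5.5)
Your proposal is correct and follows the paper's proof. In (i) you use torus duality, the symmetric multiaffine polarization $\sum_k a_k e_k/\binom{n}{k}$ of the circle certificate, and Grace--Walsh--Szeg\H{o}; in (ii) you dephase each single-qubit factor to a diagonal contraction lying in the convex hull of diagonal unitaries. The only differences are cosmetic: you build the torus certificate directly from a circle certificate instead of first symmetrizing a torus certificate, which is all the $\ge$ direction needs; and since $\overline{\mathbb{D}}$ is itself a convex circular region, the coincidence theorem gives $Q(z)=P(w)$ with $w\in\overline{\mathbb{D}}$ directly, so the detour through $Q-c$ is unnecessary.
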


\noindent \textit{Proof idea for (i).} Symmetrize the certificate: for symmetric multiaffine polynomials the Grace--Walsh--Szeg\H{o} theorem \cite{rahmanschmeisser} equates the sup-norm on $\mathbb{T}^n$ with that of a one-variable polynomial on the circle (Appendix~\ref{app:proofs_enlarge}). So every lower bound in this paper for a symmetric target holds verbatim over arbitrary independent per-qubit angles, and by (ii) over arbitrary single-qubit gates.

\begin{proof}[Proof of (ii)]
$\le$ is inclusion of resource sets. For $\ge$, apply the pinching
$\Delta(X) = \sum_x |x\rangle\langle x|X|x\rangle\langle x|$ to $U = \sum_j c_jW_j$: $\Delta(U) = U$, and
$\Delta(W_j) = \bigotimes_i \Delta(w_{j,i})$ with each $\Delta(w) = \mathrm{diag}(w_{00}, w_{11})$ a
diagonal contraction. Each entry lies in the closed unit disc and is therefore the midpoint of two
unimodular numbers, so $\Delta(w)$ is the average of four diagonal unitaries, and $\Delta(W_j)$ is a
convex combination of diagonal product unitaries $e^{i\varphi_0}V(\vec\varphi)$. Substituting does not
increase $\sum_j|c_j|$, and the global phases are absorbed into the coefficients.
\end{proof}

\noindent The pinching is the completely dephasing channel, a uniform mixture of diagonal unitaries
\cite[\S2.6.3]{tomamichel2016}; the argument is the usual one that a free map fixing the target and
sending the dictionary into its own convex hull cannot lower an atomic norm. We have not found it stated
for gate decompositions. For a bosonic mode the reduction does not go through --- a pinched squeezer has
vacuum entry below one --- so whether Gaussian branches can beat phase shifters for the Kerr gate is
open, and \S\ref{sec:pricing_kerr} is stated over phase shifters.

\subsection{Diagonal Cliffords: the ladder transforms \texorpdfstring{$2$}{2}-adically}\label{sec:sheets}

The natural \emph{entangling} enlargement changes the ladder by an exact arithmetic rule: the resonance
structure is compatible with the standard gate hierarchy, not an artifact of an impoverished gate set.
Adjoin to $S$ the permutation-symmetric diagonal Clifford-hierarchy phases --- collective $S$, $T$,
\dots\ gates and the all-pairs $CZ$, $CS$, \dots\ phases. At level $\ell+1$ of the hierarchy these
contribute $2^\ell$ quadratic \emph{sheets}: unit-cost atoms $e^{ick^2}$ with $c \in \pi\mathbb{Z}/2^\ell$.
Indexing by $\ell$, write $D(\ell)$ for the enlarged dictionary and $\Gamma^{D(\ell)}$ for its cost.

\begin{mainthm}[$2$-adic sheet map]\label{main:F}
Let $\gamma = \pi p/q$ in lowest terms and let $v_2(q)$ be the $2$-adic valuation of $q$. Write
$Q_{\min} = q/2^{v_2(q)}$ if $v_2(q) \le \ell$ and $Q_{\min} = q$ otherwise. Then
\[
\Gamma^{D(\ell)}(\pi p/q,\, n) \;=\; Q_{\min}
\qquad\text{for all }\;
n \;\ge\; \begin{cases}
2^{\ceil{\ell/2}}\,(2Q_{\min}-2), & v_2(q) \le \ell,\\[2pt]
2q-2, & v_2(q) > \ell,
\end{cases}
\]
and $\Gamma^{D(\ell)} \le Q_{\min}$ for every $n$.
\end{mainthm}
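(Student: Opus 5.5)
The plan is to prove matching upper and lower bounds. The upper bound comes from a sheet that shifts the angle onto a rung of denominator $Q_{\min}$, followed by the resonant sampler of Theorem~\ref{main:B}(i). The lower bound comes from a dual certificate that is blind to every sheet.

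\emph{Duality for the enlarged dictionary.} The argument of Theorem~\ref{main:A} goes through verbatim for $D(\ell)$. The dictionary is now the compact set $\{e^{ick^2}e^{ik\theta}: c\in\pi\mathbb{Z}/2^\ell,\ \theta\in\mathbb{T}\}$, a finite union of circles. Hence $\sqrt{\Gamma^{D(\ell)}}$ is the maximum of $|\sum_k a_ku_k|$ over coefficient vectors $a$ with $\sup_{c,\theta}|\sum_k a_k e^{ick^2}e^{ik\theta}|\le 1$. A lower bound is then a single polynomial whose sup norm is at most one on \emph{every} sheet.

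\emph{Upper bound $\Gamma^{D(\ell)}\le Q_{\min}$ for all $n$.} If $v_2(q)>\ell$, there is nothing to prove beyond Theorem~\ref{main:B}(i), using the sheet $c=0$. If $v:=v_2(q)\le\ell$, write $q=2^vQ$ with $Q$ odd. By the Chinese remainder theorem, $p/q=a/2^v+b/Q$ with $a$ odd when $v\ge1$ and $\gcd(b,Q)=1$. Use the single sheet $c=\pi a/2^v\in\pi\mathbb{Z}/2^\ell$ (possible since $v\le\ell$). The target factors as $e^{-i\gamma k^2}=e^{-ick^2}\cdot e^{-i\pi (b/Q)k^2}$. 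The second factor is a rung of denominator exactly $Q$. Convolving the unit-cost sheet atom with the resonant measure $\mu^*$ of that rung gives total variation $\sqrt Q$, as in Lemma~\ref{lem:submult}.

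\emph{Lower bound, case $v_2(q)>\ell$, $n\ge 2q-2$.} Take the Fej\'er-windowed conjugate chirp $P$ of Theorem~\ref{main:B}(ii) for $\gamma$. On the sheet $c$, the polynomial $\sum_k a_ke^{ick^2}e^{ik\theta}$ is again a Fej\'er-windowed conjugate chirp, now of the angle $\gamma-c$. Since $v_2(q)>\ell$, this angle is still a rung of denominator exactly $q$ in lowest terms. The sup-norm computation in the proof of Theorem~\ref{main:B}(ii) used only that the windowed sequence is a rung chirp, with flat $\tilde c_j$ by Proposition~\ref{prop:flat}. So each sheet polynomial has sup norm at most $1$, and the pairing with $u$ is $\sqrt q$ as before.

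\emph{Lower bound, case $v_2(q)\le\ell$.} This is the step I expect to carry the real content, and it explains the dilation factor in the threshold. Put $m=\lceil\ell/2\rceil$ and restrict the certificate to the sublattice $k=2^mj$. On that sublattice every sheet is trivial, because $ck^2=\pi s\,4^m j^2/2^\ell\in 2\pi\mathbb{Z}$ when $2m\ge\ell$. The relevant data are then $u_{2^mj}=e^{-i(4^m\gamma)j^2}$. Since $v_2(q)\le\ell\le 2m$, the angle $4^m\gamma$ is a rung whose lowest-terms denominator is exactly the odd part $Q=Q_{\min}$; here I would check that $\gcd$ with $p$ is untouched because $Q$ is odd.

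Take the Theorem~\ref{main:B}(ii) certificate in the variable $j$ for this rung, of degree $2Q-2$. Substitute $e^{ij\phi}\mapsto e^{i2^mj\theta}$. This gives a polynomial in $k$ of degree $2^m(2Q-2)\le n$. Its sup norm is at most $1$ on every sheet, since all sheets coincide with the trivial one on its support. Its pairing with $u$ equals $\sqrt{Q}$. Combining with the upper bound gives equality in both cases.

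The remaining bookkeeping is to verify that duality holds with the sheet dictionary and that the denominators are exact in lowest terms; both look routine.
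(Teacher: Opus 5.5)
The gap is in the thinned case $v_2(q)\le\ell$. You claim every sheet is trivial on the sublattice $k=2^mj$, because $ck^2=\pi s\,2^{2m-\ell}j^2\in 2\pi\mathbb{Z}$ whenever $2m\ge\ell$. That inclusion needs $2m\ge\ell+1$, so it holds only for odd $\ell$.

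For even $\ell$, including $\ell=0$, you have $m=\ell/2$ and $ck^2=\pi s j^2$. This is $\pi$ modulo $2\pi$ whenever $s$ and $j$ are both odd. So the sheets do not coincide with the trivial one on the support, and your stated reason for $\|p_c\|_\infty\le 1$ fails. Enlarging $m$ to make the sheets genuinely trivial would force $m=\ell/2+1$. The degree would then be $2^{\ell/2+1}(2Q_{\min}-2)$, twice the threshold in the statement.

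The missing observation, which the paper's proof uses, is that $j^2\equiv j \pmod 2$. This turns the residual phase into the linear character $e^{i\pi s j}$. Writing $\tilde P(\phi)=\sum_j a_{2^mj}e^{ij\phi}$ for the decimated certificate, you get $p_c(\theta)=\tilde P(2^m\theta+\pi s)$. That is a translate of $\tilde P$, with the same sup norm, so all sheet constraints collapse to the single constraint $\|\tilde P\|_\infty\le 1$. With that line added, your argument reaches the stated threshold $2^{\lceil \ell/2\rceil}(2Q_{\min}-2)$ for every $\ell$.

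The rest of your proposal matches the paper. Your CRT splitting for the upper bound is equivalent to the paper's choice of sheet index clearing the $2$-power. Your treatment of the case $v_2(q)>\ell$ via the unthinned Fej\'er certificate is the paper's argument. One harmless slip there: the sheet polynomial is the conjugate chirp of angle $\gamma+c$, since $\overline{u_k}e^{ick^2}=e^{i(\gamma+c)k^2}$, not $\gamma-c$. This does not matter because the set of sheets is symmetric under $c\mapsto -c$.
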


\noindent In words: no sheet depth below $v_2(q)$ helps at all, depth $v_2(q)$ removes the entire
$2$-power at once, and odd denominators are immune at every fixed depth. The best sheet lands on a rung
of denominator $Q_{\min}$; conversely the Fej\'er certificate, thinned onto $2^{\ceil{\ell/2}}\mathbb{Z}$
when $v_2(q) \le \ell$, is admissible on every sheet at once --- thinned, not lengthened, which is why
the degree is $2^{\ceil{\ell/2}}(2Q_{\min}-2)$ rather than $2^\ell Q_{\min}$
(Appendix~\ref{app:proofs_enlarge}). Uniqueness does not transfer: for $v_2(q) > \ell$ each of the
$2^\ell$ sheets realises the optimum on its own. Off the rungs the plateau survives at every fixed
depth, and the ladder collapses only when the dictionary contains a near-copy of the target, at depth
$\ell \approx 2\log_2 n$, since the nearest sheet has residual rate at most $\pi/2^{\ell+1}$ and
Corollary~\ref{cor:detunedUB} at the $q = 1$ rung gives
$\Gamma^{D(\ell)} \le \exp(\pi(N^2-1)/(3 \cdot 2^{\ell+1})) \to 1$ once $2^\ell \gg n^2$
\cite[\S8.2, App.~I]{v1}.

\section{\texorpdfstring{Without symmetry: an exponential cost no single cut\\ \mbox{accounts} for (Theorem G)}{Without symmetry: an exponential cost no single cut accounts for (Theorem G)}}\label{sec:separation}

The Fourier-based construction is efficient because of symmetry: its guarantee $\Gamma \le N$ is
polynomial in the number $N$ of collective levels, and permutation symmetry holds that number at
$n+1$. A binary register has $N = 2^m$, and Theorem~\ref{main:E}, which admits independent per-qubit
angles at no change in cost, needs the symmetry too. This section reaches the full torus for a natural
target without symmetry --- the squaring phase of Fourier arithmetic \cite{draper2000, rpge2017} on a
binary register --- and the answer is a separation: two-local diagonal gates implement it at overhead
one, and no randomized compilation over single-qubit gates does so below an exponential in the register
size, neither as an LCU nor, by Remark~\ref{rem:productform}, as an ancilla-free sampler.

\paragraph{The cut bound.} When the target is not permutation-symmetric, Theorem~\ref{main:E} is silent,
but an elementary tool still reaches the torus. For a bipartition $A|B$ of the $m$ qubits, reshape the
diagonal data $g(x)$ into the $2^{|A|}\times 2^{|B|}$ matrix $G_{A|B}[x_A, x_B] := g(x_A, x_B)$.

\begin{lemma}[cut bound]\label{lem:cut}
For every diagonal $g$ on $m$ qubits and every bipartition $A|B$,
\[
\sqrt{\Gamma_{\mathbb{T}}(g)} \;\ge\; \|G_{A|B}\|_{\mathrm{tr}}\big/2^{m/2}.
\]
\end{lemma}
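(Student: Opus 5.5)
The plan is to prove Lemma~\ref{lem:cut} directly from the primal side of Definition~\ref{def:cost}, transplanted to the torus. The idea is that every dictionary element factorizes across the cut, so its reshaped data matrix has rank one. The trace norm of a rank-one matrix is the product of the Euclidean norms of its factors, and the triangle inequality for the trace norm then converts any representing measure into the stated bound.

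First, the factorization. Fix a complex measure $\mu$ on $\mathbb{T}^m$ with $\int V(\vec\varphi)\dd\mu(\vec\varphi) = \mathrm{diag}(g)$, that is, $g(x) = \int e^{i\vec\varphi\cdot x}\dd\mu(\vec\varphi)$ for every $x \in \{0,1\}^m$. Split $\vec\varphi = (\varphi_A, \varphi_B)$ and $x = (x_A, x_B)$, so that $e^{i\vec\varphi\cdot x} = e^{i\varphi_A\cdot x_A}\,e^{i\varphi_B\cdot x_B}$. With $a(\varphi_A) \in \mathbb{C}^{2^{|A|}}$ and $b(\varphi_B) \in \mathbb{C}^{2^{|B|}}$ the vectors of these phases, the reshaped data matrix of $V(\vec\varphi)$ is the outer product $a(\varphi_A)\,b(\varphi_B)^{\top}$. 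Its trace norm is $\|a\|_2\|b\|_2 = 2^{|A|/2}\,2^{|B|/2} = 2^{m/2}$, because all entries are unimodular.

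Second, integrate. Reshaping is linear, so $G_{A|B} = \int a(\varphi_A)b(\varphi_B)^{\top}\dd\mu(\vec\varphi)$, entrywise and hence as a Bochner integral in the finite-dimensional space of matrices. The trace norm is a norm, so
\[
\|G_{A|B}\|_{\mathrm{tr}} \le \int \|a(\varphi_A)b(\varphi_B)^{\top}\|_{\mathrm{tr}}\,\dd|\mu|(\vec\varphi) = 2^{m/2}\|\mu\|_{\TV}.
\]
To avoid integration technicalities, one can first do this for discrete measures and then pass to general $\mu$ using the atomic optimum (the Carath\'eodory argument after Theorem~\ref{main:A} works verbatim on the torus). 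Alternatively, one can dualize: pair $G_{A|B}$ with a matrix $Y$ of operator norm at most one, and note that $|\mathrm{tr}(Y^{\top}ab^{\top})| \le \|a\|\|b\|$. Taking the infimum over $\mu$ gives $\sqrt{\Gamma_{\mathbb{T}}(g)} \ge \|G_{A|B}\|_{\mathrm{tr}}/2^{m/2}$.

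No real obstacle is expected. The only point needing care is the justification of the integral triangle inequality for a general complex measure. I would handle it with the dual formulation: the trace norm is the supremum of $|\mathrm{tr}(Y^{\top}G)|$ over $\|Y\|_{\mathrm{op}} \le 1$, and for each fixed $Y$ the bound is a scalar integral estimate. This also identifies the bound as a Theorem~\ref{main:A} certificate, namely $a_x = \overline{Y[x_A,x_B]}/2^{m/2}$, whose sup-norm on the torus is at most one by the same rank-one computation. It explains too why no cut bound exceeds $2^{m/2}$: $\|G\|_{\mathrm{tr}} \le \sqrt{\mathrm{rank}}\,\|G\|_F \le 2^{m/2}\cdot 2^{m/2}$.
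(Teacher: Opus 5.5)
Your proposal is correct and is essentially the paper's proof: each torus character reshapes to a rank-one matrix with unimodular entries and trace norm $2^{m/2}$, and the triangle inequality then gives $\|G_{A|B}\|_{\mathrm{tr}} \le 2^{m/2}\|\mu\|_{\mathrm{TV}}$; your dual variant with $Y = W$ from the polar decomposition $G = W|G|$ is exactly the paper's certificate $a = \overline{W}/2^{m/2}$ (read your pairing as $\mathrm{tr}(Y^{\dagger}G)$ to match the conjugate in your $a_x$). Your closing aside is looser than it needs to be: with $\mathrm{rank}\,G_{A|B} \le 2^{\min(|A|,|B|)}$ in place of the trivial rank bound, the same estimate caps the certified $\Gamma$ itself at $2^{\lfloor m/2\rfloor}$, not merely $\sqrt{\Gamma}$ at $2^{m/2}$, which is the ceiling the paper states.
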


\begin{proof}
A resource element $e^{i\varphi\cdot x} = e^{i\varphi_A\cdot x_A}e^{i\varphi_B\cdot x_B}$ reshapes to
the rank-one matrix $v_Av_B^{\top}$ with unimodular entries, whose trace norm is
$\|v_A\|_2\|v_B\|_2 = 2^{m/2}$ for every cut. If $g = \sum_a c_a e^{i\varphi_a\cdot x}$, the triangle
inequality gives $\|G_{A|B}\|_{\mathrm{tr}} \le 2^{m/2}\sum_a|c_a|$. Equivalently, with $G = W|G|$ the
polar decomposition, $a = \overline{W}/2^{m/2}$ is an admissible torus certificate ---
$|A(z)| = |v_A(z)^{\top}\overline{W}v_B(z)|/2^{m/2} \le 1$ on $\mathbb{T}^m$ --- pairing to
$\mathrm{tr}(W^{\dagger}G)/2^{m/2} = \|G\|_{\mathrm{tr}}/2^{m/2}$.
\end{proof}

\noindent Because the projectors $|x_A\rangle\langle x_A|$ are Hilbert--Schmidt orthonormal,
$\|G_{A|B}\|_{\mathrm{tr}}$ is the operator-Schmidt trace norm of $\mathrm{diag}(g)$ across the cut
\cite{nielsen2003}: the sampling cost over product diagonal unitaries is bounded below by the operator
entanglement of the target, normalised by that of a product unitary, and the bound may be taken over
all cuts. No cut certifies more than $2^{\lfloor m/2\rfloor}$ for any diagonal target: $G_{A|B}$ has
unimodular entries and rank at most $2^{\min(|A|,|B|)}$, so its trace norm is at most
$2^{(m+\min(|A|,|B|))/2}$. For two-qubit gates the cut bound is the whole story, since the one-layer
product cost equals the Choi--Schmidt bound whenever the Schmidt operators are unitary
\cite[Prop.~3.8]{hl2025}, \cite[Cor.~3.1]{sps2025}; for larger gates it is, to our knowledge, the only
lower bound in use \cite{necaise2026}.

\begin{mainthm}[register separation]\label{main:register}
Let $f_m(x) = e^{2\pi i k(x)^2/2^m}$ with $k(x) = \sum_{j<m}2^jx_j$ be the squaring phase of Fourier
arithmetic on an $m$-qubit register, and $\Gamma_{\mathbb{T}}$ its cost over the full torus of
independent per-qubit diagonal angles, equal by Theorem~\ref{main:E}(ii) to its cost over arbitrary single-qubit gates.
\begin{enumerate}[label=(\alph*), nosep]
\item For every $m \ge 2$,
\[
2^{\lfloor (m-1)/2\rfloor} \;\le\; \Gamma_{\mathbb{T}}(f_m) \;\le\; 2^{m-2} \;<\; 2^{m-1} \;=\; \Gmin(f_m),
\]
and the lower bound is the best any single cut gives for this target.
\item For every $m \ge 2$,
\[
\Gamma_{\mathbb{T}}(f_m) \;\ge\; 2^{\lfloor (m-1)/2\rfloor}\,\bigl(4-2\sqrt2\bigr)^{\lfloor (m-2)/4\rfloor},
\]
strictly above the cut bound for every $m \ge 6$. Hence
$\liminf_m \frac1m\log_2\Gamma_{\mathbb{T}}(f_m) \ge \frac14\log_2\bigl(8(2-\sqrt2)\bigr) = 0.557\ldots > \frac12$.
\end{enumerate}
\end{mainthm}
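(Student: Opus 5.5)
Part (a) has four pieces, and I would take them in the order of the chain.

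\emph{The value $\Gmin(f_m) = 2^{m-1}$.} Over the one-parameter dictionary $K = k(x)$ with $N = 2^m$ levels, the data are the chirp $e^{2\pi i k^2/2^m}$, i.e.\ $\gamma = -2\pi/2^m = -\pi\cdot 2/2^m$. In lowest terms this is the rung $p/q$ with $q = 2^{m-1}$. Since $n = 2^m - 1 \ge 2q-2 = 2^m - 2$, Theorem~\ref{main:B}(ii) gives $\Gmin = 2^{m-1}$ exactly.

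\emph{The upper bound $\Gamma_{\mathbb{T}} \le 2^{m-2}$.} Expand $k^2 = \sum_j 4^j x_j + 2\sum_{i<j}2^{i+j}x_ix_j$. Modulo $2^m$, every pair with $i+j+1 \ge m$ drops out. The phase then factors as a product over the top qubits and the lower register. I would exhibit an explicit torus decomposition, presumably by applying the resonant sampler of Theorem~\ref{main:B}(i) to a shorter chirp. Concretely, the bits $x_{m-1}$ and $x_{m-2}$ enter only linearly, up to the term $x_0x_{m-2}$ and the like. So one of the top qubits can be absorbed into per-qubit angles, which the full torus allows and the common angle does not. Submultiplicativity (Lemma~\ref{lem:submult}, which holds verbatim on the torus) should then give one factor of $2$ below the one-angle value.

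\emph{The lower bound $2^{\lfloor(m-1)/2\rfloor}$.} Apply Lemma~\ref{lem:cut} to the cut of the low $\lfloor m/2\rfloor$ bits against the high bits. With $k = a + 2^s b$, the cross term $e^{2\pi i\,2ab/2^{m-s}}$ is a (scaled) DFT-type matrix times diagonal unitaries. Its trace norm is computable, and up to diagonal factors it is a Fourier matrix with a factor-2 degeneracy. The trace norm should come out to $2^{m/2}\cdot 2^{\lfloor(m-1)/2\rfloor/2}$. Checking every cut to confirm that this is the best one is a rank/orthogonality computation of the same kind.

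\emph{Part (b): certificate lifting.} I would construct torus certificates recursively, gluing in the style of Golay--Rudin--Shapiro. Splitting off four bits, $f_{m+4}$ restricted to the new bits has the form of a quadratic phase function in those bits, coupled to the old register only through linear characters. This uses the splitting of \cite{bvvdn2014}. Given a certificate $A_m$ for $f_m$, I would form $A_{m+4}(z) = \sum_{y\in\{0,1\}^4} \omega_y\, z^y A_m(\text{twisted } z)$ with weights chosen so that the sup-norm grows by at most a factor $\kappa$ while the pairing grows by a factor $\lambda$. The gain per four bits is then $\lambda/\kappa$. We need $(\lambda/\kappa)^2 = 4(4-2\sqrt2) = 8(2-\sqrt2)$, compared with the cut's $4$. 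The base case is the cut certificate of (a). The factor $4-2\sqrt2 = (\sqrt2-1)\cdot 2\sqrt2\cdots$ suggests optimizing a two-by-two unimodular gluing whose sup norm is $\sqrt{2+\sqrt2}$-type. The main obstacle is the sup-norm bound for the glued polynomial. Unlike the Rudin--Shapiro case, the inner certificates are evaluated at different twisted points, so the parallelogram law does not apply directly. I would try to bound it by pairing the $y$-terms into Rudin--Shapiro couples whose moduli satisfy $|P|^2+|Q|^2 \le 2$ pointwise, and then optimize the mixing angle, which should produce $2-\sqrt2$. Taking $\frac1m\log_2$ of the bound in (b) then gives the $\liminf$.
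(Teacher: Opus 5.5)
Your value $\Gmin(f_m)=2^{m-1}$ and your low/high cut computation agree with the paper. Two pieces of (a) are missing. First, the claim that no single cut does better needs the inert top bit, $(k+2^{m-1})^2\equiv k^2 \pmod{2^m}$. With it, every $G_{A|B}$ is the $(m-1)$-qubit matrix with rows or columns duplicated, so the cut bound reduces to the $(m-1)$-qubit one, which the rank ceiling caps at $2^{\lfloor (m-1)/2\rfloor}$. The rank ceiling applied on $m$ qubits gives only $2^{\lfloor m/2\rfloor}$, which is too weak for even $m$.

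Second, and more seriously, your route to $\Gamma_{\mathbb{T}}(f_m)\le 2^{m-2}$ does not reach that bound:
\begin{itemize}
\item dropping the top bit and sampling the remaining $(m-1)$-bit chirp with one angle (resonant sampler or discrete-Fourier bound) gives $2^{m-1}$;
\item splitting off $CZ_{0,m-2}$ gives $2\cdot 2^{m-2}$;
\item even the paper's peeling step followed by the one-parameter value $\Gmin(f_{m-2})=2^{m-3}$ gives only $2^{m-1}$.
\end{itemize}
The paper collects all pairs $(0,j)$ into the controlled one-local phase $e^{2\pi i x_0k'/2^{m-1}}=\tfrac12(1+Z_0)+\tfrac12(1-Z_0)V$. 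This is four product terms of weight $\tfrac12$, and removing it leaves exactly $f_{m-2}$ on bits $1,\dots,m-2$, so $\Gamma_{\mathbb{T}}(f_m)\le 4\,\Gamma_{\mathbb{T}}(f_{m-2})$. That recursion grows at the same rate as $\Gmin$. The saved factor $2$ therefore comes only from the torus base cases $\Gamma_{\mathbb{T}}(f_2)=1$ ($f_2=i^{x_0}$ is one-local) and $\Gamma_{\mathbb{T}}(f_3)\le 2$ ($f_3$ is a $CZ$ up to one-local phases). You have to recurse all the way down to collect it.

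For (b) you have reverse-engineered the constants, $\sqrt{2+\sqrt2}=2\cos\frac{\pi}{8}$ and $4(4-2\sqrt2)$, but not the construction. The missing idea is which four bits to split off. The paper takes the two lowest bits $k_L$ as slot bits, and bits $m-3,m-2$, just below the inert top bit, as glue bits. Write $k=k_L+4c+2^{m-3}y+2^{m-1}x_{m-1}$ and take $m\ge 6$, so the $2^{2m-6}$ term vanishes modulo $2^m$. The target restricted to the low $m-1$ bits is then $\omega^{k_L^2}\,\zeta_{k_L}^{x_Y}\,e^{i\sigma_{k_L}\cdot x_C}\,h_{m-4}(x_C)$, where
\begin{itemize}
\item $\omega=e^{2\pi i/2^m}$ and $\zeta_k=(i^k,(-1)^k)$ is read against the glue bits;
\item $\sigma_k$ is one-local;
\item $h_{m-4}$ is the restriction of $f_{m-4}$ to its low $m-5$ bits.
\end{itemize}
The glue bits couple only to the slot bits, never to the middle block, and the twist depends on $k_L$ alone.

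The lift is $A^\sharp=\sum_{k<4}z_L^{x(k)}\omega^{-k^2}Q_k(z_Y)A(e^{-i\sigma_k}\odot z_C)$, with $Q_k$ the two-bit Fourier projectors satisfying $Q_k(\zeta_k)=1$; it pairs to $4\sqrt{\Gamma_{\mathbb{T}}(f_{m-4})}$. The twisted evaluation points you flag as the main obstacle are harmless. Each twisted copy has modulus at most one everywhere on the torus, so $|A^\sharp|\le\sum_k|Q_k(z_Y)|$. The whole estimate is then the two-variable computation $\sup_{\mathbb{T}^2}\sum_k|Q_k|=\bigl(2+\sup_\psi(|\sin 2\psi|+|\cos 2\psi|)\bigr)^{1/2}=2\cos\frac{\pi}{8}$.

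Two cautions about your sketch. Expanded into monomials, this $A^\sharp$ is a $16$-term sum of your form with every coefficient of modulus $\tfrac14$, so a term-by-term bound gives no growth at all. The terms must be grouped by twist, and each group is a polynomial in the glue variables alone. Also, Rudin--Shapiro pairs with $|P|^2+|Q|^2\le 2$ encode the $\ell^2$ identity $\sum_k|Q_k|^2=1$. By Cauchy--Schwarz that yields only $\sum_k|Q_k|\le 2$, exactly the cut rate. The gain comes from the strict $\ell^1$ inequality, since the four $|Q_k|$ are never all equal at one point of $\mathbb{T}^2$.
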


\noindent Part (a): below, across the low-bits/high-bits cut the phase is a flat-spectrum Fourier matrix up to diagonal unitaries, and no cut certifies more because the top bit is inert; above, peeling off bit~$0$ costs a factor $2$ in $\sqrt{\Gamma_{\mathbb{T}}}$ and leaves $f_{m-2}$; the one-parameter value $\Gmin(f_m) = 2^{m-1}$ is Theorem~\ref{main:B} at the rung $q = 2^{m-1}$, since $n = 2^m - 1 \ge 2q-2$ (Appendix~\ref{app:proofs_separation}). The small cases are exact: $\Gamma_{\mathbb{T}}(f_m) = 1,\ 2,\ 2+\sqrt2$ at $m = 2, 3, 4$ \cite[Rem.~H.2]{v1}. Part (b) \emph{lifts}
a certificate: from a certificate for $f_{m-4}$ it builds one for $f_m$, by gluing four twisted copies
of the smaller certificate, one per \emph{slot}, through two \emph{glue bits} (the outer bits $m-3$ and
$m-2$) with \emph{readout polynomials} $Q_k$ in those bits' torus variables. In one line: one glue bit reproduces the cut-bound rate exactly; two glue bits read out jointly open a strict gap between the $\ell^2$ identity the readout polynomials satisfy and the $\ell^1$ price a certificate pays; and iterating that lift gives the exponent $0.557\ldots$

\begin{proof}[Proof sketch of (b); the full proof is in Appendix~\ref{app:proofs_separation}]
The top bit is inert. Splitting $k = k_L + 4c + 2^{m-3}y + 2^{m-1}x_{m-1}$, with $k_L$ and $y$ carried by two low and two outer bits and $c$ by the block in between, writes the target, for each fixed $k_L$, as a copy of $f_{m-4}$ on the middle block, twisted by a one-local rotation and multiplied by the character $\zeta_{k_L}$ of $y$, where $\zeta_k = (i^k, (-1)^k) \in \mathbb{T}^2$. An optimal certificate for $f_{m-4}$ therefore lifts to one for $f_m$ that glues the four twisted copies with readout polynomials $Q_k$ in the outer variables, here the two-bit Fourier projectors $Q_k(w) = \frac14\prod_{t=0,1}\bigl(1 + e^{-2\pi ik2^t/4}w_t\bigr)$, which satisfy $Q_k(\zeta_k) = 1$. The pairing grows by $\sum_kQ_k(\zeta_k) = 4$ and the sup-norm by at most $\sup_{\mathbb{T}^2}\sum_k|Q_k| = 2\cos\frac\pi8$, so $\sqrt{\Gamma_{\mathbb{T}}(f_m)} \ge 2\sec\frac\pi8\,\sqrt{\Gamma_{\mathbb{T}}(f_{m-4})}$, and $4\sec^2\frac\pi8 = 4(4-2\sqrt2)$. Iterating down to $m \le 5$ and finishing with the cut bound gives the statement.
\end{proof}

No cut bound can do this (Figure~\ref{fig:separation}): Lemma~\ref{lem:cut} never certifies more than $2^{\lfloor m/2\rfloor}$, and the
squaring phase, whose top bit is inert, reaches that ceiling on its other $m-1$ qubits. Part (b) goes
past it: the sampling cost of $f_m$ is not accounted for by operator entanglement across any single
bipartition. What the proof uses instead is the scale structure of the phase, four twisted copies of
$f_{m-4}$ glued by a two-bit Fourier transform. The splitting is the defining property of a quadratic
phase function \cite[eq.~(7)]{bvvdn2014}, the reduction step of the prime-power recursion for quadratic
Gauss sums kept as an identity instead of summed; the gluing has the form of the Golay--Rudin--Shapiro
recursion \cite{shapiro1951, rudin1959} and its paraunitary generalization \cite{ppt2004}, and the lift
is in spirit the product-witness argument behind the multiplicativity of extents on tensor products
\cite[Prop.~1]{bravyi2019}, applied to a target that is a \emph{sum} of twisted copies rather than a
product.

The gain comes from where the lift parts ways with the complementary-sequence recursion it resembles. The two-bit Fourier projectors satisfy the paraunitary
identity $\sum_k|Q_k(w)|^2 = 1$ on $\mathbb{T}^2$, so by Cauchy--Schwarz $\sum_k|Q_k| \le 2$, which would
return exactly the cut rate; equality needs all four $|Q_k(w)|$ to be equal at one point, and no point of
$\mathbb{T}^2$ has that, so $\sup\sum_k|Q_k| = 2\cos\frac\pi8 < 2$. Concretely, with
$r(w) := \frac12(|1+w| + |1-w|)$ the one-glue-bit sum, $|Q_0| + |Q_2| = \frac12|1+w_1|\,r(w_0)$ and
$|Q_1| + |Q_3| = \frac12|1-w_1|\,r(-iw_0)$, so $\sup_{w_1}\sum_k|Q_k| = (r(w_0)^2 + r(-iw_0)^2)^{1/2}$: it
could reach $2$ only where both copies of $r$ reach $\sqrt2$, and they do so a quarter turn apart, at $w_0 = \pm i$ and at $w_0 = \pm1$. Complementary sets are built for the $\ell^2$
identity; a dual certificate pays in $\ell^1$, and the lift gains exactly from the strict gap between the
two norms. Same gluing, different norm, and the difference is the whole separation. A lift through one
glue bit grows at exactly the cut bound's rate; it takes two glue bits read out jointly to get past it.

\begin{remark}[how far lifting can go]\label{rem:liftcap}
Splitting off $j$ pairs of outer bits and optimizing the readout polynomials raises the gain per $2j$
bits, but for every choice
\[
\Bigl|\sum_kQ_k(\zeta_k)\Bigr| \;\le\; \sqrt2\,(\pi/2)^{j-1}\,\sup\sum_k|Q_k|
\]
(Appendix~\ref{app:proofs_separation}), so no lift of this kind certifies more than
$\Gamma_{\mathbb{T}}(f_m) \gtrsim 2^{m\log_2(\pi/2)} = 2^{0.65\,m}$. The measured exponent below is
beyond it: a lift builds every slot from rotations of one certificate, and going further needs a joint
constraint on different certificates \cite[Prop.~H.1, Open Problem~M.3]{v1}.
\end{remark}

\begin{figure}[t]\centering
\includegraphics[width=0.52\textwidth]{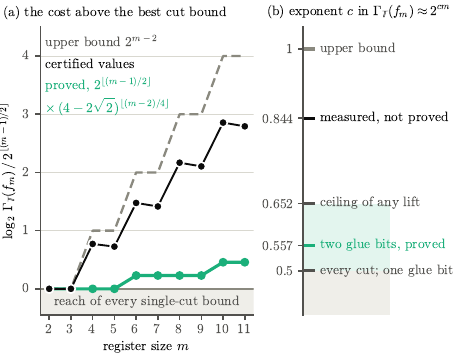}
\caption{The register cost against every cut bound. (a)~$\Gamma_{\mathbb{T}}(f_m)$ divided by $2^{\lfloor (m-1)/2\rfloor}$, the most any single-cut bound certifies for this target: certified values
through $m = 11$ and the bounds of Theorem~\ref{main:register}. (b)~The growth exponents; the shaded
band is the reach of lifting (Remark~\ref{rem:liftcap}).}
\label{fig:separation}
\end{figure}

\begin{conjecture}[the exponent]\label{conj:exponent}
The limit $\kappa := \lim_m \Gamma_{\mathbb{T}}(f_m)^{1/m}$ exists and satisfies $\kappa < 2$: the
$\Theta(2^m)$ plateau of the one-parameter theory does not survive on the torus.
\end{conjecture}

\noindent By Theorem~\ref{main:register}(b) any limit is at least $2^{0.557}$, strictly above the cut bound's $\sqrt2$; the
conjecture is about the rest. Two-sided brackets of relative width $10^{-5}$ through $m = 9$ and
certified values through $m = 11$, obtained from an identity that writes the level-$m$ dual as a
two-certificate problem at level $m-2$ \cite[Prop.~H.1]{v1}, give two-step ratios
$\sqrt{\Gamma_{m}/\Gamma_{m-2}}$ that converge geometrically within each parity class, to $1.7955$ and
$1.7953$: extrapolated, $\kappa = 2^{0.844 \pm 0.001}$ (Figure~\ref{fig:separation}). These are
measurements, not a proof.

The separation is between implementation resource sets, not complexity classes: the squaring phase is a
normalizer quadratic phase gate and classically simulable \cite{bvvdn2014}, and what is priced is the
exchange rate of \S\ref{sec:intro}. Nor is it operator entanglement in disguise, since it exceeds every
cut. For the two-qubit phases behind its small cases the cut bound is tight,
$\Gamma_{\mathbb{T}}(CP(\varphi)) = 1 + \sin(\varphi/2)$, which is Piveteau's regularized
quasi-probability extent under local operations \cite{piveteau2025thesis} recovered in a different
model; the single-shot value with classical communication is $2\Gamma_{\mathbb{T}} - 1$ \cite{ps2024},
$3$ for $CZ$ against our $2$ \cite[Rem.~H.2]{v1}.

\section{What the theory prices}\label{sec:pricing}
\label{sec:anchorconsequences}

Read left to right, the law takes an angle and returns its cost: in the optimization reading the angle
is a design variable, and the objects to price are the constructions practitioners run. Read right to
left, the physics fixes the angle, and the cost describes the state the evolution produces.

\subsection{The Fourier-based LCU}\label{sec:pricing_lcu}

The equispaced $N$-atom construction of \cite{anchor} --- one rotation angle per collective level,
weights from an inverse DFT of the data, $\Gamma \le n+1$ (Lemma~\ref{lem:dft}) --- is run by its
authors inside penalty-based QAOA for the densest-$k$-subgraph problem, where each QAOA layer contains
one penalty phase $e^{-i\gamma K^2}$: exact statevector simulations on a $12$-node random $3$-regular
graph with $k = 4$, and a $106$-qubit hardware run \cite[Sec.~VI]{anchor}.

\paragraph{Where it is optimal.} Its guarantee $\Gamma \le n+1$ is tight up to a constant factor at a
typical angle: at the median over angles the optimum is about $0.6(n+1)$, in certified computation at
$n = 8, 16, 32, 64$ and $128$ (the median lies in $[0.60, 0.62]$ times $n+1$ at every one of them), and
at least $0.12(n+1)$ by Proposition~\ref{prop:typical}. Both are statements about their \emph{sampler},
not only about decompositions: by Proposition~\ref{prop:domination}, $\Gmin$ is the least constant with
which any sampling distribution over the rotations meets their guarantee $\tilde p_x \ge p_x/\Gamma$ for
every downstream circuit. The remaining slack is in the construction's actual cost rather than in its
bound: at $n = 32$ the equispaced measure costs a median $1.36$ times the optimum, and on the $12$-node
instance of \cite{anchor} the optimum saves about $9\%$ at QAOA depth one, $\Gmin \in [10.3, 10.5]$
against the DFT's $11.5$, and more than a third at depth two, $[89.9, 93.0]$ against $146.6$, where the
overheads of the two penalty layers multiply (certified brackets; \cite[\S7.1, Rem.~7.1]{v1}). The
construction has one virtue the optimality theory does not supply: it is a single closed-form formula
that works for every diagonal collective target, with no arithmetic input and no per-angle
optimization, and at a typical angle no decomposition improves on it by more than that constant.

\paragraph{At the resonances: an $n$-independent recipe.} At $\gamma = \pi p/q$ the theory identifies
what the construction does not use: a closed-form sampler over $q$ equally spaced angles with explicit
phases (Theorem~\ref{main:B}), at overhead exactly $q$ for every $n \ge 2q-2$, against the guarantee
$n+1$. The sampler itself is not new: it is the fractional-revival decomposition of Kerr optics
\cite{bb1968, ys1986, mtk1990, tac1993, wg1996}, used for Kerr gates in \cite[Lemma~4]{uqc2026}. What is
new is that it carries over to the penalty layer, where it is the optimal decomposition and the unique
one. In the optimization reading one may choose the penalty angle on a rung, for instance by absorbing
it into the penalty weight, and pay $q$ at every register size; whether that restriction costs solution
quality is untested. A decomposition need only reproduce the penalty's moments on the levels where the
state entering the layer has weight: by the free recentring of Lemma~\ref{lem:free}, matching any
$2w+1$ consecutive levels costs exactly $\Gmin(\gamma, 2w)$, wherever the window sits, and against such
an approximate decomposition snapping to a rung buys an exact unitary and a closed-form sampler, not a
better scaling. For permutation-invariant unitaries that are not functions of $K$, such as the XY mixer expanded in \cite[App.~C]{anchor} at cost $O(n^3)$, the rational structure has no counterpart.

\subsection{Kerr branch expansions in bosonic simulation}\label{sec:pricing_kerr}

Two recent classical simulators of bosonic circuits \cite{uqc2026, gha2026} handle a Kerr gate
$e^{-i\gamma\hat n^2}$ by expanding it into phase-rotated copies, which act on a coherent state by
rotating its amplitude, and the number and weight of those branches is this paper's cost, gate by gate.
Theorem~\ref{main:C}(iv) settles the exact case; the same accounting applies to the Bose--Hubbard on-site term $e^{-iUt\,\hat n(\hat n-1)/2}$ ($\gamma = Ut/2$) and to the charging term of a superconducting island, whose linear parts are free by Lemma~\ref{lem:free} \cite[\S6.1]{v1}. At rational parameters Upreti, Quesada and Chabaud use
the $q$-term fractional-revival decomposition, which they take from \cite{tac1993, wg1996}, and bound
its $\ell_1$ norm by $\sqrt q$ \cite[Lemma~4, Thm.~16]{uqc2026}; among decompositions into phase
shifters that decomposition is optimal, so the bound is tight, and it is the unique optimum. At
irrational parameters no finite-cost decomposition over phase shifters exists at all, so the
approximation such simulators make is unavoidable; near low-denominator rationals a rational
approximation of the parameter is essentially optimal (Theorem~\ref{main:D}), and at badly approximable
parameters the truncated cost is $\Theta(n)$ whatever one does. How to approximate is then the windowed
problem above, on photon numbers: matching the Kerr phases only on the photon-number window of the
incoming state, recentred by Lemma~\ref{lem:free}, costs at most the window length, which grows like
$\sqrt{\bar n\log(1/\eta)}$ for mean photon number $\bar n$ and state error $\eta = \|\tilde\psi - e^{-i\gamma\hat n^2}\psi\|$, the norm distance between the approximate and the exact output state. Recentring the
construction of Guseynov, Holmes and Angrisani \cite[eqs.~(11)--(16)]{gha2026} on the state's window in
this way cuts the branches per Kerr gate by a factor growing like $\sqrt{\bar n}$, more than $10\times$ at
$\bar n = 10^4$ \cite[App.~K]{v1}, and the saving compounds over the Kerr layers of a circuit. Two
boundaries: with coherent or Gaussian inputs and linear optics every branch is classically samplable, so
this prices classical simulation, not a quantum resource; and the statement concerns decompositions
over phase shifters.

\subsection{One-axis twisting, read right to left}\label{sec:oat}

One-axis twisting is the workhorse nonlinearity of spin metrology: an ensemble of $n$ two-level atoms
prepared in a coherent spin state and evolved under $H = \chi J_z^2$ squeezes its spin noise at short
times \cite{ku1993}, and at longer times fractures into a superposition of a few coherent states, the
multicomponent cats of Bia\l{}ynicka-Birula \cite{bb1968}, Yurke and Stoler \cite{ys1986} and
Miranowicz, Tana\'s and Kielich \cite{mtk1990}, with the number of components set by the arithmetic of
the elapsed phase. The gate at time $t$ \emph{is} the chirp with $\gamma = \chi t$: on weight-$k$ states
$J_z$ has eigenvalue $n/2 - k$, so $\chi t\,J_z^2 = \chi t\,k^2$ plus linear and constant terms that are
free by Lemma~\ref{lem:free}, and global $z$-rotations are the resource set $S$. Here the physics fixes
the angle, the interaction implements the unitary directly, and the law says at which couplings the
cost is small --- those are the couplings at which cats form. A coherent spin state has weight on every
level, so by Proposition~\ref{prop:domination} the cost of the unitary is also the exact sampling
constant of the state it produces: the minimal signed weight of a rotation decomposition is a measure
of the twisted state. Three claims follow, each a corollary of \S\S\ref{sec:resonance}--\ref{sec:enlarge} under this change of names. (1)~At rational coupling $\chi t = \pi p/q$ the $q$-component decomposition of Theorem~\ref{main:B}, applied to a coherent state, is the multicomponent cat of \cite{bb1968, ys1986, mtk1990} (at $q = 2$ the Yurke--Stoler cat of \S\ref{sec:resonance}, up to the sign convention of $\chi$), and what is new is the converse: for $n \ge 2q-2$ it is \TV-optimal among all complex measures over rotations and the unique optimum; for smaller registers it is strictly beatable, so at small excitation cutoff the revival structure is \emph{not} the optimal representation. (2)~At badly approximable $\chi t/\pi$, Theorem~\ref{main:C} upgrades ``no revivals'' to: no cheap signed-mixture representation
over rotations exists at all, and by Theorem~\ref{main:E} this holds over arbitrary independent
single-spin $z$-rotations, with every quasi-probability decomposition over one ancilla-controlled layer
costing $\Theta(n)$ (Proposition~\ref{prop:channel}). (3)~The Kerr phase of a bosonic mode has no exactness threshold, since an optical coherent state
has weight on every photon number and Theorem~\ref{main:C}(iv) applies at every amplitude. This prices decomposition cost, not metrological performance: nothing here claims an improvement in squeezing, phase sensitivity or preparation fidelity. Thus the rational revival couplings at which finite-component cats form are precisely the resonant low-cost points of the rotation-decomposition problem, and the continued fraction of $\chi t/\pi$ decides which couplings those are \cite[\S10, App.~L]{v1}.

\section{Discussion}\label{sec:discussion}

\paragraph{The exchange rate, answered.} Section~\ref{sec:intro} asked how many shots buy back the $\Theta(n^2)$ entangling gates of the quadratic phase. The answer is the arithmetic of the angle, case by case in Table~\ref{tab:classification}. Under the ancilla-free guarantee these are the shot factors; under unbiased quasi-probability estimation they are squared.

\begin{table}[ht]\centering
\caption{The classification at a glance; $c_A$ and $q^*$ are the constants of Theorems~\ref{main:C} and~\ref{main:D}. Every entry is proved except those marked computed or measured.}
\label{tab:classification}
{\footnotesize\setlength{\tabcolsep}{5pt} \begin{tabular}{@{}p{3.6cm}p{5.3cm}p{3.7cm}l@{}}
\toprule
Angle $\gamma/\pi$ & Cost $\Gmin(\gamma, n)$ & Optimizer & Where \\
\midrule
rational $p/q$, $n \ge 2q-2$ & exactly $q$ & uniform $q$-point sampler, unique & Thm~\ref{main:B} \\
rational $p/q$, $n < n_0$, where $q \le n_0 \le 2q-2$ & $< q$, nondecreasing to $q$ at $n_0$ & computed;
$n_0 \approx 3q/2$ for the chirp & Section~\ref{sec:resonance} \\
badly approximable & $\Theta(n)$: $c_A(n+2) \le \Gmin \le n+1$ & DFT measure, optimal up to a constant
& Thm~\ref{main:C} \\
unbounded partial quotients & unbounded, but $\liminf_n \Gmin/n = 0$ & follows the convergents & Thm~\ref{main:C} \\
almost every angle & $n$ up to logarithmic factors, with dips along infinitely many scales & DFT measure, up to those factors & Thm~\ref{main:C}(v) \\
within $1/n^2$ of a rung with $q^* \le (n+2)/2$ & $\Theta(q^*)$, constants near $1$ & --- & Thm~\ref{main:D} \\
typical angle at fixed $n$ & $\ge 0.12(n+1)$ proved; $\approx 0.6(n+1)$ computed & --- & Prop.~\ref{prop:typical}; \S\ref{sec:pricing_lcu} \\
register squaring phase on $m$ qubits, any single-qubit gates & $2^{cm}$, $0.557 \le c \le 1$;
$c \approx 0.84$ measured & --- & Thm~\ref{main:register} \\
\bottomrule
\end{tabular}}
\end{table}

\paragraph{Scope.} For the fixed resource family $S = \{e^{i\theta K}\}$, the framework prices every diagonal unitary that is a function of $K$, and only those: any
other diagonal unitary admits no LCU over $S$ at any cost. The arithmetic needs quadratic $f$ and the
commutative case. The same duality prices other profiles: a single marked level, in the Hadamard frame
Grover's diffusion operator over collective $X$-rotations, costs below $9$ \cite[Prop.~G.2]{v1}; higher-degree
chirps keep the periodicity and lose the flatness \cite[App.~J]{v1}. For symmetric targets the lower bounds hold over one layer of arbitrary single-qubit gates (Theorem~\ref{main:E}); ancillas, feedforward and non-diagonal layers remain
open, and sampling overhead and tensor-network cost are not proxies for one another \cite{ggdp2026}.

\needspace{6\baselineskip}\paragraph{Open problems.}
\begin{enumerate}[leftmargin=2em, nosep]
\item Does the master formula hold at every angle, on the shoulders where it predicts linear growth with
detuning? \cite[Open Problem~M.1]{v1}
\item What do ancillas, feedforward and non-diagonal layers buy? Depth alone does not: two sampled rotation layers on either side of a fixed unitary that couples every level to every other cost exactly the product, and from three layers on the chirp arithmetic is gone; coherent control does, since two ancilla-controlled layers implement the $q = 2$ rung at cost one (\S\ref{sec:setup}). \cite[Open Problem~M.5]{v1}
\item From what register size is the $q$-angle sampler exactly optimal? The threshold lies in $[q, 2q-2]$,
computed near $3q/2$ for the chirp; what decides it? \cite[Open Problem~M.2]{v1}
\item The squaring phase on $m$ qubits costs $2^{cm}$ over single-qubit gates, $0.557 \le c \le 1$ proved and
$c \approx 0.84$ measured. What is $c$? \cite[Open Problem~M.3]{v1}
\item Does the distribution of $\Gmin/(n+1)$ at a uniformly random angle, with median about $0.6$ at every
$n$ computed, converge as $n \to \infty$? \cite[Open Problem~M.4]{v1}
\end{enumerate}

\section*{Acknowledgments}
We thank S.~Woerner and J.~Jakubik for helpful discussions.
This work was supported by the National Aeronautics and Space Administration under Award
No.~80MSFC25M0084. Z.~Wang is supported by NASA funding through cooperative agreement
80NSSC24M0035.

\begin{appendices}
\section{Proofs for Section~\ref{sec:setup}: the cost is a norm}\label{app:operational}

This appendix proves Propositions~\ref{prop:domination} and~\ref{prop:channel}.

\begin{proof}[Proof of Proposition~\ref{prop:domination}]
Ranging over $W$ and $x$, the condition is
\[|\langle\varphi|U\psi\rangle|^2 \;\le\; \Gamma\int|\langle\varphi|V(\theta)\psi\rangle|^2\,\mathrm{d}\nu(\theta)\]
for every vector $\varphi$, that is,
\[
U\psi\psi^\dagger U^\dagger \;\preceq\; \Gamma\int V(\theta)\psi\psi^\dagger V(\theta)^\dagger\,\mathrm{d}\nu(\theta).
\]
The dictionary $\mathcal V = \{e^{i\alpha}V(\theta)\psi\}$ is compact and closed under phases, its
mixtures are exactly the convex hull of $\{V(\theta)\psi\psi^\dagger V(\theta)^\dagger\}$ (Carath\'eodory),
and $U\psi = \sum_k u_k\psi_k$ lies in its span; so by \cite[Thm.~10]{regula2018} the least $\Gamma$
is $\min\{\|c\|_1^2 : U\psi = \sum_j c_j V(\theta_j)\psi\}$, for every $\nu$, discrete or not. Since
$V(\theta)\psi = \sum_k e^{ik\theta}\psi_k$ with the $\psi_k = P_k\psi$ orthogonal and all nonzero,
$U\psi = \sum_j c_jV(\theta_j)\psi$ holds iff $u_k = \sum_j c_je^{ik\theta_j}$ for every $k$, which is
\eqref{eq:moments}; by Theorem~\ref{main:A} the minimum is $\Gmin(U)$.

In our coordinates both inequalities are two lines, which we record because the witness is the
certificate of Theorem~\ref{main:A}. ($\le$) Take an optimal measure $\sum_j c_j\delta_{\theta_j}$
and $\nu_j = |c_j|/\|c\|_1$; then $\langle\varphi|U\psi\rangle = \sum_j c_j\langle\varphi|V(\theta_j)\psi\rangle$ and
Cauchy--Schwarz gives $|\sum_jc_ja_j|^2 \le \|c\|_1\sum_j|c_j||a_j|^2 = \|c\|_1^2\sum_j\nu_j|a_j|^2$ ---
the guarantee of \cite{chakraborty} at $\Gamma = \Gmin$, for complex coefficients as well as real.
($\ge$) The rank-one witness in the proof of \cite[Thm.~10]{regula2018} is here
$\varphi = \sum_k \bar a_k\,\psi_k/\|\psi_k\|^2$ with $a$ an optimal certificate: $\langle\varphi|V(\theta)\psi\rangle
= \sum_k a_ke^{ik\theta}$ has modulus at most $1$, while $|\langle\varphi|U\psi\rangle| = |\sum_k a_ku_k| = \sqrt{\Gmin}$,
so $\Gamma \ge |\langle\varphi|U\psi\rangle|^2/\sup_\theta|\langle\varphi|V(\theta)\psi\rangle|^2 \ge \Gmin$ for every $\nu$.
\end{proof}

\noindent In resource-theoretic terms, $\Gmin(U)$ is
one plus the generalized robustness of $U\psi$ with respect to the rotation orbit of any
full-support $\psi$, and $\sqrt{\Gmin}$ is Regula's atomic gauge $\Gamma_{\mathcal V}$ of that orbit
\cite{regula2018}; the operator-level Definition~\ref{def:cost} is the input-independent form.

\paragraph{Elementary estimators.} An elementary estimator consists of an ancilla register
$\mathcal A$ of any finite dimension (including one) in any state $\tau$, one collective rotation whose angle
is coherently selected by $\mathcal A$, $W = \sum_k P_k\otimes D^k$ with $D$ a unitary on $\mathcal A$, a
measurement of $\mathcal A$, and a real weight of modulus at most one applied to the measured value of the
system observable; write $-I \le E \le I$ for the weighted POVM. Its action is the block multiplier
$P_k\rho P_l \mapsto \mathrm{Tr}[ED^k\tau D^{-l}]\,P_k\rho P_l$. For the $\alpha$-th estimator of a decomposition we write $m^{(\alpha)}_{kl}$ for this multiplier.

\begin{proof}[Proof of Proposition~\ref{prop:channel}]
For the lower bound, let $A$ be an optimal certificate of Theorem~\ref{main:A} and put
$A(D) := \sum_ka_kD^k$. Then
\[
\textstyle\sum_{k,l}a_k\bar a_l\,\mathrm{Tr}[ED^k\tau D^{-l}] \;=\; \mathrm{Tr}[E\,A(D)\tau A(D)^\dagger]
\]
has modulus at most $\|A(D)\|^2 \le \|A\|_\infty^2 \le 1$ by the spectral theorem, while the same pairing of the
multiplier $u_k\bar u_l$ of $U\rho U^\dagger$ gives $|\sum_ka_ku_k|^2 = \Gmin$. Pairing the identity
$u_k\bar u_l = \sum_\alpha q_\alpha m^{(\alpha)}_{kl}$ with $a\bar a$ and bounding each term in absolute
value gives $\Gmin \le \sum_\alpha|q_\alpha|$. Upper bound: PREP--SELECT--PREP$^\dagger$ on an optimal
$\mu = \sum_jc_j\delta_{\theta_j}$ with the accept flag as $E$ has multiplier $u_k\bar u_l/\|c\|_1^2$,
a single term of weight $\|c\|_1^2 = \Gmin$. Ancilla-free: the multiplier of a signed mixture of
conjugations depends on $k-l$ only, and $u_k\bar u_l = g(k-l)$ forces a linear phase.
\end{proof}

\section{Proofs for Section~\ref{sec:resonance}: rational angles}\label{app:proofs_resonance}

\begin{proof}[Proof of uniqueness in Theorem~\ref{main:B}(ii)]
Write $s_j := \sqrt q\,\overline{\tilde c_j}$, which is unimodular, so that by the proof of
Theorem~\ref{main:B}(ii)
\[
P(\theta) \;=\; q^{-1}\textstyle\sum_j s_j\,e^{i(q-1)(\theta - \theta^*_j)}F_q(\theta - \theta^*_j).
\]
Off the atom coset every $F_q(\theta - \theta^*_j)$ is strictly positive
($F_q(\varphi) = 0$ exactly for $\varphi \in (2\pi/q)\mathbb{Z}\setminus 2\pi\mathbb{Z}$,
and $\theta - \theta^*_j$ lies in $(2\pi/q)\mathbb{Z}$ only at the atoms), so
$|P(\theta)| = 1$ would force equality in the triangle inequality, i.e.\ the $q$ phases
$s_j e^{-i(q-1)\theta^*_j} = s_j e^{2\pi i j/q} e^{-i\pi\sigma(q-1)/q}$ would have to
coincide: $s_j = C e^{-2\pi i j/q}$. Then
$\tilde c_j = \overline{C} e^{2\pi i j/q}/\sqrt q$ and
$u_k = \sum_j \tilde c_j e^{ik\theta^*_j}
= (\overline{C}/\sqrt q)\, e^{i\pi\sigma k/q} \sum_j e^{2\pi i j(k+1)/q}$,
which vanishes at $k = 0$ for $q \ge 2$ --- contradicting $|u_0| = 1$. Hence $|P| < 1$ off
the atoms. If $\nu$ is another minimizer then
$\int P \dd\nu = \sum_k a_k u_k = \sqrt q = \|\nu\|_{\TV}$ forces $\nu$ to be supported on
$\{|P| = 1\} = \{\theta^*_j\}$; a measure on $q$ known atoms is determined by its first $q$
moments, so $\nu = \mu^*$.
\end{proof}

\section{\texorpdfstring{Proofs for Section~\ref{sec:generic}: the dichotomy and the \mbox{master} formula}{Proofs for Section 4: the dichotomy and the master formula}}\label{app:proofs_master}

\begin{proof}[Proof of Corollary~\ref{cor:detunedUB}]
By \eqref{eq:submult} it suffices to bound $\Gmin(\delta, n)$. The space
$A(\mathbb{Z}_N)$ of functions on $\mathbb{Z}_N$ with norm
$\|u\|_A = \sum_j |\hat u(j)|$ is a Banach algebra under pointwise multiplication, and
the $N$-atom discrete-Fourier measure is feasible, so
$\sqrt{\Gmin(\delta, n)} \le \|e^{-i\delta f}\|_A$ for \emph{any} $f$ on $\mathbb{Z}_N$
agreeing with $k^2$ on $k = 0,\dots,n$ up to an affine term --- Lemma~\ref{lem:free} makes both
the constant and the linear phase free, so the constraint center may be chosen. Take the
\emph{centered} parabola $f(k) = (k - n/2)^2$, whose periodization is continuous. Removing the
mean and summing the exponential series, $\sqrt{\Gmin(\delta,n)} \le
\exp(|\delta|\,\|f\|_{A'})$ with $\|f\|_{A'} = \sum_{j \ne 0}|\hat f(j)|$. The second difference
of the centered $f$ on $\mathbb{Z}_N$ is $2$ except at $k = 0, n$, so
$\hat f(j) = (1 + e^{2\pi ij/N})/(4\sin^2(\pi j/N))$ and
$|\hat f(j)| = |\cos(\pi j/N)|/(2\sin^2(\pi j/N)) \le \tfrac12\csc^2(\pi j/N)$; with
$\sum_{j=1}^{N-1}\csc^2(\pi j/N) = (N^2-1)/3$,
\[
\|f\|_{A'} \;\le\; \frac{N^2 - 1}{6}. \qedhere
\]
\end{proof}

\begin{lemma}[Fej\'er-weighted detuning factor]\label{lem:fejerdetune}
For an integer $q \ge 1$ and real $\delta$ set
\[
\Psi_q(\delta) \;:=\; \frac1q\Big|\sum_{|m| < q}\Big(1 - \frac{|m|}{q}\Big)e^{-i\delta m^2}\Big|.
\]
Then $\Psi_q(0) = 1$,
$\Psi_q(\delta) \ge 1 - \delta^2\big(q^4/30 - q^2/12 + 1/20\big) \ge 1 - \delta^2 q^4/30$, and, centring the phase,
\[
\Psi_q(\delta) \;\ge\; 1 - \frac{\delta^2}{2}\Big(\frac{7q^4}{180} - \frac{q^2}{9} + \frac{13}{180}\Big) \;\ge\; 1 - \frac{7\,\delta^2q^4}{360}.
\]
\end{lemma}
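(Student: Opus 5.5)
The plan is to bound the modulus below by a real part and then use $\cos x \ge 1 - x^2/2$. Put $w_m := (1 - |m|/q)/q$ for $|m| < q$. These are the normalized Fej\'er weights: they are nonnegative, symmetric, and $\sum_m w_m = 1$, since $\sum_{|m|<q}(1-|m|/q) = q$, as already used in the pairing computation of Theorem~\ref{main:B}(ii). Then $\Psi_q(\delta) = |\sum_m w_m e^{-i\delta m^2}|$, and $\Psi_q(0) = 1$ is immediate. For any real phase $\phi$,
\[
\Psi_q(\delta) \;\ge\; \mathrm{Re}\Big(e^{i\phi}\sum_m w_m e^{-i\delta m^2}\Big) \;=\; \sum_m w_m \cos\bigl(\delta(m^2 - \phi/\delta)\bigr) \;\ge\; 1 - \frac{\delta^2}{2}\sum_m w_m\,(m^2 - c)^2,
\]
where $c = \phi/\delta$ is a free real shift; the case $\delta = 0$ is trivial. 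The whole proof is therefore a computation of the low moments $M_2 := \sum_m w_m m^2$ and $M_4 := \sum_m w_m m^4$ of the Fej\'er weights.

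First I will take $c = 0$, which gives $\Psi_q \ge 1 - \frac{\delta^2}{2}M_4$. I compute $M_4 = \frac{2}{q}\sum_{m=1}^{q-1}(1 - m/q)\,m^4$ from the Faulhaber sums $\sum m^4$ and $\sum m^5$. The target is
\[
M_4 = \frac{q^4}{15} - \frac{q^2}{6} + \frac{1}{10},
\]
and a sanity check at $q = 2$ gives $1/2$ on both sides. Halving $M_4$ yields the first stated bound $1 - \delta^2(q^4/30 - q^2/12 + 1/20)$. This is at least $1 - \delta^2 q^4/30$ because $q^2/12 \ge 1/20$ for every $q \ge 1$.

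Next I centre the phase with $c = M_2$. This minimizes the quadratic in $c$, and $\sum_m w_m (m^2 - M_2)^2 = M_4 - M_2^2$ is the variance of $m^2$ under the weights. Similarly, $M_2 = \frac{2}{q}\sum_{m=1}^{q-1}(1 - m/q)m^2 = (q^2-1)/6$; this equals $1/2$ at $q = 2$, consistent with the check above. Subtracting $M_2^2 = (q^2-1)^2/36$ from $M_4$ gives
\[
M_4 - M_2^2 \;=\; \frac{7q^4}{180} - \frac{q^2}{9} + \frac{13}{180},
\]
which is the centred bound. The last inequality $\ge 1 - 7\delta^2 q^4/360$ follows because $q^2/9 \ge 13/180$ for $q \ge 1$.

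There is no conceptual obstacle here; the only error-prone step is the polynomial bookkeeping in $M_4$. I will control it by evaluating the closed forms for $M_4$ and $M_2$ at $q = 1, 2, 3$ against direct summation; at $q = 1$ both vanish, as they should.
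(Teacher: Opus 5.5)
Your proposal is correct and follows the paper's proof essentially step for step. You bound the modulus below by the real part of the (possibly rotated) sum, apply $\cos x \ge 1 - x^2/2$, and compute the Fej\'er-weighted fourth moment $q^4/15 - q^2/6 + 1/10$ and the variance $7q^4/180 - q^2/9 + 13/180$ about the mean $(q^2-1)/6$ from Faulhaber sums; your closed forms and both final inequalities check out.
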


\begin{proof}
$|\Sigma| \ge \operatorname{Re}\Sigma = \sum_m w_m\cos(\delta m^2)$ with $w_m = 1-|m|/q \ge 0$
and $\sum_m w_m = q$. By $\cos x \ge 1 - x^2/2$, $\operatorname{Re}\Sigma \ge q - (\delta^2/2)T(q)$
with $T(q) = \sum_{|m|<q}(1-|m|/q)m^4$. Summing $\sum_{m<q} m^4$ and $\sum_{m<q} m^5$ in closed
form gives the exact identity $T(q) = q^5/15 - q^3/6 + q/10$ for every $q \ge 1$; divide by $q$,
and use $q^3/6 \ge q/10$ for the last form. For the centred form, rotate before taking the real part:
$|\Sigma| \ge \operatorname{Re}(e^{i\delta\mu}\Sigma) = \sum_m w_m\cos(\delta(m^2-\mu))$ with
$\mu = (q^2-1)/6$, the $w$-mean of $m^2$, since $\sum_m w_m m^2 = (q^3-q)/6$. The same bound on the cosine
now pays the variance $\frac1q\sum_m w_m(m^2-\mu)^2 = T(q)/q - \mu^2 = 7q^4/180 - q^2/9 + 13/180$ instead of
the fourth moment, and $13/180 - q^2/9 < 0$.
\end{proof}

\begin{proof}[Proof of Theorem~\ref{thm:detunedLB}]
Write $u_k = U_k b_k$ with $U_k = e^{-i\pi(p/q)k^2}$ the rung data and $b_k = e^{-i\delta k^2}$.
Take the certificate of Theorem~\ref{main:B}(ii) for the rung, rotated and rephased by the free
action of Lemma~\ref{lem:free}:
\[
a_k = q^{-1/2}w_k\,\overline{U_k}\,e^{-i(\alpha k + \beta)},\quad w_k = 1 - \tfrac{|k-(q-1)|}{q},
\quad 0 \le k \le 2q-2,
\]
zero otherwise, with $\alpha = -2\delta(q-1)$ and $\beta = \delta(q-1)^2$. Admissibility is
untouched: $p(\theta) = e^{-i\beta}P(\theta - \alpha)$ for $P$ the polynomial of
Theorem~\ref{main:B}(ii), so $\|p\|_\infty = \|P\|_\infty \le 1$; the support lies in
$\{0,\dots,2q-2\} \subseteq \{0,\dots,n\}$, which is the only use of $n \ge 2q-2$. For the pairing,
$\overline{U_k}U_k = 1$, and putting $k = (q-1)+m$ the phase collapses,
$-\alpha k - \beta - \delta k^2 = -\delta m^2$, so
$\sum_k a_k u_k = q^{-1/2}\sum_{|m|<q}(1-|m|/q)e^{-i\delta m^2}$, of modulus
$\sqrt q\,\Psi_q(\delta)$. Theorem~\ref{main:A} gives the claim, and
Lemma~\ref{lem:fejerdetune} the explicit form.
\end{proof}

\label{app:proofs_generic}

\begin{proof}[Proof of Theorem~\ref{thm:convfloor}]
$q_k \le (n+2)/2$ is $n \ge 2q_k - 2$, the hypothesis of Theorem~\ref{thm:detunedLB} at the rung
$p_k/q_k$ (convergents are reduced). That theorem and Lemma~\ref{lem:fejerdetune} give
$\sqrt{\Gmin} \ge \sqrt{q_k}\,\Psi_{q_k}(\delta_k) \ge \sqrt{q_k}(1 - 7\delta_k^2 q_k^4/360)$, and
$7\delta_k^2 q_k^4/360 = \beta t_k^2$. The standard bound $|x - p_k/q_k| < 1/(q_kq_{k+1})$ with
$q_{k+1} \ge a_{k+1}q_k$ gives $t_k < 1/a_{k+1} \le 1$, and $1 - \beta = 0.808 > 0$. For the size of
$q_k$: $q_{k+1} > (n+2)/2$ by maximality of $k$, and $q_{k+1} \le (a_{k+1}+1)q_k$.
\end{proof}

\begin{proof}[Proof of Theorem~\ref{main:C}(i), details]
``$\Leftarrow$'' is Theorem~\ref{main:C}(ii). For ``$\Rightarrow$'' the tool is
Corollary~\ref{cor:detunedUB} at the rung $p_k/q_k$, which carries no condition on $n$: at $n = n_k$,
$N \le \sqrt{q_kq_{k+1}}$ gives $(N^2-1)/3 < q_kq_{k+1}/3$, and $|\delta_k| < \pi/(q_kq_{k+1})$, so
$\Gmin(\gamma, n_k) \le e^{\pi/3}q_k \le e^{\pi/3}(n_k+2)/\sqrt{a_{k+1}}$, using
$n_k + 2 \ge \sqrt{q_kq_{k+1}} \ge q_k\sqrt{a_{k+1}}$. Along a subsequence with $a_{k+1} \to \infty$ the
ratio tends to $0$. (Rational $x$ has $\Gmin = q$ bounded and is not badly approximable in this
convention.)
\end{proof}

\begin{proof}[Proof of Theorem~\ref{main:C}(iii)]
For rational $\gamma/\pi$, Theorem~\ref{main:B}(i) gives $\Gmin \le q$ for every $n$ and
Theorem~\ref{main:B}(ii) gives equality for $n \ge 2q-2$. If $\gamma/\pi$ is irrational its
continued-fraction expansion is infinite, so the convergent denominators $q_k$ increase without
bound. For $n \ge 2$ let $k(n)$ be the largest index with $q_{k(n)} \le (n+2)/2$; then
Theorem~\ref{thm:convfloor} gives $\Gmin(\gamma, n) \ge 0.6530\,q_{k(n)}$. As $n \to \infty$,
$k(n) \to \infty$ and hence $q_{k(n)} \to \infty$, and with it the lower bound. \qedhere
\end{proof}

\begin{proof}[Proof of Theorem~\ref{main:C}(iv)]
The constraint sets shrink as $n$ grows, so $\Gmin(\gamma, n)$ is nondecreasing and at most
$\Gamma_\infty(\gamma)$. If $L := \sup_n\Gmin(\gamma, n)$ is finite, pick $\mu_n$ feasible at level $n$
with $\|\mu_n\|_{\TV} \le \sqrt L + 1/n$. Bounded sets of measures on the circle are weak-$*$
sequentially compact, since $C(\mathbb{T})$ is separable, so a subsequence converges weak-$*$ to some
$\mu$. Then $\hat\mu(k) = \lim_j\hat\mu_{n_j}(k) = e^{-i\gamma k^2}$ for every $k$, and
$\|\mu\|_{\TV} \le \sqrt L$ by lower semicontinuity of the norm; hence $\Gamma_\infty = L$.
Theorem~\ref{main:C}(iii) makes $L$ equal to $q$ at rational angles and infinite otherwise. A measure
of norm $\sqrt q$ matching every moment is optimal for the problem at $n = 2q-2$, so it is $\mu^*$ by
Theorem~\ref{main:B}(ii).
\end{proof}

\begin{proof}[Proof of Theorem~\ref{main:C}(v)]
By Borel--Bernstein, for a.e.\ $x$ one has $a_j \le j(\log j)^{1+\varepsilon}$ for $j \ge j_0(x)$;
since $q_j \ge \varphi^{j-1}$, the index $k(n)$ is $O(\log n)$, so $a^* = O(\log n(\log\log n)^{1+\varepsilon})$
and Theorem~\ref{thm:convfloor} gives the lower bound. For the upper, Borel--Bernstein with the
non-summable $\varphi(j) = j\log j$ gives $a_{k+1} \ge k\log k$ infinitely often; by L\'evy,
$\log n_k = kL(1+o(1))$ with $L = \pi^2/(12\log 2)$, so $a_{k+1} \ge (1-o(1))(\log n_k/L)\log\log n_k$,
and Theorem~\ref{main:C}(i)'s rate gives the claim, since $e^{\pi/3}\sqrt L < 3.11$.
\end{proof}

\begin{proof}[Proof of Theorem~\ref{main:D}]
The hypothesis $q^* \le (n+2)/2$ is $n \ge 2q^*-2$, so Theorem~\ref{thm:detunedLB} applies at the
window rung and Corollary~\ref{cor:detunedUB} bounds above; insert $|\delta| \le 1/n^2$ into
$\Psi_{q^*}$ and into $\exp(|\delta|(N^2-1)/3)$. For the form in $F$: the window rung's term is at most $2q^*$,
every rung with $q > q^*$ has a term of at least $q > q^*$, and a rung with $q < q^*$ lies outside the window, so
it is at distance at least $\pi/(qq^*) - 1/n^2$ from $\gamma$ and its term is at least $\pi n^2/q^* \ge n+1$;
hence $q^* \le F \le 2q^*$.
\end{proof}

\begin{proof}[Proof of Proposition~\ref{prop:typical}]
Put $M = (n+2)/2$, $Q = \lfloor M\rfloor$ and $\Phi(u) = u(1-\beta u^2)^2$, which is increasing on $[0,1]$ since
$5\beta < 1$. Discard the rational $x$, a null set.

\emph{Cells.} For $q \le Q$ let $E_q$ be the set of $x$ whose top convergent (the $p_k/q_k$ of
Theorem~\ref{thm:convfloor}) has denominator $q$. On the side of $p/q$ where the parent fraction has
denominator $q'$, the $x$ having $p/q$ as a convergent are $x = p/q \pm 1/(q(qy + q'))$ with $y > 1$, and the
next convergent denominator is $\lfloor y\rfloor q + q'$ \cite{khinchin}. So $p/q$ is the top convergent exactly on
an interval of half-width $1/(qm)$, with $m$ the integer in $(Q, Q+q]$ congruent to $q'$ modulo $q$, and there
$t := q^2|x - p/q| < q/m < q/M$. As $p$ runs over the reduced residues modulo $q$, so do both parents' $q'$,
hence
\[
|E_q| \;=\; \frac2q\sum_{\substack{Q<m\le Q+q\\ \gcd(m,q)=1}}\frac1m \;<\; \frac{2\varphi(q)}{qM} \;\le\; \frac2M .
\]

\emph{The upper bound on the share.} On $E_q$, Theorem~\ref{thm:convfloor} with the centred bound of
Lemma~\ref{lem:fejerdetune} gives $\Gmin \ge q(1 - \beta t^2)^2 \ge M\Phi(q/M)$. If $\Gmin < c(n+1)$ then
$\Phi(q/M) < c(n+1)/M$, so $q < uM$ with $\Phi(u) = c(n+1)/M$ (the bound is trivial if $c(n+1)/M \ge \Phi(1)$),
and the set has measure below $(2/M)\,uM = 2u \le 2c(n+1)/(M(1-\beta)^2) < 6.13\,c$. For the median take
$c = \Phi(1/4)/2 = 0.122$: then $c(n+1)/M < 2c = \Phi(1/4)$, so $u < 1/4$ and the measure is below $1/2$. For
$6 \le n \le 6000$ the bound $0.19$ sums the exact $|E_q|$ over the $q$ with $\Phi(q/M) < c(n+1)/M$, in
floating point with a margin of $10^{-9}$. For $n > 6000$, $|E_q| < 2\varphi(q)/(qM)$ and
$\sum_{q \le X}\varphi(q)/q \le 6X/\pi^2 + \ln X + 7/3$ (write $\varphi(q)/q = \sum_{d \mid q}\mu(d)/d$, and use
$\sum_{d > X}d^{-2} < 1/(\lfloor X\rfloor + \frac12)$ and $\sum_{d \le X}1/d \le 1 + \ln X$) bound the measure by
$12u/\pi^2 + 4(\ln(u(n+2)/2) + 7/3)/(n+2)$. At $u = 0.406$ this is below $1/2$ for $n = 6000$, and it decreases
in $n$, so the median exceeds $\Phi(0.406)/2 > 0.19$.

\emph{The lower bound on the share.} Put $X = (n+1)/f$ and $N = n+1$. For $x$ within $r_q = 3\ln(X/q)/(\pi(N^2-1))$ of a reduced
$p/q$ with $q \le X$, Corollary~\ref{cor:detunedUB} with $\delta = \pi(x - p/q)$ gives $\Gmin \le q\,e^{\ln(X/q)} = X$.
Distinct such fractions are at distance at least $1/(qq')$, and
$(r_q + r_{q'})qq' \le 3X^2(2/e)/(\pi(N^2-1)) \le 0.94/f^2 < 1$ by $y\ln(1/y) \le 1/e$, so the windows are
disjoint, of total measure
\[
\frac{6}{\pi(N^2-1)}\sum_{q\le X}\varphi(q)\ln\frac Xq \;\ge\; \frac{6}{\pi f^2}\,\inf_{X \ge 2}\,\frac{1}{X^2}\sum_{q\le X}\varphi(q)\ln\frac Xq .
\]
The infimum is at least $0.15196$: it was computed over $2 \le X \le 10^6$ (on each $[m, m+1)$ the function has at
most one critical point, a maximum, so its infimum there is attained at an integer), and beyond $10^6$ it
follows from $\sum_{q \le t}\varphi(q) \ge 3t^2/\pi^2 - 2t - t\ln t$, which gives the bound
$3/(2\pi^2) - (\ln X + 1)/X$. Hence the measure is at least $0.290/f^2$.
\end{proof}

\section{Proofs for Section~\ref{sec:enlarge}: the enlarged dictionaries}\label{app:proofs_enlarge}

\begin{proof}[Proof of Theorem~\ref{main:E}]
$\le$ is inclusion of resource sets. For $\ge$: duality over $C(\mathbb{T}^n)$ reads, as
in Theorem~\ref{main:A},
$\sqrt{\Gamma_{\mathbb{T}}} = \max |\sum_x a_x u_{|x|}|$ over multiaffine
polynomials $A(z) = \sum_x a_x \prod_{j \in x} z_j$ with $\|A\|_{\mathbb{T}^n} \le 1$
(the constrained moments are exactly the frequencies $x \in \{0,1\}^n$). Averaging $a$
over coordinate permutations fixes the pairing (the data are symmetric) and does not
increase the sup-norm, so we may take $a_x = \alpha_{|x|}$, i.e.\
$A = \sum_k \alpha_k e_k(z)$ with $e_k$ the elementary symmetric polynomials. $A$ is
symmetric and multiaffine, and the closed unit disc is a convex circular region, so the
Grace--Walsh--Szeg\H{o} coincidence theorem \cite{rahmanschmeisser} applies: for every
$z \in \mathbb{T}^n \subset \overline{\mathbb{D}}^n$ there is a single
$w \in \overline{\mathbb{D}}$ with $A(z) = A(w, \ldots, w) = p(w)$, where
$p(w) = \sum_k \alpha_k \binom{n}{k} w^k$. Hence
$\|A\|_{\mathbb{T}^n} \le \sup_{\overline{\mathbb{D}}} |p| =
\|p\|_{\partial\mathbb{D}}$ by the maximum principle, with $\ge$ from restriction to the
diagonal: the two norms are \emph{equal}. The pairing is
$\sum_k \binom{n}{k} \alpha_k u_k = \sum_k b_k u_k$ with $b_k$ the
coefficients of $p$ --- so the torus dual is the circle dual of
Theorem~\ref{main:A}, variable for variable. Both values are attained; they coincide.
\end{proof}

\begin{proof}[Proof of Theorem~\ref{main:F}]
Using sheet $c$ alone requires $\hat\mu_c(k)\,e^{ick^2} = e^{-i\gamma k^2}$, so the cost of sheet $c$ is $\Gmin(\gamma + c)$ by definition, the sheet's phase factoring
out of every atom, giving $\Gamma^{D(\ell)} \le \min_c \Gmin(\gamma+c)$. Write $q = 2^a q'$ with $q'$ odd.
The sheet angles are $(\gamma + c)/\pi = p/q + j/2^\ell$ with denominator $Q_j$. If $a \le \ell$,
then $p/q + j/2^\ell = (p\,2^{\ell-a} + jq')/(2^\ell q')$, and since $q'$ is odd the choice
$j \equiv -p\,2^{\ell-a}(q')^{-1} \bmod 2^\ell$ makes $2^\ell$ divide the numerator, while
$\gcd(\cdot, q') = \gcd(p, q') = 1$; hence $\min_j Q_j = q'= q/2^{v_2(q)}$. If $a > \ell$ then
$p$ is odd and $j\,2^{a-\ell}q'$ is even, so the numerator $p + j\,2^{a-\ell}q'$ is odd and
coprime to $q'$, giving $Q_j = q$ for every $j$. Theorem~\ref{main:B}(i) then supplies
the value on the best sheet. For the converse the dual of the enlarged dictionary requires one coefficient vector $a$
with $\|p_c\|_\infty \le 1$ on \emph{every} sheet, $p_c(\theta) = \sum_k a_k e^{ick^2}
e^{ik\theta}$, and pairing $\sum_k a_k u_k = \sqrt{Q_{\min}}$. Two cases.

If $v_2(q) > \ell$, every sheet rate is again a rung of denominator exactly $q$: writing
$q = 2^a q'$ with $q'$ odd and $a > \ell$, the rate $\gamma + c = \pi(p + j2^{a-\ell}q')/(2^aq')$
has odd numerator, coprime to $q'$. So for $a_k = q^{-1/2}w_k\overline{u_k}$ the Fej\'er
certificate of Theorem~\ref{main:B}(ii), $p_c$ is \emph{verbatim} that theorem's certificate
for the pair $(P_c, q)$ --- same denominator, hence same window length --- and
$\|p_c\|_\infty \le 1$ holds on all $2^\ell$ sheets at once with the pairing unchanged at
$\sqrt q$. Nothing needs lengthening.

If $v_2(q) \le \ell$, put $s = \ceil{\ell/2}$, $L = 2^s$, and \emph{thin} the certificate onto
$L\mathbb{Z}$. Since $c(Ll)^2 = \pi j 2^{2s-\ell}l^2$ is $\equiv 0$ for $\ell$ odd and
$\equiv \pi j l$ for $\ell$ even (using $l^2 \equiv l \bmod 2$), every sheet acts on the
sublattice by a \emph{free linear phase}, so a certificate supported on $L\mathbb{Z}$
satisfies $p_c(\theta) = \tilde P(L\theta + \varepsilon_c)$ for a single $\tilde P$ and
$2^\ell$ constants $\varepsilon_c$: all $2^\ell$ dual constraints collapse to one. Take
$\tilde P$ to be the certificate of Theorem~\ref{main:B}(ii) for the decimated rate
$2^{2s}\gamma = \pi P/Q_{\min}$ (the denominator is exactly $Q_{\min}$ because
$2s \ge \ell \ge v_2(q)$),
\[
a_{Ll} = \frac{1}{\sqrt{Q_{\min}}}\Bigl(1 - \frac{|l-(Q_{\min}-1)|}{Q_{\min}}\Bigr)
e^{i\pi P l^2/Q_{\min}}, \quad l = 0,\ldots,2Q_{\min}-2, \qquad a_k = 0 \text{ otherwise};
\]
then $\|p_c\|_\infty = \|\tilde P\|_\infty \le 1$ for every sheet and
$\sum_k a_k u_k = \sqrt{Q_{\min}}$, since $u_{Ll} = e^{-i(2^{2s}\gamma)l^2}$. The stated
degrees are the supports of the two certificates.
\end{proof}

\section{Proofs for Section~\ref{sec:separation}: the separation}\label{app:proofs_separation}

\begin{proof}[Proof of Theorem~\ref{main:register}(a)]
\emph{Lower bound.} Cut the register into its $a$ low and $b = m-a$ high bits, $k = k_A + 2^ak_B$. Then
$k^2 = k_A^2 + 2^{a+1}k_Ak_B + 4^ak_B^2$, so $G = D_AFD_B$ with unitary diagonal $D_A, D_B$, which do not
change singular values, and $F[k_A, k_B] = e^{2\pi i k_Ak_B/2^{b-1}}$. $F$ depends on $k_A$ only modulo
$2^{b-1}$; when $a \ge b-1$ its rows run over whole periods and
$F = \mathbf{1}_{2^{a-b+1}}\otimes \mathrm{DFT}_{2^{b-1}}\otimes \mathbf{1}_2^{\top}$ (an all-ones column,
the unnormalised discrete Fourier matrix, an all-ones row), whose $2^{b-1}$ nonzero singular values all
equal $2^{(a+1)/2}$. So $\|F\|_{\mathrm{tr}} = 2^{b-1}\,2^{(a+1)/2}$, Lemma~\ref{lem:cut} gives
$\Gamma_{\mathbb{T}}(f_m) \ge 2^{b-1}$, and $b = \lfloor (m+1)/2\rfloor$ is the largest admissible choice.
\emph{Upper bound.} Modulo $2^m$, $k^2 = \sum_{2j<m}4^jx_j + \sum_{i<j,\ i+j\le m-2}2^{i+j+1}x_ix_j$: bit
$m-1$ enters only linearly, and the pairs $(0,j)$ contribute the controlled one-local phase
$e^{2\pi i x_0k'/2^{m-1}}$, $k' = \sum_{j\ge1}2^jx_j$, which equals $\tfrac12(1+Z_0) + \tfrac12(1-Z_0)V$
with $V$ one-local: four product terms of weight $\tfrac12$. The remaining pairs are those of $f_{m-2}$ on
bits $1, \ldots, m-2$, so $\sqrt{\Gamma_{\mathbb{T}}(f_m)} \le 2\sqrt{\Gamma_{\mathbb{T}}(f_{m-2})}$, and
$\sqrt{\Gamma_{\mathbb{T}}(f_2)} = 1$, $\sqrt{\Gamma_{\mathbb{T}}(f_3)} = \sqrt2$ seed both parities.
The one-parameter value $2^{m-1}$ is Theorem~\ref{main:B}(ii) at the rung $1/2^{m-1}$, since
$n = 2^m - 1 \ge 2q - 2$; the exact small cases are \cite[Rem.~H.2]{v1}.

\noindent The recursion starts from $\Gamma_{\mathbb{T}}(f_2) = 1$, since $k^2 \equiv x_0 \pmod 4$ makes $f_2 = i^{x_0}$ a one-local phase, and from $\Gamma_{\mathbb{T}}(f_3) \le 2$, since $k^2 \equiv x_0 + 4x_1 + 4x_0x_1 \pmod 8$ makes $f_3$ a $CZ$ up to one-local phases and $e^{i(\pi/4)Z_0Z_1} = (1 + iZ_0Z_1)/\sqrt2$. That no single cut certifies more than the lower bound follows from the inert top bit: $f_m(x)$ does not depend on $x_{m-1}$, so for every bipartition of the $m$ qubits the reshaped matrix $G_{A|B}$ is the $(m-1)$-qubit matrix with its rows or its columns duplicated, whose trace norm is $\sqrt2$ times that of the $(m-1)$-qubit matrix; against the normalization $2^{m/2} = \sqrt2\cdot 2^{(m-1)/2}$ of Lemma~\ref{lem:cut} this is exactly the $(m-1)$-qubit cut bound, which the ceiling stated after Lemma~\ref{lem:cut} caps at $2^{\lfloor (m-1)/2\rfloor}$, the value the low-bits/high-bits cut attains.
\end{proof}

\begin{proof}[Proof of Theorem~\ref{main:register}(b)]
\emph{The top bit is free.} For $m \ge 2$, $(k+2^{m-1})^2 \equiv k^2 \pmod{2^m}$, so $f_m$ does not
depend on $x_{m-1}$. Restricting a decomposition to $x_{m-1} = 0$, and extending one by the angle $0$
on that qubit, gives $\Gamma_{\mathbb{T}}(f_m) = \Gamma_{\mathbb{T}}(h_m)$ for the restriction $h_m$ of
$f_m$ to the $m-1$ low bits.

\emph{Splitting.} Let $m \ge 6$ and $M = m-4$. Write $k = k_L + 4c + 2^{m-3}y + 2^{m-1}x_{m-1}$, with
$k_L, y \in \{0, \ldots, 3\}$ carried by bits $0, 1$ and $m-3, m-2$, and $c$ by the $M-1$ bits in
between. With $u = k_L + 4c$ and $v = y + 4x_{m-1}$, $k^2 = u^2 + 2^{m-2}uv + 2^{2m-6}v^2$; the last term
vanishes modulo $2^m$ because $m \ge 6$, and $uv \equiv k_Ly \pmod 4$. Hence, modulo $1$,
\[
\frac{k^2}{2^m} \;\equiv\; \frac{k_L^2}{2^m} + \frac{k_Ly}{4} + \frac{2k_Lc}{2^{m-2}} + \frac{c^2}{2^{M}},
\qquad
h_m(x) = \omega^{k_L^2}\,\zeta_{k_L}^{\,x_Y}\,e^{i\sigma_{k_L}\cdot x_C}\,h_M(x_C),
\]
with $\omega = e^{2\pi i/2^m}$, $\zeta_k = (i^k, (-1)^k) \in \mathbb{T}^2$ read against the two bits of
$y$, and $\sigma_k \cdot x_C = 2\pi\cdot 2kc/2^{m-2}$ a one-local phase on the middle block. For fixed
$k_L$, the target is $h_M$ twisted by a one-local rotation and multiplied by a character of $y$.

\emph{Lifting.} Let $A$ be an optimal certificate for $h_M$: multiaffine in the $M-1$ middle variables
$z_C$, with $\|A\|_{\mathbb{T}} \le 1$ and pairing $\sqrt{\Gamma_{\mathbb{T}}(f_M)}$. With the two-bit
Fourier projectors $Q_k(w) = \frac14\prod_{t=0,1}\bigl(1 + e^{-2\pi i k2^t/4}w_t\bigr)$, which satisfy
$Q_k(\zeta_k) = 1$, put
\[
A^\sharp(z_L, z_C, z_Y) \;:=\; \sum_{k<4} z_L^{x(k)}\,\omega^{-k^2}\,Q_k(z_Y)\,A\bigl(e^{-i\sigma_k}\odot z_C\bigr),
\]
with $x(k)$ the bit vector of $k$ and $e^{-i\sigma_k}\odot z_C$ the coordinatewise twist. Each summand
is a product of multiaffine factors in disjoint variables, so $A^\sharp$ is multiaffine, and on the torus
$|A^\sharp| \le \sum_k |Q_k(z_Y)|$. Its coefficient at $(x_L, x_C, x_Y)$ is
$\omega^{-k_L^2}\,q_{k_L}(x_Y)\,e^{-i\sigma_{k_L}\cdot x_C}\,a(x_C)$, and against the splitting the phases
$\omega^{\pm k_L^2}$ and $e^{\pm i\sigma_{k_L}\cdot x_C}$ cancel:
$\langle A^\sharp, h_m\rangle = \sum_k Q_k(\zeta_k)\,\langle A, h_M\rangle = 4\sqrt{\Gamma_{\mathbb{T}}(f_M)}$.

\emph{The gadget.} On $w = (e^{i\theta_0}, e^{i\theta_1})$, with $\psi = \theta_0/2$ and
$|1+e^{i\phi}|/2 = |\cos\frac\phi2|$, one finds
$\sum_k |Q_k(w)| = |\cos\tfrac{\theta_1}{2}|\,X + |\sin\tfrac{\theta_1}{2}|\,Y$ with
$X = |\cos\psi| + |\sin\psi|$ and $Y = |\cos(\psi - \frac\pi4)| + |\sin(\psi - \frac\pi4)|$. The
supremum over $\theta_1$ is $(X^2+Y^2)^{1/2}$, and
$X^2 + Y^2 = 2 + |\sin2\psi| + |\cos2\psi| \le 2 + \sqrt2 = 4\cos^2\frac\pi8$. So
$A^\sharp/(2\cos\frac\pi8)$ is admissible, and
\[
\sqrt{\Gamma_{\mathbb{T}}(f_m)} \;\ge\; 2\sec\tfrac{\pi}{8}\,\sqrt{\Gamma_{\mathbb{T}}(f_{m-4})},
\qquad 4\sec^2\tfrac{\pi}{8} \;=\; 8(2-\sqrt2) \;=\; 4(4-2\sqrt2).
\]

\emph{Iteration.} Apply this $t = \lfloor (m-2)/4\rfloor$ times, down to $M' = m-4t \in \{2, \ldots, 5\}$,
and finish with the cut bound $\Gamma_{\mathbb{T}}(f_{M'}) \ge 2^{\lfloor (M'-1)/2\rfloor}$ of
Theorem~\ref{main:register}(a). Since $2t + \lfloor (M'-1)/2\rfloor = \lfloor (m-1)/2\rfloor$, this is the
stated bound.
\end{proof}

\begin{proof}[Proof of the ceiling in Remark~\ref{rem:liftcap}]
Each readout value is an integral, $Q_k(\zeta_k) = \int Q_k\,d\nu^{(k)}$, against a product of measures on
the circle that represent evaluation of affine functions: density $\frac\pi2 e^{-i(\theta-a)/2}$ on
$(a-\pi, a+\pi)$ for the point $e^{ia}$, and $\frac{1\mp i}{2}\delta_{i} + \frac{1\pm i}{2}\delta_{-i}$ for
$\pm1$. (These are dual extremal kernels for point evaluation on the span of $1$ and $z$, cf.\ Rogosinski
and Shapiro \cite{rogosinskishapiro1953}; the mass $\pi/2 = \|1+z\|_\infty/\|1+z\|_{L^1}$ is the least
possible for a representing measure dominated by Lebesgue measure, as testing on $1+z$ shows.) All $2^j$
of these products are dominated by one measure of mass $\sqrt2\,(\pi/2)^{j-1}$, so
$|\sum_kQ_k(\zeta_k)| \le \sqrt2\,(\pi/2)^{j-1}\sup\sum_k|Q_k|$.
\end{proof}

\end{appendices}

\begingroup\raggedright

\endgroup

\end{document}